\newcommand{\exampletext}[1]{\lstinline{#1}}
\def\buyer{{\exampletext{buyer}}}
\def\seller{{\exampletext{seller}}}
\def\bank{{\exampletext{bank}}}
\def\pay{{\exampletext{pay}}}
\def\AdaptRule{Up}
\def\NoAdaptRule{NoUp}
\def\NoAdaptLabel{\func{no-up}}
\def\net{\mathcal{N}}
\def\event{\func{events}}
\def\devent{\func{events}}
\def\rules{\mathbf{I}}
\def\APOC{\mbox{$\func{DPOC}$}}
\def\AIOC{\mbox{$\func{DIOC}$}}
\def \mid {\; | \;}
\newcommand{\one}{\mathbf{1}}
\newcommand{\zero}{\mathbf{0}}
\newcommand{\tick}{\surd}
\newcommand{\ruleSpace}{\vspace{0em}}
\def \mathax #1#2{\ruleSpace\begin{array}{l}{\mbox{\scriptsize \ruleName{#1}} } \\[-.1cm] #2
\end{array}}
\def \mathrule #1#2#3{\ruleSpace\begin{array}{l}%
    {\mbox{\scriptsize \ruleName{#1}}}
    \\ \bigfract{#2}{#3}
\end{array}}
\newcommand{\bigfract}[2]{\frac{^{\textstyle #1}}{_{\textstyle #2}}}
\newcommand{\state}{\Sigma}
\newcommand{\substLabel}[3]{[#1 / #2,#3]}
\newcommand{\substAPOC}[2]{[#1 / #2]}
\newcommand{\comm}[5]{#1 : #2(#3) \rightarrow #4(#5)}
\newcommand{\sign}[3]{#1 : #2 \rightarrow #3}
\newcommand{\commLabel}[5]{#1 : #2(#3) \rightarrow #4(#5)}
\newcommand{\co}[2]{\overline{#1} \at {#2}}
\newcommand{\cout}[3]{#1 : #2\; \code{to}\; {#3}}
\newcommand{\coutLabel}[3]{\overline{#1}\langle#2\rangle \at {#3}}
\newcommand{\ci}[2]{#1 \at #2}
\newcommand{\cinp}[3]{#1 : #2 \;\code{from}\; #3}
\newcommand{\cinpLabel}[4]{#1(#2 \leftarrow #3) \at #4 }
\DeclareMathOperator{\upd}{\func{upd}}
\DeclareMathOperator{\prop}{\func{prop}}
\DeclareMathOperator{\ssim}{\func{sim}}
\DeclareMathOperator{\gram}{::=}
\DeclareMathOperator{\proj}{\func{proj}}
\DeclareMathOperator{\freshKey}{\func{freshIndex}}
\newcommand{\sset}[2]{\left\{~#1  \left |
                    \begin{array}{l}#2\end{array} \right.     \right\}}
\newcommand{\ruleName}[1]{[{\sc #1}]}
\newcommand{\rulebox}[1]{\fcolorbox{gray}{white}{#1}}
\newcommand{\arule}[3]{#3}
\newcommand{\cond}{\mathcal{C}}
\NewDocumentCommand\scope{mmmmg}{\code{scope} \ \at #2 \ \{ #3 \} %
}
\NewDocumentCommand\pscope{mmmmmg}{#1: \code{scope} \ \at #3 \ \{ #4 \} %
\IfNoValueTF{#6}{}{\ \code{roles} \ \{ #6 \}}}
\newcommand{\psscope}[4]{#1: \code{scope} \ \at #3 \ \{ #4 \}}
\def\grass#1{[\![#1]\!]}
\def\ev{\varepsilon}
\def\leqapoc{\leq_{\APOC}}
\def\leqaioc{\leq_{\AIOC}}
\def\prod{\Pi}
\newcommand{\code}[1]{\mbox{\normalfont{\footnotesize{\textbf{\texttt{#1}}}}}}
\newcommand{\func}[1]{\mbox{\normalfont{\footnotesize{\textsf{#1}}}}}
\newcommand{\seqOp}{;}
\newcommand{\parOpI}{|}
\newcommand{\parOpP}{|}
\newcommand{\freshops}[2]{#1 \cdot #2} 
\newcommand{\assign}[3]{#1 \at #2 = #3 }
\newcommand{\at}{\code{@}}
\newcommand{\pair}[2]{#1 \rightarrow #2}
\newcommand{\adapt}[4]{\code{scope} \; \at #4 \; \{ #1 \} %
}
\newcommand{\ifthen}[3]{\code{if} \; #1  \; \{ #2 \} \; \code{else} \; \{ #3 
\}}
\newcommand{\ifthenKey}[4]{#4 : \code{if} \; #1  \; \{ #2 \} \; \code{else} \; \{ #3 
\}}
\newcommand{\while}[2]{\code{while} \; #1 \;  \{ #2 \}}
\newcommand{\whileKey}[3]{#3 : \code{while} \; #1 \;  \{ #2 \}}
\newcommand{\arro}[1]{\xrightarrow[]{#1}}
\newcommand{\tuple}[1]{\left\langle #1\right\rangle}
\DeclareMathOperator{\transI}{\func{transI}}
\DeclareMathOperator{\transF}{\func{transF}}
\DeclareMathOperator{\operations}{\func{sig}}
\DeclareMathOperator{\roles}{\func{roles}}
\DeclareMathOperator{\bisim}{\sim}
\newcommand{\role}[2]{(#1)_{#2}}
\definecolor{color:keyword}{rgb}{0.53,0.05,0.05}
\definecolor{color:comment}{rgb}{0.25,0.37,0.75}
\definecolor{color:string}{rgb}{0.87,0.0,0.0}
\lstdefinelanguage{MyChor}{
morekeywords={
  while, if, else, scope, true, false, not, getInput, and, prop, from, to, via, on, do, and, 
roles, null
},
sensitive=true,
morecomment=[l]{//},
morecomment=[s]{/*}{*/},
morestring=[b]",
otherkeywords={;,|,:,@,!}
}
\rmfamily\color{color:comment},
\bfseries\color{color:keyword},
\newcommand{\setMyChor}{
\lstset{
  language=MyChor,
  basicstyle=\ttfamily\footnotesize
  }
}
\newcommand{\tool}{\func{AIOCJ}}
\newcommand{\ambient}{A}
\newcommand{\ambientN}{\rules}
\begin{document}

\title{Dynamic Choreographies\thanks{This work is partly supported by the MIUR FIRB project FACE 
(Formal Avenue for Chasing malwarE) RBFR13AJFT and by the Italian MIUR PRIN Project CINA Prot. 
2010LHT4KM.}}
\subtitle{Safe Runtime Updates of Distributed Applications\iftoggle{tech_report}{\\\vspace{1em}Technical Report}{}}

\author{Mila Dalla Preda\inst{1} \and Maurizio 
Gabbrielli\inst{2} \and Saverio 
Giallorenzo\inst{2} \and \\ Ivan Lanese\inst{2} \and Jacopo Mauro\inst{2} }
\institute{Department of Computer Science - Univ. of Verona
\and Department of Computer Science and Engineering -
 Univ. of Bologna / INRIA
}

\maketitle



\begin{abstract}
Programming distributed applications free from communication deadlocks and
races is complex. Preserving these properties when applications are updated at
runtime is even harder.

We present \AIOC{}, a language for programming distributed
applications that are free from deadlocks and races by construction. A \AIOC{}
program describes a whole distributed application as a unique entity
(choreography). \AIOC{} allows the programmer to specify which parts
of the application can be updated. At runtime, these parts may be
replaced by new \AIOC{} fragments from outside the
application. \AIOC{} programs are compiled, generating code for each
site, in a lower-level language called \APOC{}.  We formalise both
\AIOC{} and \APOC{} semantics as labelled transition systems and
prove the correctness of the compilation as a trace equivalence
result. As corollaries, \APOC{} applications are free from
communication deadlocks and races, even in presence of runtime updates.
 \end{abstract}


\section{Introduction}\label{sec:intro}

Programming distributed applications is an error-prone activity. Participants
send and receive messages and, if the application is badly programmed,
participants may get stuck waiting for messages that never arrive
(communication deadlock), or they may receive messages in an unexpected order,
depending on the speed of the other participants and of the network (races).

Recently, language-based approaches have been proposed to tackle the
complexity of programming concurrent and distributed
applications. Languages such as Rust~\cite{rust} or SCOOP~\cite{scoop}
provide higher-level primitives to program concurrent applications
which avoid by construction some of the risks of concurrent
programming. Indeed, in these settings most of the work needed to
ensure a correct behaviour is done by the language compiler and
runtime support. Using these languages requires a conceptual shift
from traditional ones, but reduces times and costs of development,
testing, and maintenance by avoiding some of the most common
programming errors.

Here, we propose an approach based on \emph{choreographic
  programming}~\cite{hondaESOPext,POPLmontesi,scribble,SEFM08}
following a similar philosophy, tailored for distributed
applications. In choreographic programming, a whole distributed
application is described as a unique entity, by specifying the
expected interactions and their order. For instance, a price request
from a buyer to a seller is written as
\lstinline{priceReq:}~\lstinline{buyer( b_prod )} $\rightarrow$
\lstinline{seller( s_prod )}. It specifies that the \lstinline{buyer}
sends along channel \lstinline{priceReq} the name of the desired
product \lstinline{b_prod} to the \lstinline{seller}, which stores it
in its local variable \lstinline{s_prod}.
Since in choreographic languages sends and receives are always paired,
the coupling of exactly one receive with each send and vice versa
makes communication deadlocks or races impossible to write. 
Given a choreography, a main challenge is to produce low-level
distributed code which correctly implements the desired behaviour.

We take this challenge one step forward: we consider \emph{updatable}
applications, whose code can change while the application is running,
dynamically integrating code from the outside. Such a feature, tricky in a
sequential setting and even more in a distributed one, has countless
uses: deal with emergency requirements, cope with rules and
requirements which depend on contextual properties, improve and
specialize the application to user preferences, and so on. We propose
a general mechanism, which consists in delimiting inside the
application blocks of code, called \emph{scopes}, that may be
dynamically replaced with new code, called \emph{update}.
The details of the behaviour of the updates do not need to be foreseen, updates may even be 
written 
while the application is running.

Runtime code replacement performed using languages not providing
dedicated support is extremely error-prone. For instance, considering
the price request example above, assume that we want to update the
system allowing the buyer to send to the seller also its fidelity card
ID to get access to some special offer. If the buyer is updated first
and it starts the interaction before the seller has been updated, the
seller is not expecting the card ID, which may be sent and lost, or
received later on, when some different message is expected, thus
breaking the correctness of the application.  Vice versa, if the
seller is updated first, (s)he will wait for the card ID, which the
buyer will not send, leading the application to a deadlock.
In our setting, the available updates may change at any time, posing an
additional challenge. Extra precautions are needed to ensure that all
the participants agree on which code is used for a given update.
For instance, in the example above, suppose that the buyer finds the update
that allows the sending of the card ID, and applies this update before the
seller does. If the update is no more available when the seller looks for it,
then the application ends up in an inconsistent state, where the update is
only partially applied, and the seller will receive an unexpected message
containing the card ID.

If both the original application and the updates are programmed using a
choreographic language, these problems cannot arise. In fact, at the
choreographic level, the update is applied atomically to all the involved
participants.  Again, the tricky part is to compile the choreographic code to
low-level distributed code ensuring correct behaviour. In particular, at
low-level, the different participants have to coordinate their updates
avoiding inconsistencies. The present paper proposes a solution to this
problem. In particular:

\begin{itemize}
\item we define a choreographic language, called \AIOC{}, to program
distributed applications and supporting code update (\S~\ref{sec:aioc});
\item we define a low-level language, called \APOC{}, based on standard
send and receive primitives (\S~\ref{sec:apoc});
\item we define a behaviour-preserving projection function compiling \AIOC{}s
into \APOC{}s (\S~\ref{sec:proj});
\item we give a formal proof of the correctness of the projection
  function (\S~\ref{sec:corr}). Correctness is guaranteed even in a scenario
  where the new code used for updates dynamically changes at any
  moment and without notice.
\end{itemize}

The contribution outlined above is essentially theoretical, but it has already
been applied in practice, resulting in \tool, 
an adaptation framework described in~\cite{SLE}.
The theoretical underpinning of \tool{} is a specific instantiation of the results presented here. 
Indeed, \tool{} further specifies how to manage the updates, e.g.,
how to decide when updates should be applied and
which ones to choose if many of them apply.
For more details on the implementation and
more examples we refer the interested reader to the
website~\cite{AIOCJ}.
Note that the user of \tool{} does not need to master all the 
technicalities we discuss here, since they are embedded within \tool{}. In particular,
\APOC{}s and the projection are automatically handled and hidden from the user.

Proofs, additional details, and examples are available in the companion technical report \cite{TR}.

\section{Dynamic Interaction-Oriented Choreography (\AIOC{})}\label{sec:aioc}
This section defines 
the syntax and semantics of the
\AIOC{} language.

 
The languages that we propose rely on a set $\mathit{Roles}$, ranged over by
$r,s,\dots$, whose elements identify the participants in the choreography.  
\iftoggle{tech_report}{
We
call them roles to highlight that they have a specific duty in the
choreography.
Each role owns its local resources.

}{}
Roles exchange messages over channels,
also called \emph{operations}:
\emph{public operations}, ranged over by $o$, and \emph{private
  operations}, ranged over by $o^*$. We use $o^?$ to range over both
public and private operations.
Public operations represent relevant communications inside the
application. We ensure that both the \AIOC{} and the
corresponding \APOC{} perform the same public operations, in the same
order. Vice versa, private communications are used when moving from the \AIOC{} level to the 
\APOC{} level, for synchronisation purposes.  We denote
with $\mathit{Expr}$ the set of expressions, ranged over by $e$. We
deliberately do not give a formal definition of expressions and of their
typing, since our results do not depend on it. We only require that
expressions include at least values, belonging to a set $\mathit{Val}$
ranged over by $v$, and variables, belonging to a set
$\mathit{Var}$ ranged over by $x,y,\dots$. 
We also assume a set of boolean expressions 
ranged over by $b$.

The syntax of \emph{\AIOC{} processes}, ranged over by ${\mathcal I}, {\mathcal I}', \ldots$, is 
defined as follows:
\[
\begin{array}{ll}
{\mathcal I} \gram &\comm{o^?}{r_1}{e}{r_2}{x} 
\mid {\mathcal I}\seqOp{\mathcal I}'\mid {\mathcal I}\parOpI{\mathcal I}'\mid \assign{x}{r}{e}
\mid \one \mid \zero \mid\\ 
& \ifthen{b \at r}{\mathcal{I}}{\mathcal{I}'} \mid \while{b \at r}{\mathcal{I}} \mid 
\scope{l}{r}{\mathcal{I}}{}
\end{array}
\]
Interaction $\comm{o^?}{r_1}{e}{r_2}{x}$ means that role $r_1$ sends a
message on operation $o^?$ to role $r_2$ (we require $r_1 \neq
r_2$). The sent value is obtained by evaluating expression $e$ in the
local state of $r_1$ and it is then stored in variable $x$ in $r_2$. 
Processes ${\mathcal
  I}\seqOp{\mathcal I}'$ and ${\mathcal I}\parOpI{\mathcal I}'$ denote sequential and parallel composition. 
Assignment $\assign{x}{r}{e}$ assigns the evaluation of
expression $e$ in the local state of $r$ to its local variable $x$.  
The empty process $\one$ defines a \AIOC{} that can only terminate. 
$\zero$ represents a terminated \AIOC{}. It is needed for the
definition of the operational semantics and it is not intended to be used by the programmer.
We call
\emph{initial} a \AIOC{} process where $\zero$ never occurs.
Conditional $\ifthen{b \at r}{\mathcal{I}}{\mathcal{I}'}$ and iteration  $\while{b \at r}{\mathcal{I}}$ are guarded by the evaluation of boolean expression $b$ in the local state of $r$. 
 The construct $\scope{l}{r}{\mathcal{I}}{\Delta}$ delimits a subterm $\mathcal{I}$ of the \AIOC{} 
process that may be updated in the future.
%
%
In
$\scope{l}{r}{\mathcal{I}}{\Delta}$, role $r$ coordinates the
updating procedure by interacting with
the other roles involved in
the scope.


\AIOC{} processes do not execute in isolation: they are equipped with a
\emph{global state} $\Sigma$ and a set of (available) updates $\rules$.
%
A global state $\Sigma$ is a map that defines the value $v$ of each variable $x$ in a given role $r$, namely $\Sigma: \mathit{Roles} \times \mathit{Var} \rightarrow
\mathit{Val}$. 
The local state of role $r$ is $\Sigma_r:\mathit{Var} \rightarrow \mathit{Val}$ and it
verifies $\forall x \in \mathit{Var}: \;
\Sigma(r,x) = \Sigma_r(x)$. Expressions are always evaluated by a
given role $r$: we denote the evaluation of expression $e$ in local state $\Sigma_r$ as
$\grass{e}_{\Sigma_r}$. We assume $\grass{e}_{\Sigma_r}$ is always
defined (e.g., an error value is
given as a result if evaluation is not possible) and that for each boolean expression $b$,
$\grass{b}_{\Sigma_r}$ is either $\tt{true}$ or $\tt{false}$.
$\rules$ denotes a set of updates, i.e., \AIOC{}s that may replace a scope.
$\rules$ may change at runtime.

Listing~\ref{es:bank} gives a realistic example of \AIOC{} process where a \buyer\, orders a
product from a \seller, paying via a \bank.
%
%
%
\iftoggle{tech_report}{\begin{figure}[tb]\begin{center}}{}
\begin{lstlisting}[mathescape=true,numbers=left,numbersep=1em,xleftmargin=2em,caption=\AIOC{} process for Buying Scenario.,label=es:bank]
price_ok@buyer = false; continue@buyer = true;
while ( !price_ok and continue )@buyer {
 b_prod@buyer = getInput();
 priceReq : buyer( b_prod ) $\rightarrow$ seller( s_prod );
 scope @seller {
   s_price@seller = getPrice( s_prod );
   offer : seller( s_price ) $\rightarrow$ buyer( b_price )
 };
 price_ok@buyer = getInput();
 if ( !price_ok )@buyer {
  continue@buyer = getInput()} };
if ( price_ok )@buyer {
 payReq : seller( payDesc( s_price ) ) $\rightarrow$ bank( desc );
 scope @bank {
   payment_ok@bank = true;
   pay : buyer( payAuth( b_price ) ) $\rightarrow$ bank( auth );
   ... // code for the payment
 };
 if ( payment_ok )@bank {
   confirm : bank( null ) $\rightarrow$ seller( _ ) |
   confirm : bank( null ) $\rightarrow$ buyer( _ )
 } else { abort : bank( null ) $\rightarrow$ buyer( _ ) } }
\end{lstlisting}
\iftoggle{tech_report}{\end{center}\end{figure}}{\vspace{-3em}}
%
%
Before starting the application by iteratively asking the price of
some goods to the \seller{}, the \buyer{} at Line 1 initializes
its local variables \lstinline{price_ok} and
\lstinline{continue}. Then, by using function \lstinline{getInput} (Line
3) (s)he reads from the local console the name of the product to buy and, at
Line 4, engages in a communication via operation \lstinline{priceReq}
with the \seller.
The \seller\, computes the
price of the product calling the function \lstinline{getPrice} (Line 6) and, via operation 
\lstinline{offer}, it sends the price to the 
\buyer{} (Line 7), that stores it in a
local variable \lstinline{b_price}. These last two operations are performed within a scope, 
allowing this code to be updated in the future to deal with changing business rules.
If the offer is accepted, the \seller\, sends to the \bank\, the
payment details (Line 13).  The \buyer\, then authorises the payment
via operation \pay. We omit the details of the local execution of the
payment at the \bank. Since the payment may be critical for security
reasons, the related communication is enclosed in a scope (Lines
14-18), thus allowing the introduction of a more refined procedure
later on. After the scope successfully terminates, the application
ends with the \bank\, acknowledging the payment to the \seller\, and
the \buyer\, in parallel (Lines 20-21). If the payment is not
successful, the failure is notified to the \buyer{} only.
Note that at Line 1, the annotation \lstinline{@buyer} means that the variables belong to the 
\buyer{}.  Similarly, at Line 2, the annotation \lstinline{@buyer} means that the guard of the 
while is 
evaluated by \lstinline{buyer}.
The term \lstinline{@seller} in Line 5 instead, being part of the scope construct, 
indicates the participant that coordinates the code update.

Assume now that the seller
direction decides to 
define new business
rules. For instance, the seller may distribute a fidelity card to
buyers, allowing them to get a 10\% discount on their
purchases. This business need can be
faced by adding the \AIOC{} below to the set of available updates, so that it can be used to 
replace the scope at Lines 5-8 in Listing \ref{es:bank}.

\iftoggle{tech_report}{\begin{figure}[tb]\begin{center}}{}
\begin{lstlisting}[mathescape=true,numbers=left,numbersep=1em,xleftmargin=2em,caption=Fidelity Card 
Update, label=rule_price_inquiry]
cardReq : seller( null ) $\rightarrow$ buyer( _ );
card_id@buyer = getInput();
cardRes : buyer( card_id ) $\rightarrow$ seller( buyer_id );
if isValid( buyer_id )@seller {
 s_price@seller = getPrice( s_prod ) * 0.9 
} else { s_price@seller = getPrice( s_prod ) };
offer : seller( s_price ) $\rightarrow$ buyer( b_price )
\end{lstlisting}
\iftoggle{tech_report}{\end{center}\end{figure}}{\vspace{-3em}}
When this code executes, the \seller\,
asks the card ID to the \buyer.  The \buyer\, inputs the ID, stores it
into the variable \texttt{card\_id} and sends this
information to the \seller.  If the card ID is valid then the
discount is applied, otherwise the standard price is computed.
%

\subsection{Connectedness} 
\label{sub:connectedness}

In order to prove our main result, we require the \AIOC{} code of the updates
and of the starting programs to satisfy a well-formedness syntactic condition
called \emph{connectedness}. This condition is composed by \emph{connectedness
for sequence} and \emph{connectedness for parallel}. 
Intuitively, connectedness for sequence ensures that the \APOC{}
network obtained by projecting a sequence ${\mathcal I}\seqOp{\mathcal
  I}'$ executes first the actions in ${\mathcal I}$ and then those in
${\mathcal I}'$, thus respecting the intended semantics of sequential
composition.
Connectedness for parallel prevents interferences between parallel 
interactions.
To formally define connectedness we introduce, in Table~\ref{table:tri},
the auxiliary functions $\transI$ and $\transF$
that, given a \AIOC{} process, compute sets of pairs representing senders
and receivers of possible initial and final interactions in its execution.
We represent one such
pair as $\pair{r_1}{r_2}$. Actions located at $r$ are represented as $\pair{r}{r}$. For instance,
given an interaction $\comm{o^?}{r_1}{e}{r_2}{x}$ both its $\transI$ and 
$\transF$ are $\{\pair{r_1}{r_2}\}$. For conditional, $\transI( 
\ifthen{b \at r}{\mathcal{I}}{\mathcal{I}'}) =  \{\pair{r}{r}\}$ since the first action 
executed is the evaluation of the guard by role $r$.
The set $\transF(\ifthen{b \at r}{\mathcal{I}}{\mathcal{I}'})$ is normally $\transF({\mathcal I}) 
\cup 
\transF({\mathcal I}')$, since the execution terminates with an action from one of the 
branches. If instead the branches are both empty then $\transF$ is 
$\{ \pair{r}{r} \}$, representing guard evaluation. 
\begin{table*}[t]
\centering
{\footnotesize
\begin{tabular}{ll}
$\transI(\comm{o^?}{r_1}{e}{r_2}{x}) = \transF(\comm{o^?}{r_1}{e}{r_2}{x}) = \{\pair{r_1}{r_2}\} $ 
\\
$\transI(\assign{x}{r}{e}) = \transF(\assign{x}{r}{e}) = \{\pair{r}{r}\}$ \\
$\transI(\one)=\transI(\zero)=\transF(\one)=\transF(\zero)=\emptyset$\\
$\transI({\mathcal I} \parOpI {\mathcal I}') = \transI({\mathcal I}) \cup \transI({\mathcal 
I}') \hfill \transF({\mathcal I} \parOpI {\mathcal I}') =  \transF({\mathcal I}) \cup 
\transF({\mathcal 
I}')$\\
$\transI({\mathcal I}\seqOp{\mathcal I}')  = 
\left \{ 
\begin{array}{ll}
\transI({\mathcal I}') & \mbox{ if } \transI({\mathcal I}) = \emptyset\\
\transI({\mathcal I}) & \mbox{ otherwise}\\
\end{array}
\right.
$ \hfill \quad
$\transF({\mathcal I}\seqOp{\mathcal I}')  = 
\left \{ 
\begin{array}{ll}
\transF({\mathcal I}) & \mbox{ if } \transF({\mathcal I}') = \emptyset\\
\transF({\mathcal I}') & \mbox{ otherwise}\\
\end{array}
\right.
$\\
$\transI( \ifthen{b \at r}{\mathcal{I}}{\mathcal{I}'}) =  \transI(\while{b \at r}{\mathcal{I}} ) =  
\{\pair{r}{r}\}$\\
$\transF( \ifthen{b \at r}{\mathcal{I}}{\mathcal{I}'}) = 
\left \{ 
\begin{array}{ll}
\{ \pair{r}{r} \} & \mbox{ if } \transF({\mathcal I}) \cup \transF({\mathcal I}') = \emptyset\\
\transF({\mathcal I}) \cup \transF({\mathcal I}') & \mbox{ otherwise}\\
\end{array}
\right.
$\\
$\transF(\while{b \at r}{\mathcal{I}} ) = \left \{ 
\begin{array}{ll}
\{ \pair{r}{r} \} & \mbox{ if } \transF({\mathcal I}) = \emptyset\\
\transF({\mathcal I}) & \mbox{ otherwise}\\
\end{array}
\right.$\\
$\transI(\scope{l}{r}{\mathcal I}{\Delta}) = \{ \pair{r}{r}\}$\\
$\transF(\scope{l}{r}{\mathcal I}{\Delta}) = 
\left \{ 
\begin{array}{ll}
\{ \pair{r}{r} \} & \mbox{ if } \roles({\mathcal I}) \subseteq \{ r \}\\
\bigcup_{r' \in 
\roles(\mathcal{I}) \smallsetminus \{r\}} \{ \pair{r'}{r} \} & \mbox{ otherwise}\\
\end{array}
\right.
$
\end{tabular}
}
\vspace{5pt}
\caption{Auxiliary functions $\transI$ and $\transF$.}\label{table:tri} 
\end{table*}

\iftoggle{tech_report}{
We assume 
a function $\roles({\mathcal I})$ that computes the 
roles of a \AIOC{} process ${\mathcal I}$ defined as follows:

\vspace{0.2cm}
\begin{tabular}{l}
$\roles(\comm{o^?}{r_1}{e}{r_2}{x}) =  \{r_1,r_2\}$ \\
$\roles(\one) = \roles(\zero) = \emptyset$ \\
$\roles(x \at r = e) = \{r\}$\\
$\roles({\mathcal I}\seqOp{\mathcal I}') = \roles({\mathcal I} | {\mathcal I}') = \roles({\mathcal I}) 
\cup \roles({\mathcal I}')$\\
$\roles(\ifthen{b \at r}{\mathcal{I}}{\mathcal{I}'}) = \{r\} \cup \roles({\mathcal{I}}) \cup 
\roles(\mathcal{I}')$\\
$\roles(\while{b \at r}{\mathcal{I}}) = \{r\} \cup 
\roles(\mathcal{I})$ \\
$\roles(\scope{l}{r}{\mathcal I}{\Delta}) = \{r\} \cup 
\roles(\mathcal{I})$
\end{tabular}
\vspace{0.2cm}

}{ 
We assume 
a function $\roles({\mathcal I})$ that computes the 
roles of a \AIOC{} process ${\mathcal I}$.
}
We also assume a function $\operations$ that given a \AIOC{} process returns the set of signatures 
of 
its interactions, where the signature of interaction $\comm{o^?}{r_1}{e}{r_2}{x}$ is 
$\sign{o^?}{r_1}{r_2}$.
\iftoggle{tech_report}{
It can be inductively defined as follows:

\vspace{0.2cm}
\begin{tabular}{l}
$\operations(\comm{o^?}{r_1}{e}{r_2}{x}) = \{ \sign{o^?}{r_1}{r_2} \}$\\
$\operations({\mathcal I} \parOpI {\mathcal I}') = \operations({\mathcal I} \seqOp  {\mathcal I}') = \operations({\mathcal I}) \cup 
\operations({\mathcal I}')$\\
$\operations( \ifthen{b \at r}{\mathcal{I}}{\mathcal{I}'}) 
= \operations({\mathcal I}) \cup 
\operations({\mathcal I}')$\\
$\operations( \scope{l}{r}{\mathcal I}{\Delta})
= \operations({\mathcal I})$\\ 
$\operations( \while{b \at r}{\mathcal{I}}) 
= \operations({\mathcal I})$\\
$\operations(\assign{x}{r}{e}) = \operations(\one) = \operations(\zero) = \emptyset$\\
\end{tabular}
\vspace{0.2cm}
}{ 
For a formal definition of the functions $\roles$ and $\operations$ we refer the reader 
to the companion technical report~\cite{TR}.
}

\begin{definition}[Connectedness]\label{def:connectedness}
A \AIOC{} process ${\mathcal I}$ is connected if it satisfies:
\begin{itemize}
\item {\bf connectedness for sequence:} each
subterm of the form ${\mathcal I}' \seqOp {\mathcal I}''$ satisfies
$\forall \pair{r_1}{r_2} \in \transF({\mathcal I}'), \forall
\pair{s_1}{s_2} \in \transI({\mathcal I}'') \; . \;
\{r_1,r_2\} \cap \{s_1,s_2\} \neq \emptyset$;
\item {\bf connectedness for parallel:}  each
subterm of the form~ ${\mathcal I}' \parOpI {\mathcal I}''$ satisfies
$\operations({\mathcal I}') \cap \operations({\mathcal I}'') = \emptyset$.
\end{itemize}
\end{definition}

%
Requiring connectedness does not hamper programmability, since it naturally
holds in most of the cases (see, e.g., \cite{SLE,AIOCJ}), and it can always be
enforced automatically restructuring the \AIOC{} while preserving its
behaviour, following the lines of~\cite{WWVlanese}.
Also, connectedness can be checked efficiently.
\begin{theorem}[Connectedness-check complexity]\label{teo:compl}\mbox{}\\
The connectedness of a \AIOC{} process ${\mathcal I}$ can be checked in
time $O(n^2\log(n))$, where $n$ is the number of nodes in the abstract syntax
tree of ${\mathcal I}$.
\end{theorem}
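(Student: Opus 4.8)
The plan is to compute the auxiliary functions $\transI$, $\transF$, $\operations$ (and the roles function $\roles$, needed for the final clause of a scope) for every subterm of ${\mathcal I}$ in a single bottom-up traversal of the abstract syntax tree, and then to verify the two clauses of Definition~\ref{def:connectedness} locally at each sequential node and each parallel node. Throughout, I would store each of these sets in a balanced search tree (equivalently, keep it sorted), so that membership queries and insertions cost $O(\log m)$ and the merge of two sets of total size $m$ costs $O(m\log m)$; this is the source of the logarithmic factor. Since every role, every signature, and every ``entry/exit'' pair that can occur is attached to some node of the tree, each of $\transI({\mathcal I}')$, $\transF({\mathcal I}')$, $\operations({\mathcal I}')$, $\roles({\mathcal I}')$ has size $O(n)$.

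First I would bound the cost of building the auxiliary sets. The defining equations of Table~\ref{table:tri} (together with the inductive definitions of $\roles$ and $\operations$) express the value at a node purely in terms of the values at its children, using only set unions, emptiness tests, and, for the final clause of a scope, a union indexed by $\roles(\mathcal{I})$. Each such combination produces a set of size $O(n)$ out of sets of size $O(n)$, hence costs $O(n\log n)$ with the tree representation. As the tree has $O(n)$ nodes, computing all the auxiliary sets takes $O(n^2\log n)$ time (and $O(n^2)$ space). The parallel clause is then immediate: at each subterm ${\mathcal I}' \parOpI {\mathcal I}''$ I test $\operations({\mathcal I}') \cap \operations({\mathcal I}'') = \emptyset$ by scanning the smaller signature set and querying membership in the other, at cost $O(n\log n)$ per parallel node, hence $O(n^2\log n)$ overall.

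The delicate point, and the step I expect to be the main obstacle, is the sequential clause, since its literal reading is a doubly quantified condition over $\transF({\mathcal I}') \times \transI({\mathcal I}'')$: a naive pairwise test costs $\Theta(n^2)$ per node, giving an unacceptable $\Theta(n^3)$ bound. To avoid this I would replace the quadratic test by an inclusion–exclusion counting argument. Writing $F = \transF({\mathcal I}')$ and $G = \transI({\mathcal I}'')$, for each role $a$ let $c(a)$ be the number of pairs of $F$ whose two endpoints include $a$; all the $c(a)$ are obtained in one pass over $F$ in $O(|F|\log n)$ time. A pair $\pair{s_1}{s_2} \in G$ with $s_1 \neq s_2$ meets every pair of $F$ exactly when the number of $F$-pairs touching $\{s_1,s_2\}$, namely $c(s_1) + c(s_2) - d$ with $d$ the number of $F$-pairs whose endpoint set is exactly $\{s_1,s_2\}$ (found by at most two membership queries in $F$), equals $|F|$; when $s_1 = s_2$ the condition collapses to $c(s_1) = |F|$. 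Connectedness for sequence holds at the node iff every pair of $G$ passes this test.

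The check at one sequential node thus costs $O\bigl((|F| + |G|)\log n\bigr) = O(n\log n)$, and summing over the $O(n)$ sequential nodes yields $O(n^2\log n)$. Combining the three contributions, the construction of the auxiliary sets, the parallel clause, and the sequential clause, each bounded by $O(n^2\log n)$, gives the claimed overall bound and completes the argument.
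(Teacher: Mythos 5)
Your proposal is correct, and it matches the paper's overall architecture: a bottom-up computation of $\transI$, $\transF$, $\operations$ (and $\roles$) over the AST, with balanced-tree set representations giving $O(n\log n)$ per node and $O(n^2\log n)$ for the construction phase, followed by local checks at each parallel node (an $O(n\log n)$ intersection-emptiness test) and each sequential node. Where you genuinely diverge from the paper is exactly the step you flagged as delicate, the doubly quantified sequential check. The paper avoids the naive $\Theta(n^2)$ pairwise test through a combinatorial dichotomy (Lemma~\ref{lemma:limit_connectedness}): if $|\transF({\mathcal I}')|$ is at most a small constant ($9$ in the paper), brute force is already linear; if it is larger, a case analysis on four distinct endpoints shows the condition can hold only when all multisets in both sets share one common element, which is testable in $O(n)$. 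You instead use an inclusion--exclusion count: precompute for each role $a$ the number $c(a)$ of $F$-pairs touching $a$, then verify that each $\pair{s_1}{s_2} \in G$ satisfies $c(s_1)+c(s_2)-d = |F|$ (collapsing to $c(s_1) = |F|$ when $s_1 = s_2$), with $d$ recovered by at most two membership queries. This is sound because a $2$-multiset contains both of two distinct roles $s_1, s_2$ exactly when it equals $\{s_1,s_2\}$, and the degenerate cases $F = \emptyset$ or $G = \emptyset$ pass vacuously, as the definition requires. The trade-off: the paper's lemma yields an $O(n)$ per-node check (so the whole sequential phase costs only $O(n^2)$) and isolates a reusable structural fact about large pairwise-intersecting families of $2$-multisets, whereas your counting check costs $O(n\log n)$ per node --- harmless, since the total is dominated by building the auxiliary sets anyway --- and is more mechanical, dispensing with the case analysis and extending more readily beyond pairs. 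Both routes establish the claimed $O(n^2\log(n))$ bound.
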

\iftoggle{tech_report}{
The proof of the theorem is reported in Appendix \ref{app:complexity}.}{}

Note that we allow only connected updates. Indeed, replacing a scope
with a connected update always results in a deadlock- and race-free
\AIOC{}. Thus, there is no need to perform expensive runtime checks to
ensure connectedness of the application after an arbitrary sequence of
updates has been applied.

\subsection{\AIOC{} semantics} 
\label{sub:aioc_semantics}

We can now define \AIOC{} systems and their semantics.

\begin{definition}[\AIOC{} systems]
A \AIOC{} system is a triple $\tuple{\Sigma, \rules,\mathcal{I}}$
denoting a \AIOC{} process $\mathcal{I}$ equipped with a global state $\Sigma$ and a set of 
updates $\rules$.
\end{definition}
%
%
\begin{table*}[t]
{\footnotesize
\[
\begin{array}{c}
\mathrule{Interaction}{
\grass{e}_{\Sigma_{r_1}} = v
}{\tuple{\ambient, \comm{o^?}{r_1}{e}{r_2}{x}} 
\arro{\commLabel{o^?}{r_1}{v}{r_2}{x}} \tuple{\ambient,\assign{x}{r_2}{v}}}
\hfill\qquad
\mathrule{Sequence}{\tuple{\ambient,{\mathcal I}} \arro{\mu} \tuple{\ambient',{\mathcal I}'} \hfill 
\mu \neq \tick}{\tuple{\ambient,{\mathcal I}\seqOp{\mathcal J}} \arro{\mu} \tuple{\ambient',{\mathcal 
I}'\seqOp{\mathcal J}}}
\\
\mathrule{Assign}{
\grass{e}_{\Sigma_r} = v
}{
\tuple{\state,\rules,{\assign{x}{r}{e}}}
\arro{\tau}
\tuple{\state\substLabel{v}{x}{r},\rules, \one}
}
\hfill
\mathrule{Seq-end}{\tuple{\ambient,{\mathcal I}} \arro{\tick} \tuple{\ambient,{\mathcal I}'} \hfill 
\tuple{\ambient,{\mathcal J}} \arro{\mu} \tuple{\ambient,{\mathcal J}'}}{\tuple{\ambient,{\mathcal 
I}\seqOp{\mathcal J}} \arro{\mu} \tuple{\ambient,{\mathcal J}'}}
\\
\iftoggle{tech_report}{
\mathrule{Parallel}{\tuple{\ambient,{\mathcal I}} \arro{\mu} \tuple{\ambient',{\mathcal I}'} \hfill 
\mu \neq \tick}{\tuple{\ambient,{\mathcal I}\parallel{\mathcal J}} \arro{\mu} 
\tuple{\ambient',{\mathcal I}'\parallel{\mathcal J}}}
\hfill
\mathrule{Par-end}{\tuple{\ambient,{\mathcal I}} \arro{\tick} \tuple{\ambient,{\mathcal I}'} \hfill 
\tuple{\ambient,{\mathcal J}} \arro{\tick} \tuple{\ambient,{\mathcal 
J}'}}{\tuple{\ambient,{\mathcal I}\parallel{\mathcal J}} \arro{\tick} \tuple{\ambient,{\mathcal 
I}'\parallel{\mathcal J}'}}
\\
\mathrule{If-then}{
\grass{b}_{\Sigma_r} = \tt{true}}
{\tuple{\ambient,\ifthen{b \at r}{\mathcal{I}}{\mathcal{I}'}} 
\arro{\tau} \tuple{\ambient,{\mathcal I}}} \hfill
\mathrule{If-else}{
\grass{b}_{\Sigma_r} = 
\tt{false}}{\tuple{\ambient,\ifthen{b \at r}{\mathcal{I}}{\mathcal{I}'}} 
\arro{\tau} \tuple{\ambient,{\mathcal I}'}}
\\
\mathrule{While-unfold}{
\grass{b}_{\Sigma_r} = 
\tt{true}}{\tuple{\ambient,\while{b \at r}{\mathcal{I}}} \arro{\tau} 
\tuple{\ambient,{\mathcal I} \seqOp \while{b \at r}{\mathcal{I}}}} \hfill
\mathrule{While-exit}{
\grass{b}_{\Sigma_r} = 
\tt{false}}{\tuple{\ambient,\while{b \at r}{\mathcal{I}}} \arro{\tau} \tuple{\ambient,\one}} 
\\
}{
}
\mathrule{\AdaptRule}{ 
\roles({\mathcal{I}'}) \subseteq \roles({\mathcal{I})} \quad \mathcal{I}'
\in \rules
\quad \mathcal{I}' \textrm{ connected}
}{\tuple{\ambient,\scope{l}{r}{\mathcal 
I}{\Delta}} \arro{{\mathcal I}'} \tuple{\ambient, {\mathcal 
I}'}}
\hfill
\mathax{\NoAdaptRule}{\tuple{\ambient,\scope{l}{r}{\mathcal I}{\Delta}}
\arro{\texttt{\NoAdaptLabel}} \tuple{\ambient,{\mathcal I}}}
\\
\mathax{End}{\tuple{\ambient,\one} \arro{\tick} \tuple{\ambient,\zero}}
\hfill
\mathax{Change-Updates}{\tuple{\state,\rules,{\mathcal I}} \arro{\rules'} 
\tuple{\state,\rules',{\mathcal I}}}
\end{array}
\]
}
\caption{\iftoggle{tech_report}{\AIOC{} system semantics.}{\AIOC{} system 
semantics (excerpt).}}\label{table:ioclts} 
\end{table*}

\iftoggle{tech_report}{
\begin{definition}[\AIOC{} systems semantics]
The semantics of \AIOC{} systems is defined as the smallest labelled
transition system (LTS) closed under the rules in
Table~\ref{table:ioclts}, where symmetric rules for parallel
composition have been omitted.
\end{definition}
} 
{ 
\begin{definition}[\AIOC{} systems semantics]
The semantics of \AIOC{} systems is defined as the smallest labelled
transition system (LTS) closed under the rules in
Table 6 in the companion technical report \cite{TR} (excerpt in Table~\ref{table:ioclts}), 
where symmetric rules for parallel composition have been omitted.
\end{definition}
} 
The rules in Table \ref{table:ioclts} describe the behaviour of a \AIOC{} system 
by induction on the structure of its \AIOC{} process.  We use $\mu$ to range over
labels. Also, we use $\ambient$ as an abbreviation for
$\state,\rules$. 
\iftoggle{tech_report}{
Rule \ruleName{Interaction}  executes a communication from $r_1$ to $r_2$ on
operation $o^?$,  where $r_1$ sends to $r_2$ the value $v$ of an expression
$e$. The value $v$ is then stored in $x$ by $r_2$.
Rule \ruleName{Assign} evaluates the expression $e$ in the local state
$\Sigma_r$ and stores the resulting value $v$ in the local variable $x$ in
role $r$ ($\substLabel{v}{x}{r}$ represents the substitution).
Rule \ruleName{End}  terminates the execution of an empty process.  Rule
\ruleName{Sequence} executes a step in the first process of a sequential
composition, while rule \ruleName{Seq-end} acknowledges the termination of the
first process, starting the second one.
Rule \ruleName{Parallel} allows a process in a parallel composition to
compute, while rule \ruleName{Par-end} synchronises the termination of two
parallel processes.
Rules \ruleName{If-then} and \ruleName{If-else} evaluate the boolean guard of
a conditional, selecting the then and the else branch, respectively.
Rules \ruleName{While-unfold} and \ruleName{While-exit} correspond
respectively to the unfolding of a while when its condition is satisfied and
to its termination otherwise. }
{ We comment below on the main rules.

Rule \ruleName{Interaction} executes a communication from $r_1$ to $r_2$ on
operation $o^?$,  where $r_1$ sends to $r_2$ the value $v$ of an expression
$e$. The value $v$ is then stored in $x$ by $r_2$.
Rule \ruleName{Assign} evaluates the expression $e$ in the local state
$\Sigma_r$ and stores the resulting value $v$ in the local variable $x$ in
role $r$ ($\substLabel{v}{x}{r}$ represents the substitution). }
The rules \ruleName{\AdaptRule} and \ruleName{\NoAdaptRule} deal with the code
replacement and thus the application of an update. Rule \ruleName{\AdaptRule} models
the application of the update $\arule{l}{\cond}{\mathcal{I}'}$ to the scope
$\scope{l}{r}{\mathcal I}{\Delta}$ which, as a result, is replaced by the \AIOC{}
process ${\mathcal{I}'}$. This rule requires the update to be connected.
Rule \ruleName{\NoAdaptRule} removes the scope
boundaries and starts the execution of the body of the scope.
Rule \ruleName{Change-Updates} allows the set $\rules$ of available updates to 
change. This rule is
always enabled since its execution can happen at any time and the
application cannot forbid it.

In our theory, whether to update a scope or not, and which update to apply if many are 
available, is completely non-deterministic. We have adopted
this view to maximize generality. However, for practical
applications, one needs rules and conditions which define when an update has to
be performed. Refining the semantics to introduce rules for decreasing (or eliminating) 
the non-determinism would not affect the
correctness of our approach.
One such refinement has been explored in~\cite{SLE}.

We define \emph{\AIOC{} traces}, where all the performed actions are observed,
and \emph{weak \AIOC{} traces}, where interactions on private operations and
silent actions $\tau$ are not visible.
%
\begin{definition}[\AIOC{} traces]
A (strong) trace of a \AIOC{} system $\tuple{ \Sigma_1, \rules_1,{\mathcal I}_1}$ is a sequence 
(finite or infinite) of
  labels $\mu_1, \mu_2, \dots$ such that there is a sequence of \AIOC{}
  system transitions $\tuple{\Sigma_1,\rules_1,{\mathcal I}_1}
  \arro{\mu_1} \tuple{\Sigma_2,\rules_2,{\mathcal I}_2} \arro{\mu_2} \dots$.\\
A weak trace of a \AIOC{} system $\tuple{
  \Sigma_1,\rules_1,{\mathcal I}_1}$ is a sequence of labels
  $\mu_1, \mu_2, \dots$ obtained by removing all the labels
  corresponding to private communications, i.e.,~of the form
  $\commLabel{o^*}{r_1}{v}{r_2}{x}$, and the silent labels $\tau$ from a
  trace of $\tuple{ \Sigma_1, \rules_1,{\mathcal I}_1}$.
\end{definition}

\section{Dynamic Process-Oriented Choreography (\APOC{})}\label{sec:apoc}

This section describes the syntax and operational semantics of
\APOC{}s. \APOC{}s include \emph{processes}, ranged over by $P$, $P'$,
$\ldots$, describing the behaviour of participants.  $(P,\Gamma)_r$
denotes a \emph{\APOC{} role} named $r$, executing process $P$ in a
local state $\Gamma$. \emph{Networks}, ranged over by $\net$, $\net'$,
$\ldots$, are parallel compositions of \APOC{} roles with different
names. \APOC{} systems, ranged over by ${\mathcal S}$, are \APOC{} networks
equipped with a set of updates $\rules$, namely pairs $\tuple{\rules, \net}$.
\[
\begin{array}{ll}
P \gram & \cinp{o^?}{x}{r} \mid \cout{o^?}{e}{r} \mid \cout{o^\ast}{X}{r} \mid
P \seqOp P'\mid \ P \parOpP P' \ \mid  x = e \mid \while{b}{P}\\ 
& \hspace{-1.5em} \mid \ifthen{b}{P}{P'} \mid 
\pscope{n}{l}{r}{P}{\Delta}{S} \mid \psscope{n}{l}{r}{P} \mid \one \mid \zero\\
\end{array}
\]
$
X \gram  {\tt{no}} \mid P \hfill
\net \gram  \role{P,\Gamma}{r}\mid \net \parallel \net' \hfill
{\mathcal S}  \gram  \tuple{\rules, \net}
$\\\\
\noindent
Processes include receive action $\cinp{o^?}{x}{r}$ on a specific
operation $o^?$ (either public or private) of a message from role $r$
to be stored in variable $x$, send action $\cout{o^?}{e}{r}$ of an
expression $e$ to be sent to role $r$, and higher-order send action
$\cout{o^*}{X}{r}$ of the higher-order argument $X$ to be sent to role
$r$. Here $X$ may be either a \APOC{} process $P$, which is the new code
for a scope in $r$, or a token $\tt no$, notifying that no update
is needed. $P \seqOp P'$ and $P \parOpP P'$ denote the sequential and parallel
composition of $P$ and $P'$, respectively. Processes also feature
assignment $x = e$ of expression $e$ to variable $x$, the process
$\one$, that can only successfully terminate, and the terminated
process $\zero$. We also have conditionals $\ifthen{b}{P}{P'}$
and loops $\while{b}{P}$. 
Finally, we have two constructs for scopes. Scope
$\pscope{n}{l}{r}{P}{\Delta}{S}$ may occur only inside role $r$ and
acts as coordinator to apply (or not apply) the update. The
shorter version $\psscope{n}{l}{r}{P}$ is used instead when the role is not
the coordinator of the scope.
In fact, only the coordinator needs to
know the set
$S$ of involved roles to communicate which update to apply.
Note that scopes are prefixed by an index $n$. Indexes are unique in each role and are used
to avoid interference between different scopes in the same role.

\subsection{Projection}\label{sec:proj}
Before defining the semantics of \APOC{}s, we define the
projection of a \AIOC{} process onto \APOC{} processes. This is
needed to define the semantics of updates at the \APOC{} level. The projection
exploits auxiliary communications to coordinate the different
roles, e.g., ensuring that in a conditional they all select the
same branch. To define these auxiliary communications and avoid
interference, it is convenient to annotate \AIOC{} main constructs with
unique indexes.

\begin{table*}[t]
\footnotesize
$\begin{array}{ll}
\begin{array}{ll}
	\mbox{\rulebox{$\pi(\one,s)$} = $\one$} \; \qquad \mbox{\rulebox{$\pi(\zero,s)$} = $\zero$} 
	\\
	\mbox{\rulebox{$\pi({\mathcal I} \seqOp {\mathcal I}',s)$} = $\pi({\mathcal I},s) \seqOp \pi({\mathcal I}',s)$}
	\\
	\mbox{\rulebox{$\pi({\mathcal I} \parOpI {\mathcal I}',s)$} = $\pi({\mathcal I},s) \mid \pi({\mathcal I}',s)$}
	\end{array}
\left.
	\mbox{\rulebox{$\pi(n: \assign{x}{r}{e},s)$}} =
	\left \{
	\begin{array}{ll}
	x = e & \mbox{ if } s=r\\
	\one & \mbox{ otherwise}
	\end{array}\right.
\right.
\\
\mbox{\rulebox{$\pi(n:\comm{o^?}{r_1}{e}{r_2}{x} ,s)$}}  =
\left \{ 
\begin{array}{ll}
	\cout{o^?}{e}{r_2} & \mbox{ if } s=r_1\\
	\cinp{o^?}{x}{r_1} & \mbox{ if } s= r_2\\
	\hfill \one \hfill & \mbox{ otherwise}\\
\end{array}
\right.
\end{array}$
\\
$\mbox{\rulebox{$\pi(\ifthenKey{b \at r}{\mathcal{I}}{\mathcal{I}'}{n},s)$}} = \\ \makebox[1em]{}
\left \{ 
\begin{array}{ll}
	\begin{array}{ll}
	\code{if} \; b \; 
		\{ (\prod_{r' \in \roles(\mathcal{I} , \mathcal{I}')\smallsetminus\{r\}} \; 
			\cout{o^*_n}{\mathit{true}}{r'}) \seqOp \; \pi(\mathcal{I},s) \}  \\
			\vspace{0.5em}
		\quad \code{else} \;  \{(\prod_{r' \in \roles(\mathcal{I} 
, \mathcal{I}')\smallsetminus \{r\}} \; 
		\cout{o^*_n}{\mathit{false}}{r'}) \seqOp \; \pi(\mathcal{I}',s)\}
	\end{array}
	& \mbox{if } \mathit{s}=\mathit{r}
\\
	\begin{array}{ll}
	\cinp{o^*_n}{x_n}{r} \seqOp \; \ifthen{x_n}{\pi(\mathcal{I},s)}{\pi(\mathcal{I}',s)}
	\vspace{0.5em}
	\end{array}
	& \mbox{if } r \in \roles(\mathcal{I} , \mathcal{I}') \smallsetminus\{s\}
\\
\hfill \one \hfill & \mbox{otherwise}\\ 
\end{array}
\right.$
\\
$\mbox{\rulebox{$\pi(\whileKey{b \at r}{\mathcal{I}}{n},s)$}} = \\ \makebox[1em]{}
\left \{
\begin{array}{ll}
	\begin{array}{ll}
	\code{while} \; b \; \{(\prod_{r' \in \roles(\mathcal{I}) \smallsetminus \{r\}} 
	\cout{o^*_n}{\mathit{true}}{r'}) \seqOp \pi(\mathcal{I},s) \seqOp \; \\
	\hspace{4em}\prod_{r' \in \roles(\mathcal{I}) \smallsetminus \{r\}} \; \cinp{o^*_n}{\_}{r'}\}
	\seqOp \\
	\hspace{3.5em}\prod_{r'\in \roles(\mathcal{I}) \smallsetminus \{r\}} \; \cout{o^*_n}{\mathit{false}}{r'}
	\vspace{0.5em}
	\end{array}
	& \mbox{if } \mathit{s}=\mathit{r}
\\
	\begin{array}{ll}
	\cinp{o^*_n}{x_n}{r} \seqOp \\ 
	\hspace{1em} \while{x_n}{\pi(\mathcal{I},s) \seqOp \;
	\cout{o^*_n}{\texttt{ok}}{r} \seqOp \cinp{o^*_n}{x_n}{r}}
	\vspace{0.5em}
	\end{array}
	& \mbox{if } s \in\roles(\mathcal{I}) \smallsetminus \{r\}
\\
\hfill \one \hfill & \mbox{otherwise}
\end{array}
\right.$
\\[1mm]
$\mbox{\rulebox{$\pi(n: \scope{l}{r}{\mathcal I}{\Delta},s)$}} =
\left \{ 
\begin{array}{ll}
	\begin{array}{ll}
	\pscope{n}{l}{r}{\pi({\mathcal I},s)}{\Delta}{\roles({\mathcal I})}
	\vspace{0.5em}
	\end{array}
	& \mbox{if } \mathit{s}=\mathit{r}
\\
	\begin{array}{ll}
	\psscope{n}{l}{r}{\pi({\mathcal I},s)}
	\vspace{0.5em}
	\end{array}
	& \mbox{if } s \in \roles(\mathcal{I}) \smallsetminus \{r\} 
\\
\hfill \one \hfill & \mbox{otherwise}\\ 
\end{array}
\right.$
\caption{Process-projection function $\pi$.}\label{table:pi}
\end{table*}

\begin{definition}[Well-annotated \AIOC{}]\label{def:annAIOC}
Annotated \AIOC{} processes are obtained by indexing every interaction,
assignment, scope, and if and while constructs in a \AIOC{} process with a 
natural number $n \in \mathbb{N}$, resulting in the following grammar:
\[
\begin{array}{ll}
{\mathcal I} \gram & n: \comm{o^?}{r_1}{e}{r_2}{x} 
\mid {\mathcal I}\seqOp{\mathcal I}'
\mid {\mathcal I}\parOpI{\mathcal I}'
\mid \one \mid \zero %
\mid n: \assign{x}{r}{e} \\
& \hspace{-.8em}\mid n: \while{b \at r}{\mathcal{I}}
\mid n: \ifthen{b \at r}{\mathcal{I}}{\mathcal{I}'} 
\mid n:\scope{l}{r}{\mathcal{I}}{\Delta}{A}
\end{array}
\]
A \AIOC{} process is well-annotated if all its indexes are
distinct.
\end{definition}
Note that we can always annotate a \AIOC{} process to make it well-annotated.

We now define the \emph{process-projection function} that derives \APOC{}
processes from \AIOC{} processes. 
Given an annotated \AIOC{}
process $\mathcal{I}$ and a role $s$, the projected \APOC{} process
$\pi(\mathcal{I},s)$ is defined by structural induction on ${\mathcal I}$ 
in Table~\ref{table:pi}. Here, with a little abuse of notation, we write
$\roles(\mathcal{I},\mathcal{I}')$ for $\roles(\mathcal{I})\cup\roles(\mathcal{I}')$.
We assume that operations $o^*_n$ and variables $x_n$ are never used in the projected 
\AIOC{} and we use them for auxiliary synchronisations.
 In most of the cases the projection is
trivial. For instance, the projection of an interaction is an output
on the sender role, an input on the receiver, and $\one$ on any other
role. For a conditional
$\ifthenKey{b \at r}{\mathcal{I}}{\mathcal{I}'}{n}$, role $r$ locally
evaluates the guard and then sends its value to the other roles
using auxiliary communications. Similarly, in a loop
$\whileKey{b \at r}{\mathcal{I}}{n}$ role $r$ communicates the evaluation
of the guard to the other roles. Also, after an iteration has
terminated, role $r$ waits for the other roles to terminate and then starts
a new iteration. In both the conditional and the loop, indexes are used
to choose names for auxiliary operations: the choice is coherent among
the different roles and interference between different loops or
conditionals is avoided.

There is a trade-off between efficiency and ease of programming that concerns how to ensure 
that all
the roles are aware of the evolution of the computation. Indeed, this can be
done in three ways: by using auxiliary communications generated either \emph{i})
by the projection (e.g., as for if and while constructs above) or \emph{ii}) by 
the semantics (as we will show for scopes) or \emph{iii}) by restricting the class of allowed \AIOC{}s (as done
for sequential composition using connectedness for sequence).
For instance, auxiliary communications for the 
$\ifthen{b \at r}{\mathcal{I}}{\mathcal{I}'}$ construct are needed unless one requires that $r \in \{ 
r_1,r_2\}$ for each $\pair{r_1}{r_2} \in \transI({\mathcal I}) \cup \transI({\mathcal I}')$. The use of  
auxiliary communications is possibly less efficient, while stricter connectedness 
conditions 
leave more burden on the shoulders of the programmer.

We now define the projection $\proj(\mathcal{I},\Sigma)$, based on the process-projection
$\pi$, to derive a \APOC{} network from a \AIOC{} process $\mathcal{I}$ and a global state
$\Sigma$.
We denote with $\parallel_{i \in I} \net_i$ the parallel composition of
networks $\net_i$ for each $i \in I$.
\begin{definition}[Projection]
The projection of a \AIOC{} process ${\mathcal I}$ with global state $\Sigma$ is
the \APOC{} network defined by $\proj({\mathcal I},\Sigma)=\parallel_{s \in
\roles({\mathcal I})} \role{\pi({\mathcal I},s),
\Sigma_s}{s}$
\end{definition}

\iftoggle{tech_report}{
Appendix \ref{app:es} shows the \APOC{} processes 
obtained by projecting the \AIOC{} for the Buying scenario on \buyer, \seller\,, and
\bank. 
}{
The technical report \cite{TR} shows the \APOC{} processes
obtained by projecting the \AIOC{} for the Buying scenario in Listing \ref{es:bank} on \buyer, 
\seller\,, 
and
\bank. 
} 

\subsection{\APOC{} semantics}
\begin{table*}[t]
{\footnotesize
\[
\begin{array}{c}
\mathax{One}{\role{\one,\Gamma}{r} \arro{\tick}
\role{\zero,\Gamma}{r}}
\hfill\quad
\mathrule{Assign}{\grass{e}_\Gamma = v}{\role{x = e,\Gamma}{r} \arro{\tau} 
\role{\one,\Gamma\substAPOC{v}{x}}{r}}
\hfill\quad
\mathax{Out-\AdaptRule}{\role{\cout{o^?}{X}{r'},\Gamma}{r} \arro{\coutLabel{o^?}{X}{r'}:r} 
\role{\one,\Gamma}{r}}
\\[.4cm]
\mathax{In}{\role{\cinp{o^?}{x}{r'},\Gamma}{r} \arro{\cinpLabel{o^?}{x}{v}{r'}:r}\role{x = 
v,\Gamma}{r}}\hfill
\mathrule{Out}{\grass{e}_\Gamma = v}{\role{\cout{o^?}{e}{r'},\Gamma}{r} 
\arro{\coutLabel{o^?}{v}{r'}:r} \role{\one,\Gamma}{r}}
\\[.7cm]
\mathrule{Sequence}{\role{P,\Gamma}{r} \arro{\delta} \role{P',\Gamma'}{r} \quad \delta \neq 
\tick}{\role{P \seqOp Q,\Gamma}{r} \arro{\delta} \role{P' \seqOp Q,\Gamma'}{r}}
\hfill
\mathrule{Seq-end}{\role{P,\Gamma}{r} \arro{\tick} \role{P',\Gamma}{r} \quad \role{Q,\Gamma}{r} 
\arro{\delta} \role{Q',\Gamma'}{r}}{\role{P \seqOp Q,\Gamma}{r} \arro{\delta} \role{Q',\Gamma'}{r}}
\iftoggle{tech_report}{\\[.7cm]}{\\}
\iftoggle{tech_report}{
\mathrule{Parallel}{\role{P,\Gamma}{r} \arro{\delta} \role{P',\Gamma'}{r} \quad \delta \neq 
\tick}{\role{P \mid Q,\Gamma}{r} \arro{\delta} \role{P' \mid Q,\Gamma'}{r}}
\hfill
\mathrule{Par-end}{\role{P,\Gamma}{r} \arro{\tick} \role{P',\Gamma}{r} \quad \role{Q,\Gamma}{r} 
\arro{\tick} \role{Q',\Gamma}{r}}{\role{P \mid Q,\Gamma}{r} \arro{\tick} \role{P' \mid 
Q',\Gamma}{r}} 
\\[.7cm]
\mathrule{If-then}{\grass{b}_\Gamma = \tt{true}}{\role{\ifthen{b}{P}{P'},\Gamma}{r} 
\arro{\tau} \role{P,\Gamma}{r}}
\hfill
\mathrule{If-else}{\grass{b}_\Gamma = \tt{false}}{\role{\ifthen{b}{P}{P'},\Gamma}{r}
\arro{\tau} \role{P',\Gamma}{r}}
\\[.7cm]
\mathrule{While-unfold}{\grass{b}_\Gamma = \tt{true}}{\role{\while{b}{P},\Gamma}{r} \arro{\tau} 
\role{P \seqOp \while{e}{P},\Gamma}{r}}
\hfill
\mathrule{While-exit}{\grass{b}_\Gamma = \tt{false}}{\role{\while{b}{P},\Gamma}{r} \arro{\tau} 
\role{\one,\Gamma}{r}}
\\
}{}
\\
\mathrule{Lead-\AdaptRule}{
\mathcal{I}' = 
\freshKey({\mathcal I}, n)
\qquad
\roles(\mathcal{I}') \subseteq S
}
{
\begin{array}{l}
\role{\pscope{n}{l}{r}{P}{\Delta}{S},\Gamma}{r} \arro{{\mathcal I}} 
\\ \quad \quad
\role{\prod_{r_i \in S \setminus \{r\}}
\cout{o^*_{n}}{\pi(\mathcal{I}',r_i)}{r_i}\seqOp
\pi(\mathcal{I}',r)\seqOp
\prod_{r_i \in S \setminus \{r\}} \cinp{o^*_{n}}{\_}{r_i},\Gamma}{r}\\
\end{array}
}
\\[1cm]
\mathax{Lead-\NoAdaptRule}{
\begin{array}{l}
\role{\pscope{n}{l}{r}{P}{\Delta}{S},\Gamma}{r}
\arro{\texttt{\NoAdaptLabel}} 
\\ \quad
\role{
\prod_{r_i \in S \setminus \{r\}} 
\cout{o^*_{n}}{\mbox{no}}{r_i}\seqOp P \seqOp \prod_{r_i \in S \setminus \{r\}} 
\cinp{o^*_{n}}{\_}{r_i},\Gamma}{r}\\
\end{array}
}
\\[.6cm]
\mathax{\AdaptRule}{\role{\psscope{n}{l}{r'}{P},\Gamma}{r} 
\arro{\cinpLabel{o^*_{n}}{\_}{P'}{r'}}
\role{P' \seqOp
\cout{o^*_{n}}{\texttt{ok}}{r'},\Gamma}{r}}
\\[.5cm]
\mathax{\NoAdaptRule}{\role{\psscope{n}{l}{r'}{P},\Gamma}{r} 
\arro{\cinpLabel{o^*_{n}}{\_}{\mbox{no}}{r'}}
\role{P \seqOp \cout{o^*_{n}}{\texttt{ok}}{r'},\Gamma}{r} }
\end{array}
\]
}
\caption{\iftoggle{tech_report}{\APOC{} role semantics.}{\APOC{} role 
semantics (excerpt).}}\label{table:apoc-proc}
\vspace{-1em}
\end{table*}

\begin{table*}[t]
{
\footnotesize
{
\[
\begin{array}{c}
\mathrule{Lift}{\net \arro{\delta} \net' 
\quad \delta \neq {\mathcal I}}{\tuple{\rules, \net} 
\arro{\delta} \tuple{\rules, \net'}}
\qquad\hfill
\mathrule{Lift-\AdaptRule}{\net \arro{\mathcal{I}} \net' \quad \mathcal{I} \textrm{ connected} \quad \mathcal{I} \in \rules}
{\tuple{\rules, \net} 
\arro{\mathcal{I}} \tuple{\rules,\net'}}
\hfill\qquad
\mathax{Change-Updates}{\tuple{\rules,\net} \arro{\rules'} 
\tuple{\rules',\net}}
\\
\mathrule{Synch}{\tuple{\rules, \net} \arro{\coutLabel{o^?}{v}{r_2}:r_1} \tuple{\rules, 
\net'} 
\quad \tuple{\rules,\net''} 
\arro{\cinpLabel{o^?}{x}{v}{r_1}:r_2} \tuple{\rules,\net'''}}
{\tuple{\rules, \net \parallel \net''} 
\arro{\commLabel{o^?}{r_1}{v}{r_2}{x} } \tuple{\rules, \net' \parallel \net'''}}
\\
\mathrule{Synch-\AdaptRule}{\tuple{\rules,\net}\arro{\coutLabel{o^?}{X}{r_2}:r_1}\tuple{\rules,
\net'} \quad \tuple{\rules,\net''} 
\arro{\cinpLabel{o^?}{\_}{X}{r_1}:r_2}\tuple{\rules,\net'''}}
{\tuple{\rules,\net \parallel \net''}
\arro{\commLabel{o^?}{r_1}{X}{r_2}{\_}}
\tuple{\rules,\net' \parallel \net'''}}
\\
\mathrule{Ext-Parallel}{\tuple{\rules,\net} \arro{\eta} \tuple{\rules,\net'} \qquad \eta \neq 
\tick}
{\tuple{\rules,\net\parallel \net''} 
\arro{\eta} \tuple{\rules,\net' \parallel \net''}}
\qquad\quad\hfill
\mathrule{Ext-Par-End}{\tuple{\rules,\net} \arro{\tick} \tuple{\rules,\net'} \quad 
\tuple{\rules,\net''} \arro{\tick} \tuple{\rules,\net'''}}{\tuple{\rules,\net\parallel 
\net''} \arro{\tick} \tuple{\rules,\net' \parallel \net'''}}
\end{array}
\]
}
}
\caption{\APOC{} system semantics.}
\label{table:apoc-sys}
\label{table:apoc-roles}
\label{table:apoc-net}
\end{table*}

\iftoggle{tech_report}{
\begin{definition}[\APOC{} systems semantics]
The semantics of \APOC{} systems is defined as the smallest LTS
closed under the rules in Tables~\ref{table:apoc-proc} and
\ref{table:apoc-sys}. Symmetric rules for parallel composition have
been omitted.
\end{definition}
} 
{ 
\begin{definition}[\APOC{} systems semantics]
The semantics of \APOC{} systems is defined as the smallest LTS closed under
the rules in Table \ref{table:apoc-sys} here and Table 7 in the companion technical report 
\cite{TR} (excerpt in
Table~\ref{table:apoc-proc}). 
Symmetric rules for parallel composition have
been omitted.
\end{definition}
} 

We use 
$\delta$ to
range over labels.
The semantics in the early style.
\iftoggle{tech_report}{
Rule \ruleName{In} receives a value $v$ from role $r'$ and assigns it to local
variable $x$ of $r$. Rules \ruleName{Out} and \ruleName{Out-\AdaptRule} execute
send and higher-order send actions, respectively. The send actions evaluate
expression $e$ in the local state $\Gamma$.
Rule \ruleName{One} terminates an empty process.
Rule \ruleName{Assign} executes an assignment ($\substAPOC{v}{x}$ represents the
substitution of value $v$ for variable $x$).
Rules \ruleName{Sequence} and  \ruleName{Seq-end} handle sequential
composition.
Rules \ruleName{Parallel} and  \ruleName{Par-end} handle the execution of
parallel processes.
Rules \ruleName{If-then} and \ruleName{If-else} execute the then or the else
branch in a conditional, respectively.
Rules \ruleName{While-unfold} and \ruleName{While-exit} model the unfolding or
the termination of a loop.

The other rules deal with code updates.
}
{
We comment below on the main rules.

Rule \ruleName{In} receives a value $v$ from role $r'$ and assigns it to local
variable $x$ of $r$. Rules \ruleName{Out} and \ruleName{Out-\AdaptRule} execute
send and higher-order send actions, respectively. The send evaluates
expression $e$ in the local state $\Gamma$. In rule \ruleName{Assign},
$\substAPOC{v}{x}$ represents the substitution of value $v$ for variable $x$.
}

Rule \ruleName{Lead-\AdaptRule} concerns the role $r$ coordinating the update
of a scope. Role $r$ decides which update to use. It is important that this
decision is taken by the unique coordinator $r$ for two reasons. First, $r$
ensures that all involved roles agree on whether to update or not. Second,
since the set of updates may change at any time, the choice of the update
inside $\rules$ needs to be atomic, and this is guaranteed using a unique
coordinator.
Role $r$ transforms the \AIOC{} ${\mathcal I}$ into ${\mathcal I}'$ using
function $\freshKey(\mathcal{I}, n)$, which produces a copy $\mathcal{I}'$ of
$\mathcal{I}$. In $\mathcal{I}'$ the indexes of scopes are fresh, which avoids
clashes with indexes already present in the target \APOC{}. Moreover, to avoid
that interactions in the update interfere with (parallel) interactions in the
context, $\freshKey(\mathcal{I}, n)$ renames all the operations inside
$\mathcal{I}$ by adding to them the index $n$. To this end we extend the set
of operations without changing the semantics. For each operation $o^?$ we
define extended operations of the form $\freshops{n}{o^?}$.
%
The coordinator $r$ also generates
the processes to be executed by the roles in $S$ using the process-projection
function $\pi$.  The processes are sent via higher-order communications only to the
roles that have to execute them. Then, $r$ starts its own updated code
$\pi(\mathcal{I}',r)$. Finally, auxiliary communications are used to
synchronise the end of the execution of the replaced process (here $\_$
denotes a fresh variable to store the synchronisation message
$\texttt{ok}$). The auxiliary communications are needed to ensure that the
update is performed in a coordinated way, i.e., the roles agree on when the
scope starts and terminates and on whether the update is performed
or not.

Rule \ruleName{Lead-\NoAdaptRule} instead defines the behaviour when the
coordinator $r$ decides to not update. In this case, $r$ sends
a token $\tt{no}$ to each other involved role, notifying
them that no update is applied.  End of scope synchronisation is as
above. Rules \ruleName{\AdaptRule} and \ruleName{\NoAdaptRule} define the behaviour of
the scopes for the other roles involved in the update. The scope
waits for a message from the coordinator. If the content of the message is
$\texttt{no}$, the body of the scope is executed. Otherwise, it is a process
$P'$ which is executed instead of the body of the scope.

Table~\ref{table:apoc-sys} defines the semantics of \APOC{} systems. We use
$\eta$ to range over \APOC{} systems labels. Rule \ruleName{Lift} and
\ruleName{Lift-\AdaptRule} lift roles transitions to the system level.
\ruleName{Lift-\AdaptRule} also checks that the update ${\mathcal I}$ is connected.
Rule \ruleName{Synch} synchronises a send with the corresponding receive,
producing an interaction.
Rule \ruleName{Synch-\AdaptRule} is similar, but it deals with
higher-order interactions. The labels of these transitions store the
information on the occurred communication: label
$\commLabel{o^?}{r_1}{v}{r_2}{x}$ denotes an interaction on operation
$o^?$ from role $r_1$ to role $r_2$ where the value $v$ is sent by
$r_1$ and then stored by $r_2$ in variable $x$. Label
$\commLabel{o^?}{r_1}{X}{r_2}{\_}$ denotes a similar interaction, but
concerning a higher-order value $X$. No receiver variable is
specified, since the received value becomes part of the code of
the receiving process. 
Rule \ruleName{Ext-Parallel} allows a network inside a parallel composition to
compute. Rule \ruleName{Ext-Par-End} synchronises the termination of parallel
networks. Finally, rule
\ruleName{Change-Updates} allows the set of updates to change
arbitrarily.

We can now define \APOC{} traces.
%
\begin{definition}[\APOC{} traces] 
A (strong) trace of a \APOC{} system $\tuple{\rules_1, \net_1}$ is a sequence (finite or 
infinite) of labels $\eta_1, \eta_2, \dots$ with $\eta_i \in \{\tau, 
\commLabel{o^?}{r_1}{v}{r_2}{x}, \tick, {\mathcal I},\texttt{\NoAdaptLabel}, 
\rules \}$  
such that there is a sequence of transitions \\
$\tuple{\rules_1,\net_1} \arro{\eta_1} \
\tuple{\rules_{2},\net_{2}} \arro{\eta_2} \dots$.\\
A weak trace of a \APOC{} system $\tuple{\rules_1,\net_1}$ is a sequence of labels $\eta_1, 
\eta_2, \dots$ obtained by removing all the labels corresponding to private communications, i.e.\ of 
the form $\commLabel{o^*}{r_1}{v}{r_2}{x}$ or $\commLabel{o^*}{r_1}{X}{r_2}{\_}$, and the silent 
labels 
$\tau$, from a trace of $\tuple{\rules_1,\net_1}$. Furthermore, all the extended operations 
of the form $\freshops{n}{o^?}$ are replaced by $o^?$.  
\end{definition}

Note that \APOC{} traces do not include send and receive actions. We do this
since these actions have no correspondence at the \AIOC{} level, where only 
whole interactions are allowed.

\iftoggle{tech_report}{
Note also that, in general, \APOC{}s can deadlock, e.g. $\role{\cinp{o}{x}{r'},\Gamma}{r}$ is a 
deadlocked 
\APOC{} network since all its traces contain only actions involving the change of the updates 
(i.e., 
labels $\rules$).
}{ 
}

Appendix~\ref{app:running_example} shows a sample
execution of the \APOC{} obtained by projecting the \AIOC{} for the Buying
scenario in Listing \ref{es:bank}.

\section{Correctness}\label{sec:corr}

In the previous sections we have presented \AIOC{}s, \APOC{}s, and described how
to derive a \APOC{} from a given \AIOC{}. This section presents the main
technical result of the paper, namely the correctness of the
projection. Correctness here means that the weak traces of a connected \AIOC{}
coincide with the weak traces of the projected \APOC{}.  

%
\begin{definition}[Trace equivalence]
 A \AIOC{} system $\tuple{\state, \rules,{\mathcal I}}$ and a \APOC{}
 system $\tuple{ \rules,\net}$ are (weak) trace equivalent iff
 their sets of (weak) traces coincide.
\end{definition}
\begin{theorem}[Correctness]\label{teo:final}
For each initial, connected \AIOC{} process ${\mathcal I}$, each state
$\state$, each set of updates
$\rules$, the \AIOC{} system $\tuple{\state, \rules,{\mathcal I}}$
and the \APOC{} system $\tuple{\rules,\proj({\mathcal I},\state)}$
are weak trace equivalent.
\end{theorem}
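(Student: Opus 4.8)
The plan is to establish the stronger statement that the natural relation between a connected \AIOC{} system and its projection is a \emph{weak bisimulation}, so that weak trace equivalence follows in both directions. The central difficulty is a mismatch in granularity: a single \AIOC{} transition (the evaluation of a conditional guard, one unfolding of a loop, or the application of an update) is realised at the \APOC{} level by a whole burst of transitions — the coordinator's local step, the auxiliary broadcasts on private operations $o^*_n$, and the matching local steps of the other roles. Crucially, all of these extra transitions carry labels that are either $\tau$ or private communications $\commLabel{o^*}{r_1}{v}{r_2}{x}$, hence are erased when passing to weak traces, so the observable content of one \AIOC{} step and of the corresponding \APOC{} burst coincide. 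The whole argument is therefore organised around a relation that absorbs these administrative bursts.

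First I would define a relation $\mathcal{R}$ between \AIOC{} systems and \APOC{} systems closed under the administrative (unobservable) \APOC{} steps. The subtle point is that $\proj(\mathcal{I},\Sigma)$ is only the ``aligned'' representative: once some roles have consumed part of a coordination while others have not, the network sits in an \emph{intermediate} state that is not literally the projection of any \AIOC{} process. To capture these I would extend the projection to ``running'' configurations, recording which auxiliary messages are still pending and how far each role has advanced inside a sequence, a conditional, a loop, or a scope. \textbf{Connectedness} is what keeps this bookkeeping finite and consistent: connectedness for sequence guarantees that when the network moves from $\pi(\mathcal{I}',s)$ into $\pi(\mathcal{I}'',s)$ the first action of $\mathcal{I}''$ shares a role with the last action of $\mathcal{I}'$, so no role can race arbitrarily far ahead; connectedness for parallel guarantees that the operation sets of parallel branches are disjoint, so rule \ruleName{Synch} can never cross-match a send and a receive belonging to different interactions.

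The completeness direction (every weak \AIOC{} trace is a weak \APOC{} trace) I would prove by induction on the derivation of each \AIOC{} transition $\tuple{\Sigma,\rules,\mathcal{I}} \arro{\mu} \tuple{\Sigma',\rules',\mathcal{I}'}$, exhibiting a matching sequence of \APOC{} transitions whose only observable label is the weak image of $\mu$, and whose target is again in $\mathcal{R}$ with $\tuple{\Sigma',\rules',\mathcal{I}'}$. Rule \ruleName{Interaction} is matched by \ruleName{Synch} on the corresponding output/input pair; the conditional and loop rules are matched by the coordinator's guard evaluation followed by the private broadcasts and the local conditionals/loops of the other roles; \ruleName{\AdaptRule} is matched by \ruleName{Lead-\AdaptRule} together with the receiving \ruleName{\AdaptRule} steps of the involved roles, noting that the operation renaming performed by $\freshKey(\mathcal{I},n)$ is invisible in weak traces since extended operations $\freshops{n}{o^?}$ are identified with $o^?$; and \ruleName{Change-Updates} is matched directly by its \APOC{} homonym.

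The soundness direction (every weak \APOC{} trace is a weak \AIOC{} trace) is the main obstacle. Here I must show that from any state in $\mathcal{R}$ the \APOC{} can only do two kinds of things: administrative steps, which keep it in $\mathcal{R}$ related to the \emph{same} \AIOC{} system, or a step producing an observable label, which is then mimicked by a genuine \AIOC{} transition. The two delicate cases are (i) showing that an intermediate state cannot exhibit an observable action the \AIOC{} cannot currently perform — exactly where connectedness forbids a premature interaction from a not-yet-enabled part of the choreography — and (ii) the distributed update: I must show that the coordinator's atomic choice in \ruleName{Lead-\AdaptRule}, the higher-order dissemination of the projected code, and the closing \texttt{ok} synchronisations faithfully realise the single atomic rule \ruleName{\AdaptRule}, and in particular that all involved roles necessarily run the \emph{same} update, so that no partial or inconsistent update is ever observable even though $\rules$ may change at any moment. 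Establishing this invariant — that the unique coordinator serialises the otherwise volatile selection from $\rules$ — is the technical heart of the argument. Once $\mathcal{R}$ is shown to satisfy both simulation conditions and $\bigl(\tuple{\Sigma,\rules,\mathcal{I}},\tuple{\rules,\proj(\mathcal{I},\Sigma)}\bigr)\in\mathcal{R}$, weak trace equivalence is immediate.
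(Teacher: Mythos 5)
Your plan coincides in strategy with the paper's proof: the paper likewise defines a weak system bisimilarity that absorbs exactly the $\tau$-steps and the private-operation communications, proves that bisimilarity implies weak trace equivalence, and then exhibits a bisimulation relation containing the pair $(\tuple{\state,\rules,\mathcal{I}},\tuple{\rules,\proj(\mathcal{I},\state)})$, proceeding by structural induction on $\mathcal{I}$, with connectedness invoked in precisely the two places you indicate and with the coordinator's atomic choice carrying the update case. The one structural difference is the bookkeeping for intermediate, half-coordinated networks: you propose extending the projection to ``running'' configurations, whereas the paper goes the dual way, defining a normalisation $\upd(\net)=\ssim(\prop(\net))$ that completes pending coordination bursts and erases the closing auxiliary communications, and requiring $\upd(\net)=\proj(\mathcal{I},\Sigma)$ in the relation. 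Both devices are workable; the paper's choice lets the bisimulation be stated against the ordinary projection, at the cost of a commutation lemma relating the transitions of $\net$ and of $\upd(\net)$.

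There is, however, one point where your sketch is genuinely too thin. You attribute the impossibility of cross-matching in rule \ruleName{Synch} solely to connectedness for parallel (disjointness of the operation sets of parallel branches). That is not sufficient: two sends (or receives) on the same operation between the same pair of roles also coexist syntactically when the same interaction occurs twice in sequence, in the two branches of a conditional, or --- the worst case --- in the copies produced by \ruleName{While-unfold}; none of these is excluded by parallel disjointness. The bulk of the paper's technical machinery exists exactly for this: constructs are annotated with indexes, global indexes are built so as to remain unique under loop unfolding, event structures with causality relations $\leqaioc$ and $\leqapoc$ are defined for both calculi, and the well-annotation conditions (same-operation same-target events must be causally ordered; only causally minimal events may be enabled) yield the lemmas that every \ruleName{Synch} step fires between \emph{matching} events and that two enabled matching events in the projection correspond to an interaction enabled in the \AIOC{} (this last using connectedness for sequence). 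Your ``no role can race arbitrarily far ahead'' is the right intuition, but your running-projection relation would need to carry an equivalent indexing and causality discipline to turn it into a proof; as written, the soundness direction would break exactly on a send from one loop iteration matching the receive belonging to the next.
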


\iftoggle{tech_report}
{The proof of the theorem is reported in Appendix \ref{sec:proof}.
}
{
}

\iftoggle{tech_report}{
Trace-based properties of the \AIOC{} are inherited by the \APOC{}. Examples  include 
deadlock-freedom and 
termination.
}{
Trace-based properties of the \AIOC{} are inherited by the \APOC{}. Examples  include 
termination (see \cite{TR}) and deadlock-freedom.
}
\iftoggle{tech_report}
{
\begin{definition}[Deadlock-freedom and termination]
An internal \AIOC{} (resp.\ \APOC{}) trace is obtained by removing transitions labelled $\rules$ 
from a \AIOC{}  (resp.\ \APOC{}) trace. A \AIOC{} (resp.\ \APOC{}) system is 
deadlock-free if all its maximal finite internal traces have $\tick$ as label of the last 
transition. A \AIOC{} (resp.\ \APOC{}) system terminates if all its internal traces are finite. 
\end{definition}
}
{
\begin{definition}[Deadlock-freedom]
An internal \AIOC{} (resp.\ \APOC{}) trace is obtained by removing transitions labelled $\rules$ 
from a \AIOC{}  (resp.\ \APOC{}) trace. A \AIOC{} (resp.\ \APOC{}) system is 
deadlock-free if all its maximal finite internal traces have $\tick$ as label of the last 
transition. 
\end{definition}
}
Intuitively, internal traces are needed since labels $\rules$ do not
correspond to activities of the application and may be executed also
after application termination.

By construction initial \AIOC{}s are deadlock-free. Hence:
\begin{corollary}[Deadlock-freedom]\label{cor1}
For each initial, connected \AIOC{} ${\mathcal I}$, state $\Sigma$, and
set of updates $\rules$ the \APOC{} system $\tuple{\rules,\proj({\mathcal
    I},\Sigma)}$ is deadlock-free.
\end{corollary}
\iftoggle{tech_report}
{The proof of the corollary is reported in Appendix \ref{sec:proof}.
}
{
}
\iftoggle{tech_report}
{
\APOC{}s inherit termination from terminating \AIOC{}s.
\begin{corollary}[Termination]\label{cor2}
If the \AIOC{} system $\tuple{\Sigma,\rules,\mathcal{I}}$ terminates and 
$\mathcal{I}$ is connected then the \APOC{} system $\tuple{\rules,\proj({\mathcal I},\Sigma)}$ 
terminates.
\end{corollary}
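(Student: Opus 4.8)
The plan is to prove termination by contradiction, transferring an infinite \APOC{} computation back to an infinite \AIOC{} one and contradicting the hypothesis. The subtlety — and the reason the correctness result (Theorem~\ref{teo:final}) does not suffice by itself — is that termination is defined on \emph{internal} traces, which erase only the update-set changes (labels $\rules$) and retain both silent actions and private communications; weak trace equivalence, instead, additionally erases $\tau$ and the private communications, so it cannot see divergence produced by the auxiliary synchronisations that the projection $\pi$ inserts. I therefore argue in a divergence-sensitive way rather than relying on weak trace equivalence alone.

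First I would reuse the simulation underlying the proof of Theorem~\ref{teo:final}, which relates every reachable \APOC{} network to an \AIOC{} process and, crucially, also covers the \emph{intermediate} networks in which some roles have already performed part of the coordination handshake for a conditional, a loop, or a scope. From this relation I extract a classification of the non-$\rules$ \APOC{} steps into two kinds. A \emph{real} step is one matched by a genuine \AIOC{} transition: a public or private interaction (rule \ruleName{Synch} or \ruleName{Synch-\AdaptRule}), an assignment or guard evaluation (a $\tau$), a while unfolding or exit, or an update application, whose system-level label $\mathcal{I}$ on \ruleName{Lift-\AdaptRule} coincides with the label of the \AIOC{} rule \ruleName{\AdaptRule}. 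An \emph{auxiliary} step is a synchronisation on one of the fresh operations $o^*_n$ introduced by $\pi$ (Table~\ref{table:pi}) for the if, while and scope constructs, together with the higher-order code distributions and the end-of-scope acknowledgements produced by \ruleName{Lead-\AdaptRule} and \ruleName{Lead-\NoAdaptRule}; these have no counterpart in the \AIOC{}, and an auxiliary step leaves the related \AIOC{} process unchanged.

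The core lemma, and the main obstacle, is to show that auxiliary steps cannot diverge on their own: after any real step only finitely many auxiliary steps can fire before the next real step. I would establish this with a well-founded measure on reachable \APOC{} networks — for instance the number of pending prefixes on auxiliary operations $o^*_n$ summed over all roles — and check, clause by clause of Table~\ref{table:pi} and of the scope rules, that every auxiliary step strictly decreases it and that no auxiliary step creates fresh $o^*_n$-prefixes. The delicate cases are exactly those where such prefixes are \emph{regenerated}: the loop projection $\pi(\whileKey{b \at r}{\mathcal{I}}{n},s)$, which re-emits the guard broadcast and the end-of-iteration handshake on each turn, and \ruleName{Lead-\AdaptRule}, which injects freshly projected code $\pi(\mathcal{I}',r_i)$ into the involved roles. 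In both cases the regeneration is \emph{gated} by a real step — a while unfolding, respectively an update application, each itself a genuine \AIOC{} transition — so the newly created prefixes are charged to a preceding real step rather than to an auxiliary one, and the measure stays bounded strictly between real steps. Here I also use that $\freshKey(\mathcal{I},n)$ yields a finite copy of a finite, connected update, so each update injects only boundedly many prefixes.

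Finally I assemble the contradiction. Suppose $\tuple{\rules,\proj({\mathcal I},\Sigma)}$ had an infinite internal trace. By the core lemma, past any point it cannot consist of auxiliary steps alone, since the measure would strictly decrease to a state with no enabled $o^*_n$-prefix and the trace would stop; hence the trace contains infinitely many real steps. Mapping each real step to its matching transition through the simulation yields an infinite sequence of non-$\rules$ \AIOC{} transitions, that is an infinite internal \AIOC{} trace, contradicting the assumption that $\tuple{\Sigma,\rules,{\mathcal I}}$ terminates. Thus every internal \APOC{} trace is finite and the projected system terminates. Connectedness of ${\mathcal I}$ is used throughout to guarantee that the simulation of Theorem~\ref{teo:final} applies and that each applied update stays connected (side condition of \ruleName{Lift-\AdaptRule}), keeping the real/auxiliary correspondence well defined.
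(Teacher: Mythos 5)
Your proposal is correct and follows essentially the same route as the paper, whose entire proof is the one-line observation that ``only a finite number of auxiliary actions are added when moving from \AIOC{}s to \APOC{}s'' --- your real/auxiliary step classification, the well-founded measure on pending $o^*_n$-prefixes (with regeneration charged to the gating while-unfold or update step), and the transfer of infinitely many real steps back through the simulation of Theorem~\ref{teo:final} are exactly the elaboration that one-liner presupposes. Your opening remark about why weak trace equivalence alone is divergence-blind, and hence why the finiteness of auxiliary actions is the load-bearing fact, is a correct and welcome sharpening of what the paper leaves implicit.
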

\begin{proof}
It follows from the fact that only a finite number of auxiliary actions are added when moving from 
\AIOC{}s to \APOC{}s.
\end{proof}

Note that with arbitrary sets of updates no application may terminate. Hence, one has to 
restrict the allowed updates.
}

Moreover, our \AIOC{}s and \APOC{}s are free from races and orphan
messages. A race occurs when the same receive (resp.\ send) may
interact with different sends (resp.\ receives). In our setting, an
orphan message is an enabled send that is never consumed by a
receive. Orphan messages are more relevant in asynchronous systems,
where a message may be sent, and stay forever in the network, since
the corresponding receive operation may never become enabled.
However, even in synchronous systems orphan messages should be
avoided: the message is not communicated since the receive is not
available, hence a desired behaviour of the application never takes
place due to synchronization problems.

Trivially, \AIOC{}s avoid races and orphan messages since send and receive are
bound together in the same construct. Differently, at the \APOC{} level, since
all receive of the form $\cinp{o^?}{x}{r_1}$ in role $r_2$ may interact with
the sends of the form $\cout{o^?}{e}{r_2}$ in role $r_1$, races may happen.
However, thanks to the correctness of the projection, race-freedom holds also
for the projected \APOC{}s.

\begin{corollary}[Race-freedom]\label{cor:race}
For each initial, connected \AIOC{} ${\mathcal I}$, state $\Sigma$, and
set of updates $\rules$, if $\tuple{\rules,\proj({\mathcal I},\Sigma)}
\arro{\mu_1}\cdots\arro{\mu_n} \tuple{\rules',\net}$, then in $\net$ two
sends (resp.\ receives) cannot interact with the same receive (resp.\ send).
\end{corollary}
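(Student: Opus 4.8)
The plan is to obtain race-freedom not from trace equivalence alone, but from the operational correspondence that underlies the proof of Theorem~\ref{teo:final}. What I really need is the structural invariant that this proof maintains: every network $\net$ with $\tuple{\rules,\proj({\mathcal I},\Sigma)} \arro{\mu_1}\cdots\arro{\mu_n} \tuple{\rules',\net}$ is, modulo the in-flight auxiliary synchronisations introduced by the semantics, the projection $\proj({\mathcal J},\Sigma')$ of some reachable \AIOC{} process ${\mathcal J}$, and ${\mathcal J}$ is connected. The only ingredient of this invariant I use below is connectedness of ${\mathcal J}$, which is stable along reductions: ordinary steps leave a connected residual, and rules \ruleName{\AdaptRule}/\ruleName{Lift-\AdaptRule} replace a scope body by a \emph{connected} update, renamed through $\freshKey$ --- a renaming that only refreshes indexes and prefixes operations with $n$, hence preserves both connectedness and the signatures up to the injective tagging $o^? \mapsto \freshops{n}{o^?}$. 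I would state this correspondence as the working lemma and import it from the correctness development.

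Granting the invariant, I argue by contradiction. Assume $\net$ contains a race, say two distinct enabled sends competing for a single receive; the dual case (one send, two receives) is symmetric. By rule \ruleName{Synch} the receive has the form $\cinp{o^?}{x}{r_1}$ and lives in role $r_2$, so both competing sends have the form $\cout{o^?}{e}{r_2}$ and live in the \emph{same} role $r_1$, on the \emph{same} operation $o^?$, towards $r_2$. Within one role, sequential composition enables only its current prefix (rules \ruleName{Sequence}, \ruleName{Seq-end}) and an \code{if}/\code{while} commits to a single continuation, so two distinct actions of $r_1$ can be simultaneously enabled only if they sit in different branches $P'$, $P''$ of a parallel composition $P' \parOpP P''$. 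Pushing this back through $\pi(\cdot,r_1)$ and the invariant, the two sends stem from two interactions of identical signature $\sign{o^?}{r_1}{r_2}$ occurring in the two branches of a parallel subterm ${\mathcal I}' \parOpI {\mathcal I}''$ of ${\mathcal J}$. Then $\sign{o^?}{r_1}{r_2} \in \operations({\mathcal I}') \cap \operations({\mathcal I}'')$, contradicting connectedness for parallel (Definition~\ref{def:connectedness}).

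It remains to exclude races on the operations that have no \AIOC{} counterpart: the auxiliary operations $o^*_n$ emitted by $\pi$ for \code{if}, \code{while} and scopes, and the extended operations $\freshops{n}{o^?}$ produced by $\freshKey$. All of these carry the unique annotation $n$ of their originating construct, and $\freshKey$ picks fresh indexes at each update, so operations coming from distinct constructs are pairwise distinct and cannot be confused. Within the projection of a single construct, each parallel block of auxiliary actions, e.g.\ $\prod_{r' \in \roles(\mathcal{I})\smallsetminus\{r\}} \cout{o^*_n}{\mathit{true}}{r'}$ in the \code{while} case or the blocks of \ruleName{Lead-\AdaptRule}, ranges over \emph{distinct} partner roles $r'$, so no two of its sends (resp.\ receives) address (resp.\ come from) the same role. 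Hence auxiliary communications cannot race either, and the contradiction from the previous paragraph is the only possibility to examine.

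The crux is the working lemma of the first paragraph: fixing precisely the shape of a reachable $\net$ as the projection of a connected ${\mathcal J}$ up to the bookkeeping communications generated by \ruleName{Lead-\AdaptRule} and \ruleName{\AdaptRule}. This is exactly where connectedness must be shown invariant under \AIOC{} reduction and where $\freshKey$'s renaming must be verified to preserve both connectedness and signature-disjointness between a scope's context and its update. Once this correspondence is in place --- as it is inside the proof of Theorem~\ref{teo:final} --- the parallel-connectedness clash of the second paragraph and the index-freshness of the third close the argument by a routine case analysis on the construct from which the racing actions originate.
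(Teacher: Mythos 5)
Your proposal is essentially correct but takes a genuinely different route from the paper's. The paper dispatches the corollary in one line from Lemma~\ref{lemma:executematching}: every \ruleName{Synch} transition executes two \emph{matching} events, i.e., events with the same global index, and by condition {\sc C1} of well-annotatedness (Definition~\ref{defin:synchwa}) each global index is shared by at most one send/receive pair, so each receive has at most one send it can ever interact with, and dually. All the combinatorial work is thus packaged into the event/global-index machinery and into the invariant, maintained by the bisimulation relation of Theorem~\ref{teo:final}, that $\upd(\net)$ of any reachable $\net$ is a well-annotated projection. You instead unfold that package: you import the reachability invariant, reduce a hypothetical race to two same-signature sends in parallel positions within one role, and derive a clash with connectedness for parallel. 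This is in substance the very argument the paper uses one level down, in the proof of condition {\sc C3} inside Lemma~\ref{lemma:IOCwell}, so your route is faithful; what the paper's formulation buys is that the conditions {\sc C2}--{\sc C4} and {\sc C6}, together with Lemma~\ref{lemma:minimalevent}, uniformly cover the cases your argument must handle ad hoc (auxiliary operations, duplicated code), and that ``interact'' is pinned down operationally as ``can be the two events executed by one \ruleName{Synch}''. One point you should make explicit when importing the invariant: the paper only guarantees that reachable \AIOC{}s stay connected \emph{for parallel} --- Lemma~\ref{lemma:synchenabled} expressly allows them to lose connectedness for sequence --- which is fortunately all your clash needs.

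There is one genuine imprecision to repair. Your freshness claim --- ``operations coming from distinct constructs are pairwise distinct'' --- is false as stated. Rule \ruleName{While-unfold} copies the body \emph{with its indexes}, so two constructs with the same index $n$ (hence the same auxiliary operations $o^*_n$, and, for a twice-updated scope, the same extended prefix $\freshops{n}{o^?}$, since $\freshKey(\mathcal{I},n)$ tags operations with the scope's own index $n$ rather than an index fresh per application) coexist in a reachable network. The paper introduces condition {\sc C6} precisely for this situation: same-index copies have different global indexes and are causally ordered through the guard event of the enclosing while, so they are never simultaneously enabled. Your own paragraph-two observation (two simultaneously enabled actions of one role must lie in different branches of a parallel composition, whereas while-generated copies are sequentially separated) closes this case, but you must invoke it there instead of freshness. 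With that substitution, and with the working lemma of your first paragraph stated and proved (it is implicit in the relation $R$ of the proof of Theorem~\ref{teo:final}, which requires $\upd(\net)=\proj(\mathcal{I},\Sigma)$ with $\upd(\net)$ well-annotated), your argument goes through.
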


As far as orphan messages are concerned, they may appear in infinite
\APOC{} computations since a receive may not become enabled due to an
infinite loop. However, as a corollary of trace equivalence, we have
that terminating \APOC{}s are orphan message-free.

\begin{corollary}[Orphan message-freedom]\label{cor:orphan}
 For each initial, connected \AIOC{} ${\mathcal I}$, state $\Sigma$, and
set of updates $\rules$, if $\tuple{\rules,\proj({\mathcal I},\Sigma)}
 \arro{\mu_1}\cdots\arro{\tick} \tuple{\rules',\net}$, then $\net$ contains no sends.
\end{corollary}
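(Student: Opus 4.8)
The plan is to reduce the statement to a purely structural characterisation of which \APOC{} configurations can emit the termination label $\tick$, and then to observe that no such configuration can contain a pending send. Note first that only the last transition matters: whatever intermediate steps $\mu_1,\dots$ were taken to reach the state $\tuple{\rules,\net_0}$ immediately preceding the final move, the conclusion concerns only the $\tick$-labelled transition $\tuple{\rules,\net_0}\arro{\tick}\tuple{\rules',\net}$ that produces $\net$. So the whole argument rests on analysing derivations of $\tick$-transitions. The key lemma I would establish is: for every role, if $\role{P,\Gamma}{r}\arro{\tick}\role{P',\Gamma}{r}$ is derivable from the rules of Table~\ref{table:apoc-proc}, then neither $P$ nor $P'$ contains any send action — neither a first-order send $\cout{o^?}{e}{r'}$ nor a higher-order send $\cout{o^?}{X}{r'}$.

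I would prove this lemma by induction on the derivation of the $\tick$-transition, inspecting Table~\ref{table:apoc-proc}. The only premiseless rule whose label is $\tick$ is \ruleName{One}, giving $\role{\one,\Gamma}{r}\arro{\tick}\role{\zero,\Gamma}{r}$; since both $\one$ and $\zero$ are send-free, this is the base case. The only rules propagating a $\tick$ label through a premise are \ruleName{Seq-end} and \ruleName{Par-end}: for $P\seqOp Q$ rule \ruleName{Seq-end} forces $P\arro{\tick}P''$ and $Q\arro{\tick}Q'$, while for $P\parOpP Q$ rule \ruleName{Par-end} forces $P\arro{\tick}P'$ and $Q\arro{\tick}Q'$; in either case the immediate subprocesses themselves emit $\tick$, so the inductive hypothesis applies to each and transfers send-freedom to the composite and its residual (in the sequential case the residual is exactly $Q'$, which is send-free by hypothesis). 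Every remaining role rule — \ruleName{In}, \ruleName{Out}, \ruleName{Out-\AdaptRule}, \ruleName{Assign}, the conditional and loop rules, and the scope rules \ruleName{Lead-\AdaptRule}, \ruleName{Lead-\NoAdaptRule}, \ruleName{\AdaptRule}, \ruleName{\NoAdaptRule} — carries a label different from $\tick$, so any process whose outermost operator is a receive, a send, an assignment, a conditional, a loop, or a scope can never perform a $\tick$-transition. This closes the induction and, in particular, shows that a pending send is \emph{incompatible} with the ability to terminate.

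Next I would lift the lemma to the system level. The final transition $\tuple{\rules,\net_0}\arro{\tick}\tuple{\rules',\net}$ is derived from Table~\ref{table:apoc-sys}, and among those rules only \ruleName{Lift} and \ruleName{Ext-Par-End} can carry the label $\tick$: rule \ruleName{Ext-Parallel} explicitly requires $\eta\neq\tick$, whereas \ruleName{Synch}, \ruleName{Synch-\AdaptRule}, \ruleName{Lift-\AdaptRule} and \ruleName{Change-Updates} produce labels of other shapes (so incidentally $\rules'=\rules$). Hence, by induction on the structure of $\net_0$ — with \ruleName{Ext-Par-End} splitting a parallel network into two sub-networks that both emit $\tick$, and \ruleName{Lift} bottoming out at a single role where the lemma applies — every constituent role of $\net_0$ performs a $\tick$-transition and the corresponding residual process in $\net$ is send-free. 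Therefore $\net$ contains no sends, as required.

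The only delicate point is the book-keeping across the two-level derivation (role rules in Table~\ref{table:apoc-proc} versus system rules in Table~\ref{table:apoc-sys}) and checking that the induction on $\net_0$ genuinely reaches every role; the argument is otherwise direct, because the entire force of the corollary comes from the fact that parallel termination (\ruleName{Par-end} and \ruleName{Ext-Par-End}) is \emph{synchronous}, so a single unmatched send anywhere in the network would block the $\tick$ transition. Conceptually this structural fact is the \APOC{}-side mirror of the \AIOC{} termination rule \ruleName{End}; combined with the correctness result (Theorem~\ref{teo:final}), which guarantees that the $\tick$-terminating computations of $\proj({\mathcal I},\Sigma)$ are exactly those of the source \AIOC{}, it yields orphan-message freedom precisely for the genuinely terminating runs of the application.
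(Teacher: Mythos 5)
Your proposal is correct and follows essentially the same route as the paper, whose entire proof of Corollary~\ref{cor:orphan} is a case analysis on the rules that can derive a $\tick$-labelled transition (\ruleName{One}, \ruleName{Seq-end}, \ruleName{Par-end} at the role level, \ruleName{Lift} and \ruleName{Ext-Par-End} at the system level) with all cases deemed easy. You merely spell out what the paper leaves implicit --- the send-freedom lemma proved by induction on the derivation and its lift through the two-level semantics --- and your closing observation that initiality, connectedness, and Theorem~\ref{teo:final} are not actually needed for this purely structural fact is accurate.
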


\section{Related works and discussion}\label{sec:related} This paper presents
an approach for the dynamic update of distributed applications. Its
distinctive trait is that it guarantees the absence of communication deadlocks
and races by construction for the running distributed application, even in
presence of updates that were unknown when the application was started. More
generally, the \APOC{} is compliant with the \AIOC{} description, and inherits
its properties.
%

%
%

The two approaches closest to ours we are aware of are in the area of
multiparty session
types~\cite{hondaESOPext,hondaPOPL,POPLmontesi,Castagna}, and deal
with dynamic software updates~\cite{DSUtypes} and with monitoring of
self-adaptive systems~\cite{Dezani}.
The main difference between~\cite{DSUtypes} and our approach is that
\cite{DSUtypes} targets concurrent applications which are not
distributed.  Indeed, it relies on a check on the global state of the
application to ensure that the update is safe. Such a check cannot be
done by a single role, thus is impractical in a distributed setting.
Furthermore, the language in
\cite{DSUtypes} is much more constrained than ours, e.g., requiring
each pair of participants to interact on a dedicated pair of channels,
and assuming that all the roles not involved in a choice behave the
same in the two branches.
%
The approach in \cite{Dezani} is very different from ours, too. In particular,
in \cite{Dezani} all the possible behaviours are available since the very
beginning, both at the level of types and of processes, and a fixed adaptation
function is used to switch between them. This difference derives from the
distinction between self-adaptive applications, as they discuss, and
applications updated from the outside, as in our case.

We also recall \cite{DiGiustoP13}, which uses types to ensure safe
adaptation. However, \cite{DiGiustoP13} allows updates only when no
session is active, while we change the behaviour of running \AIOC{}s.

Our work is also similar to~\cite{MontesiCompChor}, which deals with
compositionality inside multiparty session types. However,
\cite{MontesiCompChor} only allows static parallel composition, while we
replace a term inside an arbitrary context at runtime. 
%

\iftoggle{tech_report}{
Extensions of multiparty session types with error
handling~\cite{carboneEXC,giachinoESCAPE} share with us the difficulties in
coordinating the transition from the expected pattern to an alternative
pattern, but in their case the error recovery pattern is known since the very
beginning, thus considerably simplifying the analysis. 

We briefly compare now
with works that exploit choreographic descriptions for adaptation, but with
very different aims.  For instance, \cite{ICSOC07} defines rules for adapting
the specification of the initial requirements for a choreography, thus keeping
the requirements up-to-date in presence of run-time changes. Our approach is
in the opposite direction: we are not interested in updating the system
specification tracking system updates, but in programming and ensuring
correctness of adaptation itself.

Other formal approaches to adaptation represent choreographies as
annotated finite state automata. In \cite{DYCHOR06} choreographies are used
to propagate protocol changes to the other peers, while \cite{SCC09}
presents a test to check whether a set of peers obtained from a
choreography can be reconfigured to match a second
one. Differently from ours, these works only provide change
recommendations for adding and removing message sequences.
}{ 
}


%
%

In principle, our update mechanism can be used to inject guarantees of
freedom from deadlocks and races into existing approaches to
adaptation, e.g., the ones in the surveys~\cite{adapt-survey12,ADAPTGhezzi}.
However, this task is cumbersome, due to the
huge number and heterogeneity of those approaches, and since for each of them
the integration with our techniques is far from trivial. Nevertheless, we
already started it. Indeed, in~\cite{SLE}, we apply our technique to the
approach described in~\cite{JORBApaper}.
While applications in~\cite{JORBApaper} are not distributed and there are no
guarantees on the correctness of the application after adaptation,
applications in~\cite{SLE}, based on the same adaptation mechanisms, are
distributed and free from deadlocks and races by construction.
%

Furthermore, on the website~\cite{AIOCJ}, we give examples of how to integrate
our approach with distributed~\cite{distributedAOP} and
dynamic~\cite{YangCSSSM02} Aspect-Oriented Programming (AOP) and with
Context-Oriented Programming (COP)
\cite{cop}. In general, we can deal with cross-cutting concerns like logging
and authentication, typical of AOP, viewing pointcuts as empty scopes and
advices as updates. Layers, typical of COP, can instead be defined by updates
which can fire according to contextual conditions.
We are also planning to apply our techniques to multiparty session types
\cite{hondaESOPext,hondaPOPL,POPLmontesi,Castagna}. The main challenge here is
to deal with multiple interleaved sessions. An initial analysis of the problem
is presented in~\cite{WS-SEFM}.

\bibliographystyle{abbrv}
\bibliography{biblio}

\newpage
\appendix


\iftoggle{tech_report}{

\section{Projecting the \AIOC{} for the Buying scenario on \buyer, \seller, and \bank. }
\label{app:es}

This section shows the projections of the \AIOC{} process defined in Listing \ref{es:bank} on the 
\bank, \buyer, and \seller\ roles.

In order to define the projection we first have to annotate the \AIOC{}. This leads to the 
following annotated \AIOC{}.

\begin{minipage}{\columnwidth}
  \begin{lstlisting}[mathescape=true,numbersep=1em,xleftmargin=2em,caption=Annotated 
\AIOC { } process for Buying Scenario.]
1 : price_ok@buyer = false;
2 : continue@buyer = true;
3 : while ( !price_ok and continue )@buyer {
 4 : b_prod@buyer = getInput();
 5 : priceReq : buyer( b_prod ) $\rightarrow$ seller( s_prod );
 6 : scope @seller {
   7 : s_price@seller = getPrice( s_prod );
   8 : offer : seller( s_price ) $\rightarrow$ buyer( b_price )
 };
 9: price_ok@buyer = getInput();
 10 : if ( !price_ok )@buyer {
  11 : continue@buyer = getInput()} };
12 : if ( price_ok )@buyer {
 13 : payReq : seller( payDesc( s_price ) ) $\rightarrow$ bank( desc );
 14 : scope @bank {
   15 : payment_ok@bank = true;
   16 : pay : buyer( payAuth( b_price ) ) $\rightarrow$ bank( auth );
   ... // code for the payment
 };
 17 : if ( payment_ok )@bank {
   18 : confirm : bank( null ) $\rightarrow$ seller( _ ) |
   19 : confirm : bank( null ) $\rightarrow$ buyer( _ )
 } else {
   20 : abort : bank( null ) $\rightarrow$ buyer( _ )
 } }
\end{lstlisting}
\end{minipage}

We are ready to compute the projection on the 
\bank, \buyer, and \seller\ roles respectively.
To improve readability, we omit some 
$\one$
processes that have no impact on the behaviour.

\begin{minipage}{\columnwidth}
\begin{lstlisting}[mathescape=true,caption={Bank \APOC{} Process},label=lst:use_caseAPOCbank]
o$^*_{12}$ : x$_{12}$ from buyer;
if ( x$_{12}$ ) {
  payReq : desc from seller;
  $14$ : scope @bank {
        payment_ok = true;
        pay : auth from buyer;
        ... // code for the payment
        }
      roles { buyer, bank };
  if ( payment_ok ) { 
    { o$^*_{17}$ : true to seller | o$^*_{17}$ : true to buyer };
    { confirm : null to seller | confirm : null to buyer }
  } else {
    { o$^*_{17}$ : true to seller | o$^*_{17}$ : true to buyer };
    abort : null to buyer } }
\end{lstlisting}
\end{minipage}

\begin{minipage}{\columnwidth}
\begin{lstlisting}[mathescape=true,caption={Buyer \APOC{} 
Process},label=lst:use_caseAPOCbuyer]
price_ok = false; continue = true;
while ( not( price_ok ) and continue ) {
  o$^*_{3}$ : true to seller;
  b_prod = getInput();
  priceReq : b_prod to seller;
  6 : scope @seller {      
      offer : b_price from seller }
  price_ok = getInput();
  if ( not( price_ok ) ) { continue = getInput() };
  o$^*_{3}$ : _ from seller };
o$^*_{3}$ : false to seller;
if ( price_ok ) {
  { o$^*_{12}$ : true to seller | o$^*_{12}$ : true to bank };
  14 : scope payment@bank {
    pay : payAuth( b_price ) to bank };
  o$^*_{17}$ : x$_{17}$ from bank;
  if ( x$_{17}$ ) { confirm : _ from bank
  } else { abort : _ from bank } }
\end{lstlisting}
\end{minipage}

\begin{minipage}{\columnwidth}
\begin{lstlisting}[mathescape=true,caption={Seller \APOC{} 
Process},label=lst:use_caseAPOCseller]
o$^*_{3}$ : x$_{3}$ from buyer;
while ( x$_{3}$ ) {
  priceReq : s_prod from buyer;
  6 : scope @seller {      
        s_price = getPrice( s_prod );
        offer : s_price to buyer }
      roles { seller, buyer };  
  o$^*_{3}$ : ok to buyer;
  o$^*_{3}$ : x$_{3}$ from buyer };
o$^*_{12}$ : x$_{12}$ from buyer;
if ( x$_{12}$ ) {
  payReq : payDesc( s_price ) to bank;
  o$^*_{17}$ : x$_{17}$ from bank;
  if ( x$_{17}$ ) { confirm : _ from bank } }
\end{lstlisting}
\end{minipage}

 \section{Running example of scope update}
\label{app:running_example}

This section shows an example of how updates are performed. We consider an excerpt of the 
choreography of the Buying Scenario (Listing \ref{es:bank}) simulating the update of the scope in 
Lines 5-8.
To this end, we assume that the seller direction decides to stimulate business by 
using the update in Listing \ref{rule_price_inquiry}.



Let us consider both the \AIOC{} and
the \APOC{} level, dropping some $\one$s to improve readability.
%
Assume that the \lstinline{buyer} has just sent the name of the product
(s)he is interested in to the \lstinline{seller} (Line 4) and consider the 
following annotated \AIOC{}:

\begin{lstlisting}[mathescape=true]
6 : scope @seller {
  7 : s_price@seller = getPrice( s_prod );
  8 : offer : seller( s_price ) ) $\rightarrow$ buyer( b_price )
}
\end{lstlisting}

At the \AIOC{} level,
the scope
\lstinline{price-inquiry} is atomically substituted with the new code with fresh indexes.  Then, 
the \AIOC{} reduces 
to:

\begin{lstlisting}[mathescape=true]
21 : cardReq : seller( null ) $\rightarrow$ buyer( _ );
22 : card_id@buyer = getInput();
23 : card : buyer( card_id ) $\rightarrow$ seller( buyer_id );
24 : if isValid( buyer_id )@seller {
  25 : s_price@seller = getPrice( s_prod ) * 0.9    
} else {
  26 : s_price@seller = getPrice( s_prod )
};
27 : offer : seller( s_price ) $\rightarrow$ buyer( b_price )
\end{lstlisting}

At the \APOC{} level, this operation is not atomic, since the
scope is distributed between two participants, and the
coordination protocol is explicitly represented.

To clarify this point, let us consider the \APOC{} process $P_b$ below,
obtained by projecting the \AIOC{} of the update in Listing~\ref{rule_price_inquiry} on 
the 
\lstinline{buyer} role.
\begin{lstlisting}[mathescape=true]
$P_{b}$ := cardReq : null from seller;
  card_id = getInput();
  card : card_id to seller;
  offer : b_price from seller
\end{lstlisting}
At the \APOC{} level, the first step of the update protocol is
performed by the \lstinline{seller}. The \APOC{} description of the
\lstinline{seller} before the update is:

\begin{lstlisting}[mathescape=true]
6 : scope @seller {
      s_price = getPrice( s_prod );
      offer : s_price to buyer }
      roles { seller, buyer } 
\end{lstlisting}
When the scope construct is enabled, the \lstinline{seller}, being the
coordinator of the update, decides to update using the code in Listing~\ref{rule_price_inquiry}. 
Thus, the \lstinline{seller} reduces to:
\begin{lstlisting}[mathescape=true]
o$^*_{6}$ : $P_{b}$ to buyer;
cardReq : null to buyer;
card : buyer_id from buyer;
if isValid( buyer_id ) {
  s_price = getPrice( s_prod ) * 0.9
  } else { s_price@seller = getPrice( s_prod ) };
offer : s_price to buyer;
o$^*_{6}$ : _ from buyer;
\end{lstlisting}
First, the \lstinline{seller} requires the \lstinline{buyer} to update, sending to him the new 
\APOC{} fragment to execute.
Then, the \lstinline{seller} starts to execute its own updated
\APOC{}. When the new \APOC{} code is terminated, (s)he waits for the
notification of the termination of the \APOC{} fragment executed by the
\lstinline{buyer}.

As far as the \lstinline{buyer} is concerned, the \APOC{} before
the update is as follows.
\begin{lstlisting}[mathescape=true]
6 : scope @seller {
  offer : s_price from seller
}
\end{lstlisting}
The scope construct in the \lstinline{buyer} waits for the arrival of
a message from the coordinator of the update. In case an update has to
be applied, this message contains the \APOC{} fragment to
execute. Once this message is received, the scope construct is
replaced by the received \APOC{} fragment, followed by the
notification of termination to the \lstinline{seller}.
\begin{lstlisting}[mathescape=true]
$P_{b}$ ; o$^*_{6}$ : ok to seller
\end{lstlisting}

Let us now consider the case where the application is not updated.  At
the \AIOC{} level, the scope construct simply disappears, and its body
becomes enabled.
\begin{lstlisting}[mathescape=true]
s_price@seller = getPrice( s_prod );
offer : seller( s_price ) ) $\rightarrow$ buyer( b_price )
\end{lstlisting}

As before, at the \APOC{} level this operation is not atomic.  In
particular, the \APOC{} process of the \lstinline{seller} becomes as
follows.

\begin{lstlisting}[mathescape=true]
o$^*_{6}$ : no to buyer;
s_price = getPrice( s_prod );
offer :  s_price to buyer;
o$^*_{6}$ : _ from buyer;
\end{lstlisting}

Here the \lstinline{seller} notifies to the \lstinline{buyer} that
no update is performed, and then proceeds with the normal
execution. Then, as before, (s)he waits for the notification of the
termination of the body of the scope from the \lstinline{buyer}.
Dually, the \lstinline{buyer} waits for the arrival of the message. If
the message states that no update is needed, the scope construct
is removed and its body executed. At the end, a notification of
termination is sent to the coordinator of the update:  

\begin{lstlisting}[mathescape=true]
offer :  b_price from seller;
o$^*_{6}$ : ok to seller;
\end{lstlisting}

 \section{Proof of Theorem~\ref{teo:compl}}
\label{app:complexity}

In order to prove the bound on the complexity of the connectedness
check we use the lemma below, showing that the checks to verify the
connectedness for sequence for a single sequence operator can be
performed in linear time on the size of the sets generated by
$\transI$ and $\transF$.

\begin{lemma}\label{lemma:limit_connectedness}
Given $S, S'$ sets of multisets of two elements, checking if $\forall s \in S
\ . \ \forall s'\in S' \ . \ s \cap s' \neq \emptyset$ can be done in $O(n)$
steps, where $n$ is the maximum of $|S|$ and $|S'|$.
\end{lemma}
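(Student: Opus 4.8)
The plan is to restate the condition in graph-theoretic terms and then exhibit a genuinely linear algorithm. Regard each two-element multiset in $S$ as an edge of a multigraph $G$ on the vertex set of elements occurring in $S$; an element $s' = \{c,d\}$ of $S'$ satisfies $s \cap s' \neq \emptyset$ for all $s \in S$ precisely when $\{c,d\}$ is a size-two vertex cover of $G$. Hence the condition $\forall s \in S\,\forall s' \in S'.\ s \cap s' \neq \emptyset$ is equivalent to: every edge of $S'$ is a vertex cover of $G$. The whole check therefore reduces to testing, for each $s' \in S'$, whether $s'$ covers $G$, returning false as soon as one $s'$ fails, since such a failure exhibits an $s \in S$ disjoint from $s'$.

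First I would dispose of the cases $S = \emptyset$ or $S' = \emptyset$, where the condition holds vacuously. Otherwise I pick one edge $\{c,d\} \in S'$ and verify in $O(|S|)$ that it meets every $s \in S$; if it does not, a disjoint pair has been found and the answer is false. Once this succeeds, $\{c,d\}$ is a vertex cover of $G$, so every edge of $G$ is incident to $c$ or to $d$. In a single $O(|S|)$ pass I would then compute a constant-size summary of $G$: the set $N_c$ of \emph{partners} $x \notin \{c,d\}$ with $\{c,x\} \in S$, the analogous set $N_d$, together with three boolean flags recording whether the edge $\{c,d\}$ and the self-loops $\{c,c\}$, $\{d,d\}$ occur in $S$. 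Crucially, $N_c$ and $N_d$ need only be stored up to a cap of two distinct elements: if either accumulates three or more distinct partners I record an \emph{overflow} marker, since no two-element set can then contain it, and each update stays $O(1)$.

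With this summary, testing whether an arbitrary $s' = \{u,v\}$ covers $G$ becomes an $O(1)$ computation: $s'$ covers $G$ iff (i) every partner in $N_c$ lies in $\{u,v\}$ unless $c \in \{u,v\}$, (ii) the symmetric condition holds for $N_d$ and $d$, and (iii) the flagged edges are covered, with $\{c,d\}$ forcing $\{u,v\} \cap \{c,d\} \neq \emptyset$ and each self-loop forcing the corresponding vertex into $\{u,v\}$. The subset tests $N_c \subseteq \{u,v\}$ are answered in constant time because an overflow marker makes them immediately false, while a capped set has at most two elements to compare. Running this $O(1)$ test across all of $S'$ costs $O(|S'|)$, so the total is $O(|S| + |S'|) = O(n)$, and the answer is true exactly when no $s'$ fails.

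The main obstacle is purely the correctness of this constant-time characterization. I must check that the case analysis on whether $s'$ contains $c$, $d$, both, or neither, combined with the capped partner sets and the self-loop and $\{c,d\}$ flags, exactly captures the property of covering every edge of $G$, including the degenerate coincidences among $c,d,u,v$ and the self-loops. Getting the capping argument right, namely that discarding partners beyond the second never changes the outcome of a subset test against a two-element set, is the delicate point; everything else is routine bookkeeping, and it is the reduction to vertex covers that guarantees only a bounded summary of $G$, rather than all of $S$, need be consulted per element of $S'$.
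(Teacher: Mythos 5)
Your proposal is correct, and it takes a genuinely different route from the paper. The paper's proof first assumes w.l.o.g.\ $|S| \leq |S'|$ and splits on a magic threshold: for $|S| \leq 9$ it brute-forces all comparisons (constant times $O(n)$), and for $|S| > 9$ it uses a pigeonhole argument (more than $9$ multisets of size two force at least four distinct underlying elements) together with a case analysis on which shapes of multisets occur in $S$, concluding that the check can succeed only when a single common element pierces every multiset of both $S$ and $S'$ --- a property verifiable by one linear scan. You instead anchor on an arbitrary element $\{c,d\}$ of $S'$, verify in $O(|S|)$ that it is a vertex cover of the multigraph with edge multiset $S$ (failing fast otherwise), and then exploit the cover structure to compress $S$ into a constant-size summary: partner sets $N_c, N_d$ capped at two distinct elements with an overflow marker, plus flags for the edge $\{c,d\}$ and the self-loops. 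Your capping argument is sound --- once three distinct partners exist, no two-element $s'$ avoiding the anchor vertex can absorb them, so the subset test correctly short-circuits to false --- and the resulting $O(1)$ cover test per element of $S'$ gives $O(|S|+|S'|) = O(n)$ overall, with the degenerate cases ($c=d$, $u=v$, coincidences with $c,d$) handled by the same membership tests. What each approach buys: the paper's argument yields a clean global characterization in the large case (``all multisets share one common element'') but needs the w.l.o.g.\ ordering, the threshold, and a structural case analysis on $S$; yours is a uniform, directly implementable algorithm with no threshold and no case split on the global shape of $S$, at the price of a slightly fussier local case analysis in the constant-time cover test. Incidentally, your route also sidesteps a small blemish in the paper's counting (the number of size-two multisets over three distinct elements is $6$, not $9$, though any constant threshold makes that proof go through).
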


\begin{proof}
W.l.o.g.~we can assume that $|S| \leq |S'|$. If $|S| \leq 9$ then the check
can be performed in $O(n)$ by comparing all the elements in $S$ with all the
elements in $S'$. If $|S| > 9$ then at least 4 distinct elements appear in the
multisets in $S$ since the maximum number of multisets with cardinality 2
obtained by $3$ distinct elements is $9$. In this case the following cases
cover all the possibilities:
\begin{itemize}
 \item there exist distinct elements $a,b,c,d$ s.t.~$\{a,b\},\{a,c\}$, and
$\{a,d\}$ belong to $S$. In this case for the check to succeed all the
multisets in $S'$ must contain $a$, otherwise the intersection of the multiset
not containing $a$ with one among the multisets $\{a,b\},\{a,c\}$, and
$\{a,d\}$ is empty. Similarly, since $|S'| > 9$, for the check to succeed all
the multisets in $S$ must contain $a$. Hence, if $\{a,b\},\{a,c\}$, and
$\{a,d\}$ belong to $S$ then the check succeeds iff $a$ belongs to all the
multisets in $S$ and in $S'$.
 \item there exist distinct elements $a,b,c,d$ s.t.~$\{a,b\}$ and $\{c,d\}$
belong to $S$. In this case the check succeeds only if $S'$ is a subset of $\{
\{a,c\},\{a,d\},\{b,c\},\{b,d\} \}$. Since $|S'| > 9$ the check can never
succeed.
 \item there exist distinct elements $a,b,c$ s.t.~$\{a,a\}$ and $\{b,c\}$
belong to $S$. In this case the check succeeds only if $S'$ is a subset of $\{
\{a,b\},\{a,c\} \}$. Since $|S'| > 9$ the check can never succeed.
 \item there exist distinct elements $a,b$ s.t.~$\{a,a\}$ and $\{b,b\}$ belong
to $S$. In this case the check succeeds only if $S'$ is a subset of $\{
\{a,b\} \}$. Since $|S'| > 9$ the check can never succeed.
\end{itemize}
Summarising, if $|S| > 9$ the check can succeed iff all the multisets in $S$
and in $S'$ share a common element. The existence of such an element can be
verified in time $O(n)$.
\end{proof}

\setcounter{theorem}{0}
\begin{theorem}[Connectedness-check complexity] \mbox{} \\
The connectedness of a \AIOC{} process ${\mathcal I}$ can be checked in
time $O(n^2 \log(n))$, where $n$ is the number of nodes in the abstract syntax
tree of ${\mathcal I}$.
\end{theorem}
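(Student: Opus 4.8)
The plan is to turn the connectedness check into a single bottom-up pass over the abstract syntax tree of $\mathcal{I}$. During this pass I would, for every subterm $\mathcal{J}$, compute the auxiliary sets $\transI(\mathcal{J})$, $\transF(\mathcal{J})$, $\operations(\mathcal{J})$, and $\roles(\mathcal{J})$ (the last being needed only to evaluate $\transF$ on scopes), and, at every node of the form $\mathcal{I}' \seqOp \mathcal{I}''$ or $\mathcal{I}' \parOpI \mathcal{I}''$, perform the corresponding local test from Definition~\ref{def:connectedness}. The first thing to establish is a uniform size bound: for every subterm $\mathcal{J}$ the sets $\transI(\mathcal{J})$, $\transF(\mathcal{J})$, $\operations(\mathcal{J})$, and $\roles(\mathcal{J})$ have cardinality $O(n)$, since each is assembled from the interactions, assignments, and scopes occurring in the subtree rooted at $\mathcal{J}$, of which there are at most $n$. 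The defining equations in Table~\ref{table:tri} (together with the inductive definitions of $\roles$ and $\operations$) give each set at a node as a union of the sets at its children, plus, for scopes, a union indexed by $\roles(\mathcal{I})$; all of these are expressible in terms of already-computed children's sets.

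Next I would account for the cost. Representing every set in a sorted form, the union-with-deduplication required at each node can be carried out in $O(n \log n)$ (for instance by sorting the combined elements, or by merging the children's sorted representations and removing duplicates). Since the tree has $O(n)$ nodes, computing all of the auxiliary sets costs $O(n^2 \log n)$ in total. I would then run the two families of checks. At a sequence node $\mathcal{I}' \seqOp \mathcal{I}''$ the test asks whether every $\pair{r_1}{r_2} \in \transF(\mathcal{I}')$ meets every $\pair{s_1}{s_2} \in \transI(\mathcal{I}'')$; since these are sets of two-element multisets of roles, Lemma~\ref{lemma:limit_connectedness} decides this in $O(n)$, so the $O(n)$ sequence nodes contribute only $O(n^2)$. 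At a parallel node $\mathcal{I}' \parOpI \mathcal{I}''$ the test is the disjointness $\operations(\mathcal{I}') \cap \operations(\mathcal{I}'') = \emptyset$, which on two sorted sets of size $O(n)$ costs $O(n \log n)$, so the $O(n)$ parallel nodes contribute $O(n^2 \log n)$. Summing the three contributions, the dominant term is $O(n^2 \log n)$, which is the claimed bound.

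The step I expect to be the main obstacle is the cost accounting rather than any conceptual difficulty. A naive recursive re-evaluation of $\transI$ and $\transF$ looks as if it could blow up, so the argument must make explicit both that each auxiliary set stays of size $O(n)$ and that the per-node work for building and comparing these sets stays within $O(n \log n)$. The crucial leverage here is Lemma~\ref{lemma:limit_connectedness}: the sequence condition, read literally, compares every pair in $\transF(\mathcal{I}')$ against every pair in $\transI(\mathcal{I}'')$ and would therefore seem quadratic in the set sizes; the lemma collapses each such test to linear time, ensuring the sequence checks do not dominate and that the overall complexity is governed by the $O(n \log n)$ per-node set manipulations over $O(n)$ nodes.
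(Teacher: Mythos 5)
Your proposal is correct and takes essentially the same route as the paper's proof: a bottom-up pass computing $\transI$, $\transF$, $\operations$ (and $\roles$) at every AST node with $O(n\log n)$ per-node set operations (the paper uses balanced trees where you use sorted lists, an immaterial difference), $O(n)$-time sequence checks via Lemma~\ref{lemma:limit_connectedness}, and $O(n\log n)$ disjointness checks at the parallel nodes, with the same $O(n^2\log(n))$ total. The one thing you assume rather than establish is that lemma itself --- the only non-routine ingredient, without which each sequence check is quadratic and the overall bound degrades to $O(n^3)$ --- which the paper proves by a combinatorial case analysis showing that for $|S| > 9$ the check can succeed only if all multisets in $S$ and $S'$ share one common element, a condition verifiable in linear time.
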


\begin{proof}
To check the connectedness of ${\mathcal I}$ we first compute the
values of the functions $\transI$, $\transF$, and $\operations$ for
each node of the abstract syntax tree (AST). We then check for each
sequence operator whether connectedness for sequence holds and for each
parallel operator whether connectedness for parallel holds.

The functions $\transI$ and $\transF$ associate to each node a set of
pairs of roles.  Assuming an implementation of the data set structure
based on balanced trees (with pointers), $\transI$ and $\transF$ can
be computed in constant time for interactions, assignments, $\one$,
$\zero$, and sequence constructs. For while and scope constructs
computing $\transF({\mathcal I}')$ requires the creation of balanced
trees having an element for every role of ${\mathcal I}'$. Since the
roles are $O(n)$, $\transF({\mathcal I}')$ can be computed in
$O(n\log(n))$.  For parallel and if constructs a union of sets is
needed. The union costs $O(n\log(n))$ since each set generated by
$\transI$ and $\transF$ contains at maximum $n$ elements.

The
computation of $\operations$ can be performed in $O(1)$ except
for the parallel, sequence, and if constructs, where the union of sets
costs $O(n\log(n))$.
Since the AST contains $n$ nodes, the computation of the sets generated by
$\transI$, $\transF$, and $\operations$ can be performed in $O(n^2\log(n))$.

To check connectedness for sequence we have to verify that for each node
${\mathcal I}'\seqOp{\mathcal I}''$ of the AST $\forall
\pair{r_1}{r_2} \in \transF({\mathcal I}'), \forall
\pair{s_1}{s_2} \in \transI({\mathcal I}'') \; . \;
\{r_1,r_2\} \cap \{s_1,s_2\} \neq \emptyset$.
Since $\transF({\mathcal I}')$ and $\transI({\mathcal I}'')$ have $O(n)$
elements, thanks to Lemma \ref{lemma:limit_connectedness}, checking if
${\mathcal I}'\seqOp{\mathcal I}''$ is connected for sequence costs $O(n)$.
Since in the AST there are less than $n$ sequence operators, checking the
connectedness for sequence on the whole AST costs $O(n^2)$.

To check connectedness for parallel we have to verify that for each node
${\mathcal I}' \parOpI {\mathcal I}''$ of the AST we have that
$\operations({\mathcal I}') \cap \operations({\mathcal I}'') = \emptyset$.
Since $\operations({\mathcal I}')$ and $\operations({\mathcal I}'')$ have
$O(n)$ elements, checking if their intersection is empty costs $O(n\log(n))$.
Since in the AST there are less than $n$ parallel operators, checking the
connectedness for parallel on the whole AST costs $O(n^2\log(n))$.

The complexity of checking the connectedness of the entire AST is therefore
limited by the cost of computing functions $\transI$, $\transF$, and
$\operations$, and of checking the connectedness for parallel. All these
activities have a complexity of $O(n^2\log(n))$.
\end{proof}

\section{Proof of Theorem~\ref{teo:final}}\label{sec:proof}
This section presents the proof of our main result, Theorem~\ref{teo:final},
including various auxiliary definitions and lemmas.\\

The proof strategy consists in defining a notion of bisimilarity
(Definition~\ref{def:bisim}) which implies weak trace equivalence
(Lemma~\ref{lemma:bis2trsynch}) and then providing a suitable bisimulation
relating each well-annotated connected \AIOC{} system with its projection.
Such a relation is not trivial, since events which are atomic in the \AIOC{},
e.g., the evaluation of the guard of a conditional (including removing the
discarded branch), are no more atomic in the
\APOC{}. In the case of conditional, the \AIOC{} transition is mimicked by a
conditional performed by the role evaluating the guard, a set of auxiliary
communications sending the value of the guard to the other roles, and local
conditionals based on the received value.  These mismatches are taken care by
function $\upd$ (Definition~\ref{def:upd}). This function needs also to remove
the auxiliary communications allowing to synchronise the termination of
scopes, which have no counterpart after the \AIOC{} scope has been consumed.
However, we have to record their impact on the possible executions. Thus we
define an event structure for \AIOC{} (Definition~\ref{def:ev}) and one for
\APOC{} (Definition~\ref{def:apocev}) and we show that the two are related
(Lemma~\ref{lemma:ev}).\\

In the main part, we defined annotated \AIOC{}s
(Definition~\ref{def:annAIOC}). Here we also need to speak about their
semantics. Indeed, annotated \AIOC{}s trivially inherit the semantics of
\AIOC{}s, since indexes are just decorations, with no effect on the behaviour.
The only tricky points are in rule \ruleName{Interaction}, where the
assignment inherits the index from the interaction, in rule
\ruleName{While-unfold}, where the body is copied together with its indexes,
and in rule \ruleName{\AdaptRule}, where one has to ensure that indexes of
constructs from the body of the update are never used elsewhere in the
\AIOC{}.

Notably, due to while unfolding, uniqueness of indexes is not preserved by
transitions. To solve this problem we build global indexes on top of indexes.
Uniqueness of global indexes is preserved by transitions. The same
construction can be applied both at the \AIOC{} level and at the \APOC{}
level.


\begin{definition}[Global index] Given an annotated \AIOC{} process
$\mathcal{I}$, or an annotated \APOC{} network $\net$ (defined later on), for
each annotated construct with index $n$ we define its global index $\xi$ as
follows:
\begin{itemize}
\item if the construct is not in the body of a while then $\xi = n$;
\item if the innermost while construct that contains the considered construct
has global index $\xi'$ then the considered construct has global index $\xi =
\xi':n$.
\end{itemize}
\end{definition}

Using global indexes we can now define event structures corresponding to the
execution of \AIOC{}s and \APOC{}s. We start by defining \AIOC{} events. Some
events correspond to transitions of the \AIOC{}, and we say that they are
enabled when the corresponding transition is enabled, executed when the
corresponding transition is executed. \AIOC{} events are defined on annotated
\AIOC{}s. Note that a non-annotated \AIOC{} can always be annotated.

\begin{definition}[\AIOC{} events]\label{def:ev} We use $\ev$ to range over
events, and we write $[\ev]_r$ to highlight that event $\ev$ is performed by
role $r$.
An annotated \AIOC{} ${\mathcal I}$ contains the following events:

{\bf Communication events:} a sending event $\xi : \co{o^?}{r_2}$ in role
$r_1$ and a receiving event $\xi: \ci{o^?}{r_1}$ in role $r_2$ for each
interaction $n: \comm{o^?}{r_1}{e}{r_2}{x}$ with global index $\xi$; we also
denote the sending event as $f_\xi$ or $[f_\xi]_{r_1}$ and the receiving event
as $t_\xi$ or $[t_\xi]_{r_2}$. Sending and receiving events correspond to the
transition executing the interaction.

{\bf Assignment events:} an assignment event $\ev_\xi$ in role $r$ for each
assignment $n:\assign{x}{r}{e}$ with global index $\xi$; the event corresponds
to the transition executing the assignment.

{\bf Scope events:} a scope initialisation event $\uparrow_{\xi}$ and a scope
termination event $\downarrow_{\xi}$ for each scope
$n:\scope{l}{r}{\mathcal{I}}{\Delta}{A}$ with global index $\xi$. Both these
events belong to all the roles in $\roles(\mathcal{I})$. The scope
initialisation event corresponds to the transition performing or not
performing an update on the given scope. The scope termination event is just
an auxiliary event (related to the auxiliary interactions implementing the
scope termination).

{\bf If events:} a guard if-event $\ev_\xi$ in role $r$ for each construct
$\ifthenKey{b \at r}{\mathcal{I}}{\mathcal{I}'}{n}$ with global index $\xi$;
the guard-if event corresponds to the transition evaluating the guard of the
if.

{\bf While events:} a guard while-event $\ev_\xi$ in role $r$ for each
construct $\whileKey{b \at r}{\mathcal{I}}{n}$ with global index $\xi$; the
guard-while event corresponds to the transition evaluating the guard of the
while.

Function $\event({\mathcal I})$ denotes the set of events of the annotated
\AIOC{} ${\mathcal I}$. A sending and a receiving event with the same global
index $\xi$ are called matching events. We denote with $\overline{\ev}$ an
event matching event $\ev$.
\end{definition}

Note that there are events corresponding to just one execution of the while.
If unfolding is performed, new events are created.

The relation below defines a causality relation among events based on the
constraints given by the semantics on the execution of the corresponding
transitions.

\begin{definition}[\AIOC{} causality relation]\label{def:causalAIOC} Let us
consider an annotated \AIOC{} ${\mathcal I}$.  A causality relation $\leqaioc
\, \subseteq \, \devent({\mathcal I}) \times \devent({\mathcal I})$ is a
partial order among events in ${\mathcal I}$. We define $\leqaioc$ as the
minimum partial order satisfying:

{\bf Sequentiality:} let ${\mathcal I}' \seqOp {\mathcal I}''$ be a subterm of
\AIOC{} ${\mathcal I}$. If $\ev'$ is an event in ${\mathcal I}'$ and $\ev''$
is an event in ${\mathcal I}''$, then $\ev'
\leqaioc \ev''$.

{\bf Scope:} let $n:\adapt{{\mathcal I}'}{l}\Delta{r}$ be a subterm of
\AIOC{} ${\mathcal I}$. If $\ev'$ is an event in ${\mathcal I}'$ then
$\uparrow_{\xi} \leqaioc \ev' \leqaioc \downarrow_{\xi}$.

{\bf Synchronisation:} for each interaction the sending event precedes the
receiving event.

{\bf If:} let $\ifthenKey{b \at r}{\mathcal{I}}{\mathcal{I}'}{n}$ be a subterm
of \AIOC{} $\mathcal{I}$, let $\ev_\xi$ be the guard if-event in role $r$,
then for every event $\ev$ in ${\mathcal I}$ and for every event $\ev'$ in
${\mathcal I}'$ we have $\ev_\xi \leqaioc \ev$ and $\ev_\xi \leqaioc
\ev'$.

{\bf While:} let $\whileKey{b \at r}{\mathcal{I}}{n}$ be a subterm of \AIOC{}
$\mathcal{I}$,  let $\ev_\xi$ be the guard while-event in role $r$, then for
every event $\ev$ in ${\mathcal I}'$ we have $\ev_\xi
\leqaioc \ev$.
\end{definition}

We now define events and the corresponding causality relation also for
\APOC{}s.  First, we need to define annotated \APOC{}s. Annotations
for \APOC{}s exploit not only indexes $i \in \mathbb{N}$, but also
indexes of the form $(i,\mathit{true})$ and $(i,\mathit{false})$ with $i \in
\mathbb{N}$.  We use $n$ to range over all forms of indexes. 

%
\begin{definition}[Annotated \APOC{}] In \APOC{} networks, scopes are already
annotated.  Annotated \APOC{} networks are obtained by adding indexes
$n$ also to communication primitives, assignments, while, and if
constructs, thus obtaining the following grammar:
\begin{eqnarray*}
P & \gram & n: \cinp{o^?}{x}{r} \mid n:\cout{o^?}{e}{r} \mid
n:\cout{o^\ast}{X}{r} \mid\\ & & P\seqOp P'\mid \ P\mid P' \mid  n: x = e \mid
\one \mid \zero \mid\\ & & \ifthenKey{b}{P}{P'}{n} \mid
\whileKey{b}{P}{n}\mid\\ & & \pscope{n}{l}{r}{P}{\Delta}{S} \mid\\ & &
\psscope{n}{l}{r}{P}\\ X & \gram & {\tt{no}} \mid P\\
\net & \gram & \role{P,\Gamma}{r}\mid \net \parallel \net'
\end{eqnarray*}
\end{definition} 
We extend the projection function so to generate annotated \APOC{}
networks from annotated \AIOC{} processes. It requires that all the
\APOC{} constructs obtained projecting a \AIOC{} construct with index
$n$ have index $n$ with the only exception of the index of the auxiliary communications of the 
projection of the if and while constructs. 
In particular, for the projection of the if construct and for each role except the coordinator, 
we assign to the auxiliary input communications a fresh index $i$. As far as the coordinator is 
concerned instead, the auxiliary output communications in the if branch are indexed with 
$(i,\mathit{true})$ while the auxiliary communications in the else branch are indexed with 
$(i,\mathit{false})$, where $i$ is the fresh index associated to the target role.
The indexes of the auxiliary operations of the while construct projection are instead computed as follows:
\begin{itemize}
 \item for each non coordinator role we choose a pair of fresh indexes
   $i$ and $j$;
\item auxiliary inputs of non coordinator roles are annotated with the
  fresh index $i$;
 \item auxiliary outputs in non coordinator roles and the
   corresponding input in the coordinator are both annotated with the
   fresh index $j$;
 \item
 the first output auxiliary communications of the coordinator are
 indexed with $(i,\mathit{true})$, where $i$ is the fresh index
 corresponding to the target role;
 \item
 the last output auxiliary communications of the coordinator are
 indexed with $(i,\mathit{false})$, where $i$ is the fresh index
 corresponding to the target role.
\end{itemize}

As for \AIOC{}s, annotated \APOC{}s inherit the semantics of
\APOC{}s, since indexes are just decorations, with no effect on the behaviour.
There are however a few tricky points. In particular, we have to clarify how
indexes are managed when new constructs are introduced. In rule \ruleName{In}
the assignment inherits the index from the input primitive. In rule
\ruleName{While-unfold} the body is copied together with its indexes. In rule
\ruleName{Lead-\AdaptRule}, when applying the update ${\mathcal I}$, we
annotate ${\mathcal I}$ with indexes never used elsewhere and distinct, and
then generate the indexes for its projection as described above. Also, we
assign to the auxiliary communications introduced by rules
\ruleName{Lead-\AdaptRule} and \ruleName{Lead-\NoAdaptRule} indexes never used
elsewhere. Auxiliary communications introduced by rule \ruleName{\AdaptRule}
instead have to use the index of the corresponding communication introduced by
rule \ruleName{Lead-\AdaptRule} (the index can be passed by extending the
communication label).
We can now define \APOC{} events. As for \AIOC{} events, \APOC{} events
correspond to transitions of the \APOC{}.

\begin{definition}[\APOC{} events]\label{def:apocev}
An annotated \APOC{} network $\net$ contains the following events:
\begin{description}
\item[Communication events:] a sending event $\xi: \co{o^?}{r_2}$ in role
$r_1$ for each output $n: \cout{o^?}{e}{r_2}$  with global index $\xi$ in role
$r_1$; and a receiving event $\xi: \ci{o^?}{r_1}$ in role $r_2$ for each input
$n:\cinp{o^?}{x}{r_1}$ with global index $\xi$ in role $r_2$; we also denote
the sending event as $f_\xi$ or $[f_\xi]_{r_1}$; and the receiving event as
$t_\xi$ or $[t_\xi]_{r_2}$. Sending and receiving events correspond to the
transitions executing the communications.
\item[Assignment events:] an assignment event $\ev_\xi$ in role $r$ for each
assignment $n: x = e$ with global index $\xi$; the event
corresponds to the transition executing the assignment.
\item[Scope events:] a scope initialisation event $\uparrow_{\xi}$ and a scope
termination event $\downarrow_{\xi}$ for each $\pscope{n}{l}{r}{P}{\Delta}{S}$
or $\psscope{n}{l}{r}{P}$ with global index $\xi$. Scope events with the same
global index coincide, and thus the same event may belong to different roles;
the scope initialisation event corresponds to the transition performing or not
performing an update on the given scope for the role leading the update.
The scope termination event is just an auxiliary event (related to the
auxiliary interactions implementing the scope termination).
\item[If events:] a guard if-event  $\ev_\xi$ in role $r$ for each construct
$\ifthenKey{b}{P}{P'}{n}$ with global index $\xi$; the guard-if event
corresponds to the transition evaluating the guard of the if.
\item[While events:] a guard while-event  $\ev_\xi$ in role $r$ for each
construct $\whileKey{b}{P}{n}$ with global index $\xi$; the guard-while
event corresponds to the transition evaluating the guard of the while.
\end{description}
Let $\event(\net)$ denote the set of events of the network $\net$. A
sending and a receiving event with either the same global index $\xi$
or with global indexes differing only for replacing index $i$ with
$(i,\mathit{true})$ or $(i,\mathit{false})$ are called matching events.  We denote with
$\overline{\ev}$ an event matching event $\ev$. A communication event
is either a sending event or a receiving event. A communication event
is unmatched if there is no event matching it.
\end{definition} 
With a slight abuse of notation, we write $\event(P)$ to denote events
originated by constructs in process $P$, assuming the network $\net$
to be understood.

We used the same notations for events of the \AIOC{} and of the
\APOC{}. Indeed, the two kinds of events are strongly related (cfr.\
Lemma~\ref{lemma:ev}).

We can now define the causality relation among \APOC{} events.

\begin{definition}[\APOC{} causality relation]\label{def:causalapoc} Let us
consider an annotated \APOC{} network $\net$.  A causality relation $\leqapoc
\, \subseteq \, \event(\net) \times \event(\net)$ is a partial order among
events in $\net$. We define $\leqapoc$ as the minimum partial order
satisfying:

{\bf Sequentiality:} Let $P'\seqOp P''$ be a subterm of \APOC{} network
$\net$. If $\ev'$ is an event in $P'$ and $\ev''$ is an event in $P''$, both
in the same role $r$, then $\ev'
\leqapoc \ev''$.

{\bf Scope-coordinator:} Let $\pscope{n}{l}{r}{P}{\Delta}{S}$ be a subterm of
\APOC{} $\net$ in role $r$ with global index $\xi$. If $\ev'$ is an event in
$P$ then $\uparrow_{\xi} \leqapoc \ev' \leqapoc \downarrow_{\xi}$.

{\bf Scope-simple:} Let $\psscope{n}{l}{r}{P}$ be a subterm of
\APOC{} $\net$ in role $r'$ with global index $\xi$. If $\ev'$ is an event in
$P$ then $\uparrow_{\xi} \leqapoc \ev' \leqapoc \downarrow_{\xi}$.

{\bf Synchronisation:} For each pair of events $\ev$ and $\ev'$, $\ev \leq
\ev'$ implies $\overline{\ev} \leqapoc \ev'$.

{\bf If:} Let $\ifthenKey{b}{P}{P'}{n}$ be a subterm of \APOC{} network $\net$
 with global index $\xi$, let $\ev_\xi$ be the guard if-event in role $r$,
 then for every event $\ev$ in $P$ and for every event $\ev'$ in $P'$ we have
 $\ev_\xi \leqapoc \ev$ and $\ev_\xi \leqapoc \ev'$.

{\bf While:} Let $\whileKey{b}{P}{n}$ be a subterm of \APOC{} network $\net$
with global index $\xi$, let $\ev_\xi$ be the guard while-event in role $r$,
then for every event $\ev$ in $P$ we have $\ev_\xi \leqapoc \ev$.
\end{definition}

\begin{lemma}\label{lemma:ev} Given a \AIOC{} process $\mathcal{I}$, for each
state $\Sigma$ the \APOC{} network $\proj(\mathcal{I},\Sigma)$ is such that:
\begin{enumerate}
\item $\devent(\mathcal{I}) \subseteq \event(\proj(\mathcal{I},\Sigma))$;
\item $\forall \ev_1,\ev_2 \in \devent(\mathcal{I}). \ev_1 \leqaioc \ev_2
\Rightarrow  \ev_1 \leqapoc \ev_2 \vee \ev_1 \leqapoc \overline{\ev_2}$
\end{enumerate}
\end{lemma}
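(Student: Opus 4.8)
The plan is to prove item~1 by structural induction on $\mathcal{I}$ and item~2 by a closure argument that reduces the causality inclusion to a small set of base cases. For item~1 I would use that $\pi$ is homomorphic on sequential and parallel composition and maps each basic construct to an \APOC{} construct carrying the same index: an interaction $n:\comm{o^?}{r_1}{e}{r_2}{x}$ yields an output in $r_1$ and an input in $r_2$ (the events $f_\xi,t_\xi$), while assignments, conditionals, whiles and scopes yield the corresponding indexed constructs in every involved role. Since $\pi$ preserves the nesting of while constructs (a while is sent to a while with the same index in each participating role), the global index of every construct is preserved, so each \AIOC{} event reappears at the same role and with the same global index in $\event(\proj(\mathcal{I},\Sigma))$; the auxiliary communications added by $\pi$ only enlarge the event set, giving the inclusion.

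For item~2 I would introduce the relation $R(\ev_1,\ev_2)\equiv \ev_1\leqapoc\ev_2 \vee \ev_1\leqapoc\overline{\ev_2}$ and prove $\leqaioc\,\subseteq\,R$. Because $\leqaioc$ is the least partial order containing the pairs generated by the clauses of Definition~\ref{def:causalAIOC}, it suffices to show that $R$ is reflexive, transitive, and contains each generating pair, whence $\leqaioc\subseteq R$. Reflexivity is immediate, and transitivity is the pivotal structural fact: it rests on the \APOC{} \textbf{Synchronisation} clause $\ev\leqapoc\ev'\Rightarrow\overline{\ev}\leqapoc\ev'$. Given $R(\ev_1,\ev_2)$ and $R(\ev_2,\ev_3)$, a four-case analysis on the chosen disjuncts closes the chain; in the two cases where $\ev_1\leqapoc\overline{\ev_2}$, applying the synchronisation clause to $\ev_2\leqapoc\ev_3$ (resp.\ $\ev_2\leqapoc\overline{\ev_3}$) produces $\overline{\ev_2}\leqapoc\ev_3$ (resp.\ $\overline{\ev_2}\leqapoc\overline{\ev_3}$), reconnecting $\ev_1$ to $\ev_3$ or $\overline{\ev_3}$.

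It then remains to discharge the generating clauses. For \textbf{Synchronisation} ($f_\xi\leqaioc t_\xi$), reflexivity plus the synchronisation clause give $f_\xi=\overline{t_\xi}\leqapoc t_\xi$. For \textbf{If}, \textbf{While} and \textbf{Scope} the guard/scope event lives in the coordinator: events in the coordinator's copy are reached directly by the homonymous \APOC{} clauses, whereas an event in another role is reached through the auxiliary communications generated by $\pi$ (for if/while) or by the semantics (for scopes), by combining local \textbf{Sequentiality} with the synchronisation clause on the matching auxiliary send/receive. The remaining and central clause is \textbf{Sequentiality}: for $\mathcal{I}'\seqOp\mathcal{I}''$ I would establish two frontier-reachability sub-lemmas, each by structural induction. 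The first states that every event $\ev'$ of $\mathcal{I}'$ satisfies $R(\ev',f)$ (hence $R(\ev',t)$) for the events $f,t$ of some final interaction whose roles form a pair of $\transF(\mathcal{I}')$ — or, for a final pair of the form $\pair{r}{r}$, $R(\ev',g)$ for the single local event $g$ at $r$; the second, dually, that every event $\ev''$ of $\mathcal{I}''$ satisfies $R(h,\ev'')$ for the events $h$ of some initial interaction with roles in a pair of $\transI(\mathcal{I}'')$. Connectedness for sequence then forces the chosen final and initial pairs to share a role $e$; the two frontier events at $e$ belong respectively to $\pi(\mathcal{I}',e)$ and $\pi(\mathcal{I}'',e)$, so they are ordered by local \textbf{Sequentiality}, and transitivity of $R$ links $\ev'$ to $\ev''$.

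The hard part will be the \textbf{Sequentiality} clause, and within it the frontier-reachability sub-lemmas for the while and scope constructs, whose projections surround the body with auxiliary synchronisations (guard distribution, loop-back and termination notifications). Showing that these extra events are correctly $R$-related to the frontier requires careful bookkeeping of the auxiliary operation indices — in particular of the $(i,\mathit{true})/(i,\mathit{false})$ annotations and of which send is matched by which receive — so that the synchronisation clause is invoked in the right direction when crossing roles. Connectedness is essential throughout: without it the clause fails outright (e.g.\ for a sequence of two interactions over disjoint roles, where $\ev'\not\leqapoc\ev''$ and $\ev'\not\leqapoc\overline{\ev''}$), and I would rely on the fact that every proper subterm of a connected \AIOC{} is itself connected when applying the sub-lemmas to nested sequences.
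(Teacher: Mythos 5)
Your proposal is correct and follows essentially the same route as the paper's proof: item~1 by inspection of the projection, and item~2 by discharging each generating clause of $\leqaioc$, with the Sequentiality case handled exactly as in the paper via the two frontier-reachability sub-lemmas (proved by structural induction), connectedness for sequence providing the shared role, and the \APOC{} synchronisation clause transporting dependencies across matching events. Your only departure is presentational but welcome: wrapping the disjunction in the relation $R$ and proving it reflexive and transitive makes explicit the closure under transitivity of $\leqaioc$, which the paper's case analysis on the generating condition leaves implicit.
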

\begin{proof}
\begin{enumerate}
\item By definition of projection.
\item Let $\ev_1 \leqaioc \ev_2$. We have a case analysis on the condition
  used to derive the dependency.
\begin{description}
\item [Sequentiality:] Consider $\mathcal{I} = \mathcal{I}'\seqOp
\mathcal{I}''$. If  events are in the same role the implication follows from
the sequentiality of the $\leqapoc$.

Let us show that there exists an event $\ev''$ in an initial interaction of
$\mathcal{I}''$ such that either $\ev'' \leqapoc \ev_2$ or $\ev'' \leqapoc
\overline{\ev_2}$. The proof is by induction on the structure of
$\mathcal{I}''$. The only difficult case is sequential composition. Assume
$\mathcal{I}'' = \mathcal{I}_1; \mathcal{I}_2$. If $\ev_2 \in
\devent(\mathcal{I}_1)$ the thesis follows from inductive hypothesis. If
$\ev_2 \in \devent(\mathcal{I}_2)$ then by induction there exists an event
$\ev_3$ in an initial interaction of $\mathcal{I}_2$ such that $\ev_3 \leqapoc
\ev_2$ or $\ev_3 \leqapoc \overline{\ev_2}$. By synchronisation (Definition
\ref{def:causalapoc}) we have that $\overline{\ev_3} \leqapoc \ev_2$ or
$\overline{\ev_3}
\leqapoc
\overline{\ev_2}$. By connectedness for sequence we have that $\ev_3$ or
$\overline{\ev_3}$ are in the same role of an event $\ev_4$ in $\mathcal{I}'$.
By sequentiality (Definition \ref{def:causalapoc}) we have that $\ev_4
\leqapoc \ev_3$ or $\ev_4 \leqapoc
\overline{\ev_3}$. By synchronisation we have that $\overline{\ev_4} \leqapoc
\ev_3$ or $\overline{\ev_4} \leqapoc
\overline{\ev_3}$. The thesis follows from the inductive hypothesis on $\ev_4$
and by transitivity of $\leqapoc$.

Let us also show that there exists a final event $\ev''' \in
\devent(\mathcal{I}')$ such that $\ev_1 \leqapoc \ev'''$ or $\ev_1 \leqapoc
\overline{\ev'''}$. The proof is by induction on the structure of
$\mathcal{I}'$. The only difficult case is sequential composition. Assume
$\mathcal{I}' = \mathcal{I}_1; \mathcal{I}_2$. If $\ev_1 \in
\devent(\mathcal{I}_2)$ the thesis follows from inductive hypothesis. If
$\ev_1 \in \devent(\mathcal{I}_1)$ then the proof is similar to the one above,
finding a final event in $\mathcal{I}_1$ and applying sequentiality,
synchronisation, and transitivity.

The thesis follows from the two results above again by sequentiality,
synchronisation, and transitivity.
\item [Scope:] it means that either (1) $\ev_1 = \uparrow_{n}$  and $\ev_2$ is
an event in the scope or (2) $\ev_1 = \uparrow_{n}$  and  $\ev_2 =
\downarrow_{n}$, or (3) $\ev_1$ is an event in the scope and  $\ev_2 =
\downarrow_{n}$. We consider the first case since the third one is analogous
and the second one follows by transitivity. If $\ev_2$ is in the coordinator
then the thesis follows easily. Otherwise it follows thanks to the auxiliary
synchronisations with a reasoning similar to the one for sequentiality.
\item [Synchronisation: ] it means that $\ev_1$ is a sending event and $\ev_2$
is the corresponding receiving event, namely $\ev_1 = \overline{\ev_2}$ .
Thus, since $\ev_2 \leqapoc \ev_2$ then $\overline{\ev_2} \leqapoc \ev_2$.
\item [If:] it means that $\ev_1$ is the evaluation of the guard and $\ev_2$
is in one of the two branches. Thus, if $\ev_2$ is in the coordinator then the
thesis follows easily. Otherwise it follows thanks to the auxiliary
synchronisations with a reasoning similar to the one for sequentiality.
\item [While:] it means that $\ev_1$ is the evaluation of the guard and
$\ev_2$ is in the body of the while. Thus, if $\ev_2$ is in the coordinator
then the thesis follows easily. Otherwise it follows thanks to the auxiliary
synchronisations with a reasoning similar to the one for sequentiality.
\end{description}
%
%
%
\end{enumerate}
\end{proof}

We can now define a notion of conflict between (\AIOC{} and \APOC{}) events,
relating events which are in different branches of the same conditional.

\begin{definition}[Conflicting events]\label{def:conflict} Given a \AIOC{}
process $\mathcal{I}$ we say that two events $\ev, \ev' \in
\devent(\mathcal{I})$ are conflicting if they belong to different branches of
the same if construct, i.e. there exists a subprocess
$\ifthen{b}{\mathcal{I}'}{\mathcal{I}''}$ of $\mathcal{I}$ such that $\ev
\in \devent(\mathcal{I}') \wedge \ev'
\in \devent(\mathcal{I}'')$ or $\ev'
\in \devent(\mathcal{I}') \wedge \ev
\in \devent(\mathcal{I}'')$.

Similarly, given a \APOC{} network $\net$, we say that two events $\ev, \ev'
\in \event(\net)$ are conflicting if they belong to different branches of the
same if construct, i.e. there exists a subprocess $\ifthen{b}{P}{P'}$ of
$\net$ such that $\ev
\in \event(P) \wedge \ev'
\in \event(P')$ or $\ev'
\in \event(P) \wedge \ev
\in \event(P')$.
\end{definition}

\APOC{}s resulting from the projection of well-annotated connected \AIOC{}s
enjoy useful properties.

\begin{definition}[Well-annotated \APOC{}]
\label{defin:synchwa} An annotated \APOC{} network $\net$ is well-anno\-tated
for its causality relation $\leqapoc$ if the following conditions hold:
\begin{description}
\item[{\sc C1}] for each global index $\xi$ there are at most two
communication events with global index $\xi$ and, in this case, they are
matching events;
\item[{\sc C2}] only events which are minimal according to $\leqapoc$ may
correspond to enabled transitions;
\item[{\sc C3}] for each pair of non-conflicting sending events $[f_\xi]_r$
and $[f_{\xi'}]_r$ on the same operation $o^?$ with the same target $s$ such
that $\xi \neq \xi'$ we have $[f_\xi]_r \leqapoc [f_{\xi'}]_r$ or
$[f_{\xi'}]_r \leqapoc [f_\xi]_r$;
\item[{\sc C4}] for each pair of non-conflicting receiving events $[t_\xi]_s$
and $[t_{\xi'}]_s$ on the same operation $o^?$ with the same sender $r$ such
that $\xi \neq \xi'$ we have $[t_\xi]_s \leq [t_{\xi'}]_s$ or $[t_{\xi'}]_s
\leq [t_\xi]_s$;
\item[{\sc C5}] if $\ev$ is an event inside a scope with global index $\xi$
then its matching event $\overline{\ev}$ (if it exists) is inside a scope with
the same global index.
\item[{\sc C6}] if two events have the same index but different global indexes
then one of them is inside a while with global index $\xi_1$, let us call it
$\ev_1$, and the other, $\ev_2$, is not. Furthermore, $\ev_2 \leqapoc
\ev_{\xi_1}$ where $\ev_{\xi_1}$ is the guarding while-event of the while with
global index $\xi_1$.
\end{description}
\end{definition}
%


%


Update, conditional choice, and iteration at the \AIOC{} level happen in
one step, while they correspond to many steps of the projected
\APOC{}. Also, scope execution introduces auxiliary communications which have
no correspondence in the \AIOC{}.  Thus, we define the function $\upd$ that
bridges this gap. More precisely, function $\upd$ is obtained as the
composition of two functions, a function $\prop$ that completes the execution
of \AIOC{} actions which have already started, and a function $\ssim$ that
eliminates all the auxiliary closing communications.

\begin{definition}[$\upd$ function]\label{def:upd} Let $\net$ be an annotated
\APOC{} (we drop annotations if not relevant). The $\upd$ function is defined
as the composition of a function $\prop$ and a function $\ssim$. Thus,
$\upd(\net) = \ssim(\prop(\net))$. Network $\prop(\net)$ is obtained from
$\net$ by repeating the following operations while possible:
\begin{enumerate}
\item for each $\cout{o^*_n}{\mathit{true}}{r'}$ enabled, replace every
$\cinp{o^*_n}{x_n}{r}\seqOp  \while{x_n}{P\seqOp
\cout{o^*_n}{\texttt{ok}}{r}\seqOp \cinp{o^*_n}{x_n}{r}}$
not inside another while construct, with $P\seqOp
\cout{o^*_n}{\texttt{ok}}{r}\seqOp \cinp{o^*_n}{x_n}{r}\seqOp
\while{x_n}{P\seqOp  \cout{o^*_n}{\texttt{ok}}{r}\seqOp
\cinp{o^*_n}{x_n}{r}}$;
and replace $\cout{o^*_n}{\mathit{true}}{r'}$ with $\one$.
\item for each $\cout{o^*_n}{\mathit{false}}{r'}$ enabled, replace every
$\cinp{o^*_n}{x_n}{r}\seqOp  \while{x_n}{P\seqOp
\cout{o^*_n}{\texttt{ok}}{r}\seqOp \cinp{o^*_n}{x_n}{r}}$ not inside another
while construct, with $\one$; and replace $\cout{o^*_n}{\mathit{false}}{r'}$
with $\one$.
\item  for each $\while{x_n}{P\seqOp \cout{o^*_n}{\texttt{ok}}{r}\seqOp
\cinp{o^*_n}{x_n}{r}}$ enabled not inside another while construct, such that
$x_n$ evaluates to \texttt{true} in the local state, replace it with $P\seqOp
\cout{o^*_n}{\texttt{ok}}{r}\seqOp \cinp{o^*_n}{x_n}{r}\seqOp
\while{x_n}{P\seqOp  \cout{o^*_n}{\texttt{ok}}{r}\seqOp
\cinp{o^*_n}{x_n}{r}}$.
\item  for each $\while{x_n}{P\seqOp \cout{o^*_n}{\texttt{ok}}{r}\seqOp
\cinp{o^*_n}{x_n}{r}}$ enabled not inside another while construct, such that
$x_n$ evaluates to \texttt{false} in the local state,  replace it with $\one$.
\item for each $\cout{o^*_n}{\mathit{true}}{r'}$ enabled, replace every
$\cinp{o^*_n}{x_n}{r}\seqOp  \ifthen{x_n}{P'}{P''}$ not inside a while
construct, with $P'$; and replace $\cout{o^*_n}{\mathit{true}}{r'}$ with
$\one$.
\item for each $\cout{o^*_n}{\mathit{false}}{r'}$ enabled, replace every
$\cinp{o^*_n}{x_n}{r}\seqOp  \ifthen{x_n}{P'}{P''}$ not inside a while
construct, with $P''$; and replace  $\cout{o^*_n}{\mathit{false}}{r'}$ with
$\one$.
\item for each $\ifthen{x_n}{P'}{P''}$ enabled such that $x_n$ evaluates to
\texttt{true} in the local state, replace it with $P'$.
\item for each $\ifthen{x_n}{P'}{P''}$ enabled such that $x_n$ evaluates to
\texttt{false} in the local state, replace it with $P''$.
\item for each $\cout{o^*_n}{P}{s}$ enabled, replace every
$\psscope{n}{l}{r}{P'}$ in role $s$ not inside a while construct, with $P$,
and replace $\cout{o^*_{n}}{P}{s}$ with $\one$.
%
%
\item for each $\cout{o^*_{n}}{\texttt{no}}{s}$ enabled, replace every
$\psscope{n}{l}{r}{P'}$ in the role $s$ not inside a while construct, with
$P'$  and replace $\cout{o^*_{n}}{P}{s}$ with $\one$.
\end{enumerate}
%
Network $\ssim(\net)$ is obtained from $\net$ by repeating the following
operations while possible:
\begin{itemize}%
\item replace each $\cout{o^*_n}{\texttt{ok}}{r}$,
$\cout{o^*_{n}}{\texttt{ok}}{r}$, $\cinp{o^*_n}{\_}{r}$ or
$\cinp{o^*_{n}}{\_}{r}$ not inside a while construct with $\one$.
\item replace each operation occurrence of the form $\freshops{n}{o^?}$ with
$o^?$.
\end{itemize} Furthermore $\ssim$ may apply 0 or more times the following
operation:
\begin{itemize}
\item \label{pruning_def:one}replace a subterm $\one \seqOp P$ by $P$ or a
subterm $\one \mid P$ by $P$.
\end{itemize}
\end{definition}




The result below proves that in a well-annotated \APOC{} only transitions
corresponding to events minimal w.r.t.\ the causality relation $\leqapoc$ may
be enabled.

\begin{lemma}\label{lemma:minimalevent} If $\net$ is a \APOC{}, $\leqapoc$
its causality relation and $\ev$ is an event corresponding to a transition
enabled in $\net$ then $\ev$ is minimal w.r.t.\ $\leqapoc$.
\end{lemma}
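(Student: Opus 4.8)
The plan is to prove the statement essentially in contrapositive spirit: assuming that $\ev$ corresponds to a transition enabled in $\net$, I would show that $\ev$ has no \emph{strict} predecessor for $\leqapoc$, hence is minimal. Since by Definition~\ref{def:causalapoc} the relation $\leqapoc$ is the least partial order generated by the five clauses (\emph{Sequentiality}, \emph{Scope-coordinator}/\emph{Scope-simple}, \emph{Synchronisation}, \emph{If}, \emph{While}), it suffices to check that none of them can place an event strictly below $\ev$ in the current network $\net$.

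The cornerstone is a \emph{head-position} characterisation of enabledness, obtained by inspecting the role-level semantics (Table~\ref{table:apoc-proc}): the construct generating $\ev$ can fire only if it is not guarded by a still-closed scope, not inside a branch of an unresolved conditional, not inside the body of a not-yet-unfolded while, and not sequentially preceded by a construct that still contains events. I would establish this by rule analysis: while a scope is still a scope construct, the only enabled transitions are \ruleName{Lead-\AdaptRule}/\ruleName{Lead-\NoAdaptRule}/\ruleName{\AdaptRule}/\ruleName{\NoAdaptRule}, which produce the scope-initialisation event $\uparrow_{\xi}$ and leave the body inaccessible; for a conditional only \ruleName{If-then}/\ruleName{If-else} fire, producing the guard if-event; for a loop only \ruleName{While-unfold}/\ruleName{While-exit} fire; and in a sequential composition $P \seqOp Q$ the construct in $Q$ becomes reachable only through rule \ruleName{Seq-end}, which requires $P \arro{\tick} P'$. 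The key auxiliary fact is that $P \arro{\tick}$ forces $P$ to contain no events, since every event-generating construct emits a label different from $\tick$ (a communication label, $\tau$, or a scope/update label), so a $\tick$ can be derived only from occurrences of $\one$ combined through \ruleName{Seq-end} and \ruleName{Par-end}.

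From the head-position property I would conclude directly that the clauses \emph{Sequentiality}, \emph{Scope}, \emph{If}, and \emph{While} introduce no event strictly below $\ev$: each would require $\ev$ to lie, respectively, to the right of an event-containing sequential prefix, inside a scope body, inside a conditional branch, or inside a loop body, all contradicting enabledness. It then remains to handle \emph{Synchronisation}, which only closes $\leqapoc$ by forcing $\overline{\ev'} \leqapoc \ev''$ whenever $\ev' \leqapoc \ev''$. Since the four local clauses place nothing strictly below $\ev$, and since the reflexive instance cannot trigger this clause (otherwise $\overline{\ev}\leqapoc\ev$ together with $\ev\leqapoc\overline{\ev}$ would violate antisymmetry for the matched pair, contradicting that $\leqapoc$ is a partial order), the closure can propagate no strict predecessor onto $\ev$; transitivity then yields no indirect predecessor either, so $\ev$ is minimal.

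The main obstacle I expect is precisely the careful treatment of the \emph{Synchronisation} clause together with matching events: one must argue that it creates predecessors only in matching pairs derived from an \emph{already existing} strict predecessor, so that an event with no local predecessor acquires none through synchronisation. A secondary point requiring care is the bookkeeping of global indexes under \ruleName{While-unfold} and \ruleName{Lead-\AdaptRule}, where bodies are copied and freshly indexed; this affects only the identity of the events in $\event(\net)$ and not the head-position argument, so it does not threaten the conclusion.
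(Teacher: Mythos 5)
Your proposal is correct and follows essentially the same route as the paper's proof: a proof by contradiction/contraposition via case analysis over the generating clauses of Definition~\ref{def:causalapoc}, using the role semantics to show that an event sequentially preceded by an event-containing prefix, or lying inside a scope body, a conditional branch, or a loop body, cannot correspond to an enabled transition (your auxiliary fact that $P \arro{\tick}$ forces $P$ to contain no events is exactly what the paper leaves implicit when it says the semantics of sequential composition blocks $\ev$). You are in fact more thorough on one point: the paper's case analysis lists only Sequentiality, Scope, If, and While and silently omits the \emph{Synchronisation} clause, whereas you correctly note that this clause only propagates an \emph{already existing} strict predecessor across matching pairs and that its reflexive instance must be excluded, since it would yield both $\overline{\ev} \leqapoc \ev$ and $\ev \leqapoc \overline{\ev}$ in contradiction with antisymmetry, so an event with no predecessor from the four local clauses acquires none through the closure.
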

\begin{proof} The proof is by contradiction. Suppose $\ev$ is enabled but not
minimal, i.e.\ there exists $\ev'$ such that $\ev' \leqapoc \ev$. If there is
more than one such $\ev'$ consider the one such that the length of the
derivation of $\ev' \leqapoc \ev$ is minimal. This derivation should have
length one, and following  Definition~\ref{def:causalapoc} it may result from
one of the following cases:
\begin{itemize}
\item Sequentiality: $\ev' \leqapoc \ev$ means that $\ev' \in \event(P')$,
$\ev \in
\event(P'')$, and $P' \seqOp P''$ is a subterm of $\net$. Because of the
semantics of sequential composition $\ev$ cannot be enabled.
\item Scope: let $\pscope{n}{l}{r}{P}{\Delta}{S}$ or $\psscope{n}{l}{r}{P}$ be
a subprocess of $\net$ with global index $\xi$. We have the following cases:
\begin{itemize}
\item $\ev' = \uparrow_{\xi}$ and $\ev \in \event(P)$, and this implies that
$\ev$ cannot be enabled since if $\ev'$ is enabled then the rules
\ruleName{\AdaptRule} or \ruleName{\NoAdaptRule} for the evolution of the
scope have not been applied yet;
\item $\ev' = \uparrow_{\xi}$ and $\ev = \downarrow_{\xi}$: this is trivial,
since  $\downarrow_{\xi}$ is an auxiliary event and no transition corresponds
to it;
\item $\ev' \in \event(P)$ and $\ev = \downarrow_{\xi}$, but this is
impossible since if $\ev'$ is enabled there is no event $\ev$ because the
events $\uparrow_{\xi}$ and $\downarrow_{\xi}$ disappear as soon as the rule
\ruleName{Lead-\AdaptRule} or \ruleName{Lead-\NoAdaptRule} is performed.
\end{itemize}
\item If: $\ev \leqapoc \ev'$ means that $\ev$ is the evaluation of the guard
of the subterm $\ifthenKey{x_n}{P'}{P''}{n}$ and $\ev' \in \event(P') \cup
\event(P'')$. Event $\ev'$ cannot be enabled because of the semantics of if.
\item While: $\ev \leqapoc \ev'$ means that $\ev$ is the evaluation of the
guard of the subterm $\whileKey{x_n}{P}{n}$ and $\ev' \in \event(P)$. Event
$\ev'$ cannot be enabled because of the semantics of while.
\end{itemize}
\end{proof}
%
%
The following result shows that if an interaction is performed then the two
executed events are matching events.
\begin{lemma}\label{lemma:executematching} If $\net$ is a well-annotated
\APOC{} and $\tuple{\ambientN,\net} \arro{\commLabel{o^?}{r_1}{v}{r_2}{x}}
\tuple{\ambientN,\net'}$ then the two executed events are matching events.
\end{lemma}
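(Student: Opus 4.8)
The plan is to argue by contradiction, exploiting the minimality of the two executed events together with the total-ordering guarantees of well-annotation (conditions {\sc C3} and {\sc C4}) and the antisymmetry of $\leqapoc$.

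First I would pin down the shape of the transition. The label $\commLabel{o^?}{r_1}{v}{r_2}{x}$ can only be produced by rule \ruleName{Synch} (or \ruleName{Synch-\AdaptRule} in the higher-order case), so the step synchronises an enabled send action $\cout{o^?}{e}{r_2}$ in role $r_1$ with an enabled receive action $\cinp{o^?}{x}{r_1}$ in role $r_2$. Let $f$ be the sending event in $r_1$ (operation $o^?$, target $r_2$) and $t$ the receiving event in $r_2$ (operation $o^?$, sender $r_1$) executed by these actions. Since both underlying role transitions are enabled, Lemma~\ref{lemma:minimalevent} gives that $f$ and $t$ are both minimal with respect to $\leqapoc$. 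Let $\overline{f}$ and $\overline{t}$ denote their matching events: a receiving event in $r_2$ and a sending event in $r_1$, respectively. Their existence is supplied by the invariant maintained along the bisimulation, namely that in the well-annotated networks reachable from a projection an enabled communication event always has its matching partner present.

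Next, suppose toward a contradiction that $f$ and $t$ are \emph{not} matching, i.e.\ $t \neq \overline{f}$; then their global indexes differ and are not true/false variants of one another, so {\sc C3} and {\sc C4} apply to the relevant pairs. Both $t$ and $\overline{f}$ are receiving events in $r_2$ on operation $o^?$ with sender $r_1$, hence by {\sc C4} they are either conflicting or $\leqapoc$-comparable. They cannot be conflicting: conflicting events lie in the two branches of one conditional, and the {\sc If} clause of Definition~\ref{def:causalapoc} would then place the corresponding guard if-event strictly below $t$, contradicting the minimality of $t$. Nor can $\overline{f} \leqapoc t$ hold strictly, again by minimality of $t$. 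Therefore $t \leqapoc \overline{f}$. Symmetrically, $f$ and $\overline{t}$ are two sending events in $r_1$ on $o^?$ with target $r_2$; {\sc C3}, the same conflict-exclusion argument, and the minimality of $f$ force $f \leqapoc \overline{t}$.

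Finally I would close using the {\sc Synchronisation} clause of $\leqapoc$, which states $\ev \leqapoc \ev' \Rightarrow \overline{\ev} \leqapoc \ev'$. From $t \leqapoc \overline{f}$ it yields $\overline{t} \leqapoc \overline{f}$, and from $f \leqapoc \overline{t}$ it yields $\overline{f} \leqapoc \overline{t}$. Since $\leqapoc$ is a partial order, antisymmetry gives $\overline{f} = \overline{t}$; but $\overline{f}$ is a receiving event whereas $\overline{t}$ is a sending event, a contradiction. Hence $t = \overline{f}$, i.e.\ the two executed events are matching. The main obstacle is the bookkeeping that converts the informal ``no race'' into this formal chain: one must rule out both the conflicting case and both wrong orderings so that {\sc C3}/{\sc C4} and minimality pin the pairing down uniquely; once the two relations $t \leqapoc \overline{f}$ and $f \leqapoc \overline{t}$ are established, the antisymmetry contradiction is immediate.
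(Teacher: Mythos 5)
Your proposal is correct and is built from the same ingredients as the paper's proof---rule \ruleName{Synch} identifying the two executed events, Lemma~\ref{lemma:minimalevent}, and the well-annotation conditions---but it takes a noticeably more explicit route, and the difference is instructive. The paper's proof is essentially two lines: if the executed events were not matching then ``by the definition of well-annotated \APOC{}'' they are either conflicting or causally related, and in either case one of them could not be enabled (conflicting events sit below an if-guard event by Definition~\ref{def:causalapoc}; causally related events violate minimality). As stated this is loose, because conditions {\sc C3} and {\sc C4} only order pairs of sends or pairs of receives \emph{within a single role}, whereas the executed send and receive live in different roles ($r_1$ and $r_2$); your detour through the matching partners $\overline{f}$ and $\overline{t}$, applying {\sc C4} to $(t,\overline{f})$ and {\sc C3} to $(f,\overline{t})$, then closing with the {\sc Synchronisation} clause and antisymmetry of $\leqapoc$ (so that $\overline{f}=\overline{t}$, absurd since one is a send and the other a receive), is precisely the bookkeeping needed to make the paper's one-line claim rigorous, and it ends in a different terminal contradiction (antisymmetry of the partial order rather than non-enabledness of an executed event). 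The one caveat is your appeal to the \emph{existence} of $\overline{f}$ and $\overline{t}$: Definition~\ref{defin:synchwa} does not guarantee it---a lone unmatched send and unmatched receive on the same operation satisfy {\sc C1}--{\sc C6} vacuously yet can fire \ruleName{Synch}---so strictly this must come from the networks being reachable from projections of connected \AIOC{}s, an invariant the paper's proof uses just as silently; you at least flag the assumption, though neither your proof nor the paper's discharges it inside the lemma itself.
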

\begin{proof} By definition of \APOC{} semantics we have that the transition
\\ $\tuple{\ambientN,\net}
\arro{\commLabel{o^?}{r_1}{v}{r_2}{x}} \tuple{\ambientN,\net'}$ can be
generated only by the \ruleName{synch} rule.
Then, we have that the two events are on the same operation and that
$r_2$ is the target of the first event. Assume that they are not
matching events. Then for the definition of well-annotated \APOC{},
they are either conflicting or in the causality relation. In the first
case, none of them can be enabled by Definition \ref{def:causalapoc}
since they are inside an \emph{if} construct.  In the second case
thanks to Lemma~\ref{lemma:minimalevent}, at least one of them cannot
be enabled since it is not minimal. This is a contradiction, thus they
are matching events.
\end{proof}

We now prove that all the \APOC{}s obtained as projection of
well-annotated connected \AIOC{}s are well-annotated.
%
\begin{lemma}\label{lemma:IOCwell} Let ${\mathcal I}$ be a well-annotated
connected \AIOC{} process, and $\Sigma$ a state. Then its projection $\net =
\proj({\mathcal I},\Sigma)$ is a well-annotated \APOC{} network w.r.t.
$\leqapoc$.
\end{lemma}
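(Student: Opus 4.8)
The plan is to verify directly that $\net = \proj(\mathcal{I},\Sigma)$ satisfies each of the six conditions \textsc{C1}--\textsc{C6} of Definition~\ref{defin:synchwa}. The guiding observation is that, since $\mathcal{I}$ is well-annotated, all of its construct indexes are distinct, and $\pi$ (Table~\ref{table:pi}) reproduces each such index on the primitives it emits, introducing only \emph{fresh} indexes --- together with their $(i,\mathit{true})$ and $(i,\mathit{false})$ variants --- for the auxiliary communications of if and while constructs. Moreover $\net$ is the \emph{initial} projection, so no while has yet been unfolded; the only index that $\pi$ emits twice inside a single role is the auxiliary input index $i$ of a while, once before the loop and once at the end of its body.

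Condition \textsc{C2} is immediate, being exactly the statement of Lemma~\ref{lemma:minimalevent}. Condition \textsc{C1} I would obtain by inspection of the annotation scheme: distinctness of the source construct indexes forces every global index to be carried by at most two communication events, and the only pairs that share a global index are the send/receive of a single interaction and the $\texttt{ok}$-communications closing a loop body, which are matching by definition. Condition \textsc{C5} holds because a scope at index $n$ is projected to a scope at the same index $n$ in every involved role, so the send and receive obtained from an interaction inside a scope body both lie within that scope and carry the same global index. Condition \textsc{C6} is likewise structural: the only events sharing an index while having distinct global indexes are the two while-auxiliary inputs $\cinp{o^*_n}{x_n}{r}$ of a non-coordinator, one outside the loop ($\ev_2$) and one inside its body ($\ev_1$, with the loop's global index $\xi_1$); the \textbf{Sequentiality} clause of $\leqapoc$ then gives $\ev_2 \leqapoc \ev_{\xi_1}$, as required.

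The substantive part, and what I expect to be the main obstacle, is \textsc{C3} and \textsc{C4}. For two non-conflicting sending events $[f_\xi]_r$ and $[f_{\xi'}]_r$ with $\xi \neq \xi'$, on the same operation and with the same target $s$, I would split on whether that operation is public or auxiliary. In the public case the two events come from two distinct interactions of $\mathcal{I}$ sharing the signature $\sign{o}{r}{s}$: connectedness for parallel forbids them from occurring in opposite operands of any $\parOpI$, and non-conflictingness forbids opposite branches of any if, so that --- descending through while bodies and through matching if-branches --- their lowest common ancestor in the syntax tree must be a sequential composition $\mathcal{J} \seqOp \mathcal{J}'$. Since $\pi(\mathcal{J}\seqOp\mathcal{J}',r)=\pi(\mathcal{J},r)\seqOp\pi(\mathcal{J}',r)$ and $r$ is the sender of both, the two events sit in the two operands of a $\seqOp$ within role $r$, and \textbf{Sequentiality} orders them. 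In the auxiliary case, say operation $o^*_n$, uniqueness of the index $n$ forces both sends to come from the same if or while construct, inside whose projection the two outputs to a fixed target are sequentially composed by construction, again yielding the order. Condition \textsc{C4} is dual, with receivers in place of senders.

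The two delicate points to pin down are: (i) that the lowest-common-ancestor descent in the public case really terminates at a $\seqOp$ node, which I would settle by a short induction on the syntax tree using connectedness for parallel and the non-conflict hypothesis to rule out $\parOpI$- and if-ancestors; and (ii) that an ordering established inside a nested construct lifts to $\leqapoc$ on the whole network, which follows from the \textbf{If}, \textbf{While}, and \textbf{Sequentiality} clauses together with transitivity of $\leqapoc$. With these settled, all six conditions hold and $\net$ is well-annotated.
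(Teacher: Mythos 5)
Your proposal is correct and takes essentially the same route as the paper's own proof: a direct verification of \textsc{C1}--\textsc{C6}, with \textsc{C2} delegated to Lemma~\ref{lemma:minimalevent}, \textsc{C1}, \textsc{C5} and \textsc{C6} read off the annotation scheme (including the special case of the two while-auxiliary inputs of a non-coordinator), and \textsc{C3}/\textsc{C4} settled by excluding a parallel ancestor via connectedness for parallel and an if-ancestor via the non-conflict hypothesis, leaving a sequential composition whose Sequentiality clause yields the order --- the paper runs this case analysis directly on subprocesses $P \seqOp P'$, $P \parOpP P'$, $\ifthen{b}{P}{P'}$ of $\net$ rather than via your \AIOC{}-level lowest-common-ancestor descent, but the content is the same. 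One small imprecision in your auxiliary case of \textsc{C3}: the two outputs of an if projection on operation $o^*_n$ to the same target are not sequentially composed but sit in the two branches, hence are conflicting and are simply excluded by the hypothesis of \textsc{C3} (only the while's true/false outputs genuinely need the Sequentiality ordering), which is exactly how the paper's non-conflict case disposes of them.
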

\begin{proof} We have to prove that $\proj(\mathcal{I},\Sigma)$ satisfies the
conditions of Definition~\ref{defin:synchwa} of well-annotated \APOC{}:
\begin{description}
\item[{\sc C1}] For each global index $\xi$ there are at most two
  communication events with global index $\xi$ and, in this case, they
  are matching events. The condition follows by the definition of the
  projection function, observing that in well-annotated \AIOC{}s, each
  construct has its own index, and different indexes are mapped to
  different global indexes. Note that the two auxiliary input
  communications in the projection of a while construct on a non
  coordinating role have the same index but different global indexes.
\item[{\sc C2}] Only events which are minimal according to $\leqapoc$ may
correspond to enabled transitions. This condition follows from
Lemma~\ref{lemma:minimalevent}.
\item[{\sc C3}] For each pair of non-conflicting sending events $[f_\xi]_{r}$
and $[f_{\xi'}]_{r}$ on the same operation $o^?$ and with the same target such
that $\xi \neq \xi'$ we have $[f_\xi]_r \leqapoc [f_{\xi'}]_r$ or
$[f_{\xi'}]_r \leqapoc [f_{\xi}]_r$. Note that the two events are in the same
role, thus w.l.o.g. we can assume that there exist two processes $P,P'$ such
that $[f_\xi]_r \in \event(P)$ and $[f_{\xi'}]_r \in \event(P')$ and that one
among $P\seqOp P'$, $P \parOpP P'$, and $\ifthen{b}{P}{P'}$ is a subprocess of
$\net$.

Since ${\mathcal I}$ is connected for parallel, by Definition
\ref{def:connectedness} and by definition of the projection function the
second case can never happen. Similarly, since the events are non-conflicting
by Definition \ref{def:conflict} the third case can never happen. If $P\seqOp
P'$ is a subprocess of $\net$ then by sequentiality (Definition
\ref{def:causalapoc}) we have the thesis.

\item[{\sc C4}] Similar to the previous case.
\item[{\sc C5}] By definition of the projection function.
\item[{\sc C6}] By definition of well-annotated \AIOC{} and of
  projection the only case where there are two events with the same
  index and different global indexes is for the auxiliary
  communications in the projection of the while construct, where the
  conditions hold by construction.
\end{description}
\end{proof}

The next lemma shows that for every set of updates
$\rules$ the \APOC{} $\net$ and $\upd(\net)$ have the same set of weak traces.

\begin{lemma}\label{lemma:uptoupd} Let $\net$ be a \APOC{}. The following
properties hold:
\begin{enumerate}
\item if $\tuple{\ambientN,\upd(\net)} \arro{\eta} \tuple{\ambientN,\net'}$
with $\eta \in \{
\commLabel{o^?}{r_1}{v}{r_2}{x},
\tick, {\mathcal I},\texttt{\NoAdaptLabel}, \tau \}$ then
there exist $\net''$ s.t.~$\tuple{\ambientN,\net}
\arro{\eta_1} \dots \arro{\eta_k} \arro{\eta} \tuple{\ambientN,\net''}$ where
$\eta_i \in
\{\commLabel{o^*}{r_1}{v}{r_2}{x},\tau\}$ and $\upd(\net'') = \upd(\net')$.
\item if $\tuple{\ambientN,\net} \arro{\eta} \tuple{\ambientN,\net'}$ for
$\eta \in
\{\commLabel{o^?}{r_1}{v}{r_2}{x}, \tick,
{\mathcal I},\texttt{\NoAdaptLabel}, \tau\}$, then one of
the following holds: (A) $\tuple{\ambientN,\upd(\net)} \arro{\eta}
\tuple{\ambientN,\net''}$ such that $\upd(\net') = \upd(\net'')$, or (B)
$\upd(\net) = \upd(\net')$ and $\eta \in
\{\commLabel{o^*}{r_1}{v}{r_2}{x},\tau\}$;
%
\end{enumerate}
\end{lemma}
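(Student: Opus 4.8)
The plan is to reduce the statement to an analysis of the individual rewrite operations defining $\prop$ and $\ssim$, exploiting the fact that every such operation is \emph{invisible} at the level of weak traces. Since $\upd = \ssim \circ \prop$, I would establish the two simulation clauses first for $\prop$, then for $\ssim$, and finally compose them, using that $\upd$ produces a normal form (so that $\upd$ applied to any network reached from $\net$ by the absorbed private/$\tau$ steps agrees with $\upd(\net)$ wherever the clauses require it).

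For $\prop$, I would observe that each of its ten operations corresponds to a short sequence of \APOC{} transitions whose labels are all private communications $\commLabel{o^*}{r_1}{v}{r_2}{x}$ or silent actions $\tau$. For instance, operation (1) fires the enabled private send $\cout{o^*_n}{\mathit{true}}{r'}$ against its matching receive (a private-communication step) and then performs the while-unfold it licenses (a $\tau$ step); operations (5)--(8) distribute a guard value and resolve the local conditional; operations (9)--(10) fire the higher-order update/no-update notification and resolve the waiting $\psscope{n}{l}{r}{P'}$ through rules \ruleName{\AdaptRule}/\ruleName{\NoAdaptRule}. Hence $\net$ reaches $\prop(\net)$ by a sequence of private/$\tau$ transitions, and, crucially, these forced steps commute with any other enabled transition: by Lemma~\ref{lemma:minimalevent} the fired events are minimal, so they only ever enable, never disable, an $\eta$-transition. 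From this diamond property both clauses for $\prop$ follow --- a meaningful transition of $\prop(\net)$ is matched in $\net$ after prefixing the absorbed private/$\tau$ steps, and any transition of $\net$ either survives into $\prop(\net)$ or is one of those steps.

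For $\ssim$, its operations are of three kinds, each weak-trace-transparent. The renaming $\freshops{n}{o^?} \mapsto o^?$ is precisely the renaming applied in the definition of weak \APOC{} traces, so transitions of $\net$ and $\ssim(\net)$ carry labels that coincide after the weak-trace projection. The simplifications $\one \seqOp P \mapsto P$ and $\one \parOpP P \mapsto P$ form a transition-preserving structural congruence. The delicate kind is the erasure of the closing synchronisations $\cout{o^*_n}{\texttt{ok}}{r}$ and $\cinp{o^*_n}{\_}{r}$: I would argue that these always occur as \emph{matched pairs} on private operations, by the structure of the projection and conditions {\sc C1} and {\sc C5} of Definition~\ref{defin:synchwa}, so that erasing both is equivalent to firing their private synchronisation. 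Thus, to match in $\net$ a meaningful transition $\eta$ of $\upd(\net)$ that became enabled only because $\ssim$ removed a preceding closing communication, I would first fire that matched pair in $\net$ (a private step) and only then perform $\eta$; any residual closing communications that do not block $\eta$ are simply carried along and re-erased by $\upd$, preserving the invariant $\upd(\net'') = \upd(\net')$.

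The main obstacle is exactly this last point. Unlike the $\prop$ operations, $\ssim$ deletes these primitives \emph{syntactically}, without requiring them to be enabled, so I must show that whenever such an erasure changes what is immediately enabled, the deleted action is a private, matched communication that $\net$ can actually fire to catch up, and that it cannot interfere with any public action. This is where I would lean on the causality relation $\leqapoc$ (Definition~\ref{def:causalapoc}) together with the well-annotatedness conditions to guarantee the pairing and the absence of interference. Once transparency of each $\prop$- and $\ssim$-operation is established with the required enabledness witnesses, composing the two simulations yields both clauses of the lemma.
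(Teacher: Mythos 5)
Your proposal follows essentially the same route as the paper's own (much terser) proof: read the rewrite operations of $\upd = \ssim \circ \prop$ as weak transitions --- private communications and $\tau$ steps --- so that $\net$ catches up to $\upd(\net)$ by firing them (clause 1), while any transition of $\net$ either is one of these absorbed steps (case B) or survives into $\upd(\net)$ (case A). If anything you are more careful than the paper, which simply asserts that the $\upd$-operations ``correspond to performing weak transitions'' without addressing the point you correctly flag --- that $\ssim$ erases the closing \texttt{ok}-synchronisations syntactically, without an enabledness premise, so matching a transition unblocked by such an erasure requires showing the erased pair is actually fireable, an obligation the paper discharges only implicitly in the later bisimulation argument where $\upd(\net)$ is the projection of a connected \AIOC{}.
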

\begin{proof}
\mbox{}\\
\begin{enumerate}
\item The $\upd$ function corresponds to perform weak transitions, namely
transitions with labels in $\{\commLabel{o^*}{r_1}{v}{r_2}{x},\tau\}$. $\net$
may perform the enabled weak transitions that correspond to the application of
$\upd$ reducing to $\net'''$. Then, $\eta$ is enabled also in $\net'''$ and we
have $\tuple{\ambientN,\net'''} \arro{\eta} \tuple{\ambientN,\net''}$. At this
point we have that $\net''$ and $\net'$ may differ only for communication
primitives corresponding to weak transitions, removed by $\upd$.
\item Either the transition with label $\eta$ corresponds to one of the
transitions executed by function $\upd$ or not. In the first case statement
(B) holds trivially. Otherwise transition labeled by $\eta$ is still enabled
in $\upd(\net)$ and the thesis follows.
\end{enumerate}
\end{proof}


We now prove a few properties of transitions with label $\tick$.

\begin{lemma}\label{lemma:tick} If $\tuple{\state,\rules,{\mathcal I}}$
has a transition with label $\tick$ then, for each role $s \in
\roles({\mathcal I})$, $\role{\pi({\mathcal I},s),\state_s}{s}$ has a
transition with label $\tick$ and vice versa.
\end{lemma}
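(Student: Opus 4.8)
The plan is to reduce the statement to a purely syntactic characterisation of which processes can emit a $\tick$, and then to observe that the projection of Table~\ref{table:pi} preserves exactly this characterisation. First I would establish, by inspecting the rules, an auxiliary fact. At the \AIOC{} level the only axiom producing label $\tick$ is \ruleName{End}, and the only rules propagating it are \ruleName{Seq-end} and \ruleName{Par-end}; both require a $\tick$ from every immediate subterm (for \ruleName{Seq-end} the emitted label is that of the second component, so a $\tick$ forces a $\tick$ from both, and \ruleName{Sequence}/\ruleName{Parallel} are explicitly guarded by $\mu\neq\tick$), while no rule for interaction, assignment, conditional, iteration, scope, or $\zero$ emits $\tick$. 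Hence $\tuple{\state,\rules,{\mathcal I}}\arro{\tick}$ holds iff ${\mathcal I}$ lies in the sub-grammar ${\mathcal T}\gram\one\mid{\mathcal T}\seqOp{\mathcal T}\mid{\mathcal T}\parOpI{\mathcal T}$, i.e. iff ${\mathcal I}$ is built from $\one$ using only sequential and parallel composition; in particular the existence of such a transition is independent of $\state$ and $\rules$. The same inspection of Table~\ref{table:apoc-proc} (rules \ruleName{One}, \ruleName{Sequence}, \ruleName{Seq-end}, and their parallel analogues) shows $\role{P,\Gamma}{r}\arro{\tick}$ iff $P$ lies in the analogous grammar ${\mathcal T}_P\gram\one\mid{\mathcal T}_P\seqOp{\mathcal T}_P\mid{\mathcal T}_P\parOpP{\mathcal T}_P$, again independently of $\Gamma$.

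For the forward direction, suppose ${\mathcal I}\in{\mathcal T}$. From Table~\ref{table:pi} we have $\pi(\one,s)=\one$, and $\pi$ is homomorphic on $\seqOp$ and $\parOpI$, so a trivial induction on the ${\mathcal T}$-grammar yields $\pi({\mathcal I},s)\in{\mathcal T}_P$ for every role $s$; hence every projection admits a $\tick$-transition, which is more than the statement asks.

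For the converse I would prove by structural induction the contrapositive: if ${\mathcal I}\notin{\mathcal T}$ then there is a role $s\in\roles({\mathcal I})$ with $\pi({\mathcal I},s)\notin{\mathcal T}_P$, so some projection has no $\tick$-transition. Each base construct outside ${\mathcal T}$ — an interaction, an assignment $\assign{x}{r}{e}$, a conditional, an iteration, or a scope — carries a distinguished role whose projection is respectively an output, an assignment, an \code{if}, a \code{while}, or a coordinator-scope leaf, none of which lie in ${\mathcal T}_P$; this role witnesses the claim. For ${\mathcal I}={\mathcal I}_1\seqOp{\mathcal I}_2$ (resp. ${\mathcal I}_1\parOpI{\mathcal I}_2$) with, say, ${\mathcal I}_1\notin{\mathcal T}$, the inductive hypothesis gives $s\in\roles({\mathcal I}_1)\subseteq\roles({\mathcal I})$ with $\pi({\mathcal I}_1,s)\notin{\mathcal T}_P$, and since $\pi({\mathcal I},s)=\pi({\mathcal I}_1,s)\seqOp\pi({\mathcal I}_2,s)$ (resp. with $\parOpP$) belongs to ${\mathcal T}_P$ only when both components do, the same $s$ works.

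The one point requiring care — and the main obstacle — is the degenerate configurations built only from $\one$ and $\zero$: there $\roles({\mathcal I})=\emptyset$, so the universally quantified right-hand side is vacuously satisfied, yet if such a ${\mathcal I}$ contains a $\zero$ it cannot emit $\tick$, so the two sides genuinely disagree. The characterisation above therefore yields the equivalence precisely under the proviso that either $\roles({\mathcal I})\neq\emptyset$ or ${\mathcal I}$ is $\zero$-free, which is exactly the situation for the live configurations reachable from initial (hence $\zero$-free) \AIOC{}s to which the lemma is applied in the bisimulation. Under this proviso the induction closes with no exceptions, and the rest is the routine rule inspection establishing the two tickability characterisations.
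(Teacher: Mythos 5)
Your proposal is correct, and it is a sound, fully detailed realization of what the paper disposes of in a single line (``By structural induction on ${\mathcal I}$''). The packaging differs slightly: rather than inducting directly on the bi-implication, you first characterise $\tick$-ability syntactically on both sides (exactly the terms built from $\one$ by $\seqOp$ and $\parOpI$, via \ruleName{End}/\ruleName{One}, \ruleName{Seq-end}, \ruleName{Par-end}, and the $\mu\neq\tick$ guards on \ruleName{Sequence}/\ruleName{Parallel}), and then transfer the characterisation through the fact that $\pi$ is homomorphic on sequence and parallel with $\pi(\one,s)=\one$. This buys two things the paper's terse proof leaves invisible. First, since $\roles(\one)=\emptyset$, any tickable \AIOC{} has $\roles({\mathcal I})=\emptyset$, so the forward direction as literally quantified (``for each $s\in\roles({\mathcal I})$'') is vacuous; the stronger conclusion you actually prove --- $\pi({\mathcal I},s)$ ticks for \emph{every} role $s$ --- is what the correctness proof genuinely needs, since in the sequence case of Theorem~\ref{teo:final} the lemma is invoked at the roles $r_1,r_2$ of an interaction of the \emph{continuation} ${\mathcal I}'$, which need not belong to $\roles({\mathcal I})$. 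Second, you rightly observe that the ``vice versa'' direction fails for degenerate $\zero$-containing terms with empty role sets (e.g.\ ${\mathcal I}=\zero$, where the right-hand side holds vacuously but no $\tick$ exists), and your proviso restricts the claim to exactly the configurations on which the bisimulation uses it.

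One small looseness, short of a genuine gap: in your contrapositive step for ${\mathcal I}_1\seqOp{\mathcal I}_2$ (and the parallel case), the component outside ${\mathcal T}$ may itself violate your proviso --- e.g.\ ${\mathcal I}_1=\zero$ with $\roles({\mathcal I}_1)=\emptyset$ while $\roles({\mathcal I})\neq\emptyset$ --- so the inductive hypothesis cannot be applied to it as written. This is easily patched: either note that in that situation the \emph{other} component must lie outside ${\mathcal T}$ with nonempty roles (terms of ${\mathcal T}$ have empty role sets, so otherwise $\roles({\mathcal I})=\emptyset$), and apply the hypothesis there; or, more directly, observe that $\pi(\zero,s)=\zero$ for every $s$, so a $\zero$ occurring under $\seqOp/\parOpI$ contexts survives into every projection and blocks $\tick$ there, making any $s\in\roles({\mathcal I})$ a witness. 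With that patch the induction closes and the argument is complete.
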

\begin{proof} By structural induction on ${\mathcal I}$.
\end{proof}

%
The next lemma shows that if two matching events are enabled in the projection
of a \AIOC{}, then the corresponding interaction is enabled in the \AIOC{}.
\begin{lemma}\label{lemma:synchenabled} Let ${\mathcal I}$ be a \AIOC{}
obtained from a well-annotated connected \AIOC{} via $0$ or more transitions
and $n:\comm{o^?}{r_1}{e}{r_2}{x}$ be an interaction in ${\mathcal I}$. If $n:
\cout{o^?}{e}{r_1} $ and $n:\cinp{o^?}{x}{r_2}$ are matching events and are
both enabled in $\proj({\mathcal I},\Sigma)$ then $n:
\comm{o^?}{r_1}{e}{r_2}{x} $ is enabled.
\end{lemma}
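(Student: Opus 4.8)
The plan is to characterise enabledness in terms of minimality for the two causality relations and to move between the \AIOC{} and the \APOC{} levels via Lemma~\ref{lemma:ev}. Write $\xi$ for the global index of $\comm{o^?}{r_1}{e}{r_2}{x}$, and let $f_\xi$ and $t_\xi$ denote its sending event $\cout{o^?}{e}{r_1}$ and receiving event $\cinp{o^?}{x}{r_2}$, so that $t_\xi = \overline{f_\xi}$ and, by the synchronisation clause of Definition~\ref{def:causalAIOC}, $f_\xi \leqaioc t_\xi$. Since $\mathcal{I}$ is obtained from a connected \AIOC{} by transitions, connectedness is preserved, so Lemma~\ref{lemma:ev} applies to $\net = \proj(\mathcal{I},\Sigma)$ and the matching events are well defined.

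First I would exploit the \APOC{} side. By hypothesis $f_\xi$ and $t_\xi$ are both enabled in $\net$, so Lemma~\ref{lemma:minimalevent} gives that both are minimal with respect to $\leqapoc$. Next I would show, by contradiction, that $f_\xi$ is minimal with respect to $\leqaioc$. Assume it is not: then there is an event $\ev' \in \event(\mathcal{I})$ with $\ev' \leqaioc f_\xi$ and $\ev' \neq f_\xi$. By Lemma~\ref{lemma:ev}(1) the events of $\mathcal{I}$ are also events of $\net$, and by Lemma~\ref{lemma:ev}(2) the relation $\ev' \leqaioc f_\xi$ implies $\ev' \leqapoc f_\xi$ or $\ev' \leqapoc \overline{f_\xi} = t_\xi$. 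Moreover $\ev' \neq t_\xi$: otherwise $t_\xi \leqaioc f_\xi \leqaioc t_\xi$ would force $f_\xi = t_\xi$ by antisymmetry of the partial order. Hence in either case $\ev'$ is a strictly smaller $\leqapoc$-predecessor of $f_\xi$ or of $t_\xi$, contradicting their $\leqapoc$-minimality. Therefore $f_\xi$ is $\leqaioc$-minimal.

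Finally I would deduce that the interaction is enabled in $\mathcal{I}$ from the $\leqaioc$-minimality of $f_\xi$, and this is the step I expect to be the main obstacle: it is the \AIOC{} counterpart of the converse of Lemma~\ref{lemma:minimalevent}, so I must rule out, case by case, every structural reason for which an interaction occurring in $\mathcal{I}$ could fail to be enabled. Inspecting Definition~\ref{def:causalAIOC} together with the \AIOC{} rules, the interaction can be blocked only if it lies to the right of a sequential composition whose left component has not terminated, inside a scope not yet entered, or inside a branch or body of an if or while whose guard has not been evaluated; in each case the blocking construct contributes a \emph{present} event that is strictly $\leqaioc$-below $f_\xi$ (an initial event of the non-terminated left component via the sequentiality clause, the scope-initialisation event $\uparrow_{\xi'}$, or the relevant guard event), contradicting minimality. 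Unlike a direct structural induction coupling the two levels, this argument stays entirely at the \AIOC{} level and needs no cross-role coordination reasoning; to make the sequential case precise I would use that a subprocess admitting a $\tick$-transition has no pending events — provable from rules \ruleName{End}, \ruleName{Seq-end} and \ruleName{Par-end} — so that a non-terminated left component necessarily carries such a present predecessor. With this the interaction must be enabled, completing the proof.
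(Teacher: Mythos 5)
Your route is genuinely different from the paper's (which proceeds by structural induction on $\mathcal{I}$ using $\transI$/$\transF$ and connectedness for sequence), but it hinges on a claim that is false: connectedness is \emph{not} preserved by transitions. Connectedness for parallel survives, but connectedness for sequence is destroyed, e.g., by rule \ruleName{While-unfold}: from $\while{b \at r}{\comm{o}{r_1}{e}{r_2}{x}}$ with $r \notin \{r_1,r_2\}$ one reaches $\comm{o}{r_1}{e}{r_2}{x} \seqOp \while{b \at r}{\comm{o}{r_1}{e}{r_2}{x}}$, where $\transF$ of the left component is $\{\pair{r_1}{r_2}\}$, $\transI$ of the right is $\{\pair{r}{r}\}$, and the role sets are disjoint. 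The paper is explicitly aware of this: its own proof of the lemma splits into the case where $\mathcal{I}$ is still connected for sequence (handled by structural induction) and the case where it is not (handled by a separate monotonicity remark). As a consequence, you may not apply Lemma~\ref{lemma:ev} to $\proj(\mathcal{I},\Sigma)$ for an arbitrary reachable $\mathcal{I}$: the proof of its clause (2) uses connectedness for sequence in the sequentiality case, and the clause genuinely fails on the process above. There, \AIOC{} sequentiality gives $[f_\xi]_{r_1} \leqaioc \ev_{\xi'}$ for the guard while-event $\ev_{\xi'}$ at the coordinator $r$; yet in the fresh projection nothing precedes the while construct in role $r$, so $\ev_{\xi'}$ is $\leqapoc$-minimal, and guard events have no matching event, so neither disjunct $[f_\xi]_{r_1} \leqapoc \ev_{\xi'}$ nor $[f_\xi]_{r_1} \leqapoc \overline{\ev_{\xi'}}$ holds. (The invariant of shape (2) used in the main theorem is maintained for $\prop(\net)$ of the \emph{evolving} network, where the auxiliary while communications supply the missing dependencies; it is not available for the freshly projected $\proj(\mathcal{I},\Sigma)$ that this lemma is about.)

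This breaks your step 2 --- the transfer of $\leqapoc$-minimality to $\leqaioc$-minimality --- exactly in the case the paper must treat separately, and you cannot rescue it by noting that the instances you need (those with $\ev_2$ among the two enabled communication events) happen to hold: that restricted statement is essentially equivalent to the lemma itself, since an interaction enabled at the \AIOC{} level has no $\leqaioc$-predecessors at all, so invoking it would be circular. Repairing the step amounts to redoing the paper's argument: for connected-for-sequence $\mathcal{I}' \seqOp \mathcal{I}''$, enabledness of both events forces $\transF(\mathcal{I}') = \emptyset$ (any pair in $\transF(\mathcal{I}')$ would share a role with $\{r_1,r_2\}$ and block one of the two events in the projection), hence $\mathcal{I}'$ has a $\tick$ and the interaction is enabled; the non-connected case is then dispatched by observing that dropping connectedness for sequence can only enable \emph{more} \APOC{} transitions, never disable required ones. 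Your step 3 (from $\leqaioc$-minimality to \AIOC{} enabledness) is sound in substance and close in spirit to that case analysis --- modulo the small point that a non-tickable left component is guaranteed to contain an event only because reachable processes are $\zero$-free in the relevant positions --- but without a sound step 2 the proof as proposed does not go through.
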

\begin{proof} 
Note that ${\mathcal I}$ is well-annotated and connected for parallel.

If ${\mathcal I}$ is also connected for sequence, then the
proof is by structural induction on ${\mathcal I}$.  The cases for $\one$,
$\zero$, and scopes, if, and while constructs are trivial.
For parallel composition just consider that since the two events have the same
global index then they are from the same component, and the thesis follows by
inductive hypothesis.
Let us consider sequential composition. Suppose ${\mathcal I} = {\mathcal
I}'\seqOp{\mathcal I}''$. If $n: \comm{o^?}{r_1}{e}{r_2}{x} \in \mathcal{I}'$
then the thesis follows by inductive hypothesis. Otherwise, by inductive
hypothesis $n: \comm{o^?}{r_1}{e}{r_2}{x}$ is enabled in ${\mathcal I}''$.
Thus, $r_1 \rightarrow r_2 \in \transI({\mathcal I}'')$. From connectedness
for sequence $\forall s_1 \rightarrow s_2 \in \transF(\mathcal{I}')$ then
$\{r_1,r_2\} \cap \{s_1,s_2\} \neq \emptyset$. This is not possible since
otherwise at least one of the events $n: \cout{o^?}{e}{r_1} $ and
$n:\cinp{o^?}{x}{r_2}$ would not be enabled. Thus, the only possibility is
$\transF(\mathcal{I}') = \emptyset$. This implies that  ${\mathcal I}'$ has a
transition with label $\tick$. Thus, $n: \comm{o^?}{r_1}{e}{r_2}{x}$ is
enabled in $\mathcal{I}$.

If ${\mathcal I}$ is not connected for sequence, then in the projected
\APOC{} some more transitions may be enabled, but no required
transitions may be disabled, thus the thesis follows.
%
\end{proof}


\begin{definition}[Weak System Bisimilarity]\label{def:bisim} A  weak system
bisimulation is a relation $R$ between \AIOC{} systems and \APOC{} systems
such that if \\
$(\tuple{\state,\rules,{\mathcal
I}},\tuple{\rules',\net}) \in R$ then:
\begin{itemize}
\item if $\tuple{\state,\rules,{\mathcal I}} \arro{\mu}
\tuple{\state'',\rules'',{\mathcal I}''}$ then\\
$\tuple{\rules',\net} \arro{\eta_1} \dots \arro{\eta_k} \arro{\eta}
\tuple{\rules''',\net'''}$ with $\forall i \in [1..k], \eta_i \in
\{\commLabel{o^*}{r_1}{v}{r_2}{x},\tau\}$ and
$(\tuple{\state'',\rules'',{\mathcal
I}''},\tuple{\rules''',\net'''}) \in R$ and $\eta = \mu$ or $\eta =
\commLabel{\freshops{n}{o^?}}{r_1}{v}{r_2}{x}$ and $\mu =
\commLabel{o^?}{r_1}{v}{r_2}{x}$;\\

\item if $\tuple{\rules',\net} \arro{\eta}
\tuple{\rules''',\net'''}$ with $\eta \in \{
\commLabel{o^?}{r_1}{v}{r_2}{x};
\tick\seqOp {\mathcal I} \seqOp \texttt{\NoAdaptLabel}
\seqOp \rules''', \tau \}$ then one of the following two holds:\\
\begin{itemize}
\item $\tuple{\state,\rules,{\mathcal I}} \arro{\mu}
\tuple{\state'',\rules,{\mathcal I}''}$ , with $\eta = \mu$ or $\eta =
\commLabel{\freshops{n}{o^?}}{r_1}{v}{r_2}{x}$ and $\mu =
\commLabel{o^?}{r_1}{v}{r_2}{x}$ and it holds that
$(\tuple{\state'',\rules'',{\mathcal I}''},$
$\tuple{\rules''',\net'''}) \in R$;
\item $\eta \in
\{\commLabel{o^*}{r_1}{v}{r_2}{x},\commLabel{o^*}{r_1}{X}{r_2}{\_},\tau\}$ and
it holds that $(\tuple{\state,\rules,{\mathcal I}},$
$\tuple{\rules''',\net'')} \in R$\\
\end{itemize}
\end{itemize}

Weak system bisimilarity $\bisim$ is the largest weak system bisimulation.
\end{definition}

The following result states that weak system bisimilarity implies  weak trace
equivalence.

\begin{lemma}\label{lemma:bis2trsynch} Let
$\tuple{\state,\rules,{\mathcal I}}$ be a \AIOC{} system and
$\tuple{\rules',\net}$ a \APOC{} system. \\If
$\tuple{\state,\rules,{\mathcal I}} \bisim \tuple{\rules',\net}$
then the \AIOC{} system $\tuple{\state,\rules,{\mathcal I}}$ and the
\APOC{} system $\tuple{\rules',\net}$ are weak trace equivalent.
\end{lemma}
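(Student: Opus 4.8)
The plan is to establish the two weak-trace inclusions separately, in each case transporting a given trace of one system along the bisimulation to obtain a matching execution of the other, and then checking that the weakening operation — erasure of $\tau$ and of private $o^*$ labels, together with the renaming $\freshops{n}{o^?} \mapsto o^?$ of extended operations — identifies the two resulting sequences of observable labels. Throughout, the key invariant I would rely on is that $\bisim$ is a bisimulation, so that the two endpoints reached after each matching block are again related, allowing the construction to be iterated.

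For the inclusion from $\AIOC$ to $\APOC$, I would fix an arbitrary (finite or infinite) $\AIOC$ trace $\tuple{\state,\rules,{\mathcal I}} = \tuple{\state_1,\rules_1,{\mathcal I}_1} \arro{\mu_1} \tuple{\state_2,\rules_2,{\mathcal I}_2} \arro{\mu_2} \cdots$ and build an $\APOC$ execution block by block. Since $\tuple{\state_1,\rules_1,{\mathcal I}_1} \bisim \tuple{\rules',\net}$, the first clause of Definition~\ref{def:bisim} answers $\mu_1$ with a finite, possibly empty, sequence of labels in $\{\commLabel{o^*}{r_1}{v}{r_2}{x},\tau\}$ followed by a label $\eta_1$ that is either $\mu_1$ itself or its $\freshops{n}{o^?}$-renamed variant $\commLabel{\freshops{n}{o^?}}{r_1}{v}{r_2}{x}$, reaching a state again related to $\tuple{\state_2,\rules_2,{\mathcal I}_2}$. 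Concatenating the blocks obtained by iterating this step yields a legal $\APOC$ execution, even in the infinite case, since each block is finite. The interposed $o^*$ and $\tau$ labels vanish under weakening, while each $\eta_i$ contributes exactly the same observable label as $\mu_i$ once the renaming is undone; hence the weak trace of the constructed $\APOC$ execution equals the weak trace of the chosen $\AIOC$ trace.

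For the reverse inclusion I would argue symmetrically, using instead the second clause of Definition~\ref{def:bisim}. Now each $\APOC$ step with label $\eta$ is either matched by a single $\AIOC$ step whose label carries the same observable content (up to the $\freshops{n}{o^?}$ renaming), or is absorbed as a private/silent move while the $\AIOC$ system stays put and remains related. In the first subcase the observable label is preserved; in the second, $\eta \in \{\commLabel{o^*}{r_1}{v}{r_2}{x}, \commLabel{o^*}{r_1}{X}{r_2}{\_},\tau\}$ is erased by weakening and contributes nothing, consistently with the $\AIOC$ not moving. Public interactions, together with $\tick$, updates ${\mathcal I}$, the label $\texttt{\NoAdaptLabel}$, and $\rules$-changes, necessarily fall into the first subcase, so every observable $\APOC$ label is reproduced on the $\AIOC$ side, and the two weak traces again coincide.

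The routine part is this local label bookkeeping; the only points requiring genuine care are infinite traces and the label correspondence under weakening. For infinite traces I would note that the block-by-block construction produces a bona fide infinite execution, and that if infinitely many $\APOC$ steps are absorbed without any $\AIOC$ move, both weak traces remain finite and still agree, precisely because the absorbed labels are the ones erased by weakening. The renaming $\freshops{n}{o^?} \mapsto o^?$ must be tracked so that a matching step labelled $\commLabel{\freshops{n}{o^?}}{r_1}{v}{r_2}{x}$ on the $\APOC$ side is identified with $\commLabel{o^?}{r_1}{v}{r_2}{x}$ in the weak trace, as licensed by the $\APOC$ weak-trace definition. Granting these observations, the sets of weak traces of the two systems coincide, which is the claim.
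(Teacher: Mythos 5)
Your proposal is correct and follows essentially the same route as the paper: the paper's proof is a terse coinductive argument that transports each trace step by step along the bisimulation (first clause for the $\AIOC\to\APOC$ direction, second clause for the converse) and concludes by composition. You merely spell out explicitly the bookkeeping the paper leaves implicit --- erasure of private/$\tau$ labels, the $\freshops{n}{o^?}$ renaming, and the infinite-trace case --- all of which is consistent with the paper's definitions.
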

\begin{proof} The proof is by coinduction.  Take a \AIOC{} trace
$\mu_1,\mu_2,\dots$ of the
\AIOC{} system. From bisimilarity, the \APOC{} system has a transition with
label $\eta_1$ matching $\mu_1$. After the transition, the \AIOC{} system and
the \APOC{} system are again bisimilar. Thus the
\APOC{} system has a trace $\eta_2,\dots$ matching $\mu_2,\dots$. By
composition the \APOC{} system has a trace $\eta_1,\eta_2,\dots$ as desired.
The opposite direction is analogous.
\end{proof}

We can now prove our main theorem, that states that given a connected
well-annota\-ted \AIOC{} process ${\mathcal I}$ and a state $\Sigma$
the \APOC{} network obtained as its projection has the same behaviours
of ${\mathcal I}$.

\begin{theorem} For each initial, connected \AIOC{} process ${\mathcal I}$,
each state $\state$, and each set of updates $\rules$,
the \AIOC{} system $\tuple{\state, \rules,{\mathcal I}}$ and the \APOC{}
system $\tuple{\rules,\proj({\mathcal I},\state)}$ are weak trace
equivalent.
\end{theorem}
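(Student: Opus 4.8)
The plan is to exhibit a weak system bisimulation relating the \AIOC{} system to its projection and then invoke Lemma~\ref{lemma:bis2trsynch}, which turns weak system bisimilarity into weak trace equivalence. Concretely, I would take
$$R = \{(\tuple{\state,\rules,\mathcal{I}},\tuple{\rules,\net}) \mid \upd(\net) = \proj(\mathcal{I},\state)\}$$
restricted to connected well-annotated $\mathcal{I}$ reachable from an initial \AIOC{} (and with the equality read modulo the renaming of extended operations $\freshops{n}{o^?}$ into $o^?$ performed by $\ssim$). The pair of interest $(\tuple{\state,\rules,\mathcal{I}},\tuple{\rules,\proj(\mathcal{I},\state)})$ lies in $R$ because $\upd$ acts as the identity on a freshly projected network, and Lemma~\ref{lemma:IOCwell} guarantees that every such $\net$ is well-annotated, so the event apparatus applies. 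Phrasing $R$ through $\upd$ is the crux: $\upd$ collapses exactly those partially executed conditionals, loops, and scope protocols that make a single atomic \AIOC{} step correspond to several \APOC{} steps, and Lemma~\ref{lemma:uptoupd} then lets me reason about $\net$ as if it were $\proj(\mathcal{I},\state)$.

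For the \AIOC{}-to-\APOC{} direction I would proceed by case analysis on the rule justifying $\tuple{\state,\rules,\mathcal{I}} \arro{\mu} \tuple{\state'',\rules'',\mathcal{I}''}$, reasoning on $\upd(\net) = \proj(\mathcal{I},\state)$ and transferring the conclusion back to $\net$ via Lemma~\ref{lemma:uptoupd} (clause~1). For \ruleName{Interaction} the matching send and receive events are minimal for $\leqapoc$, hence enabled, and \ruleName{Synch} produces the required label (possibly decorated by $\freshops{n}{o^?}$, as explicitly permitted by Definition~\ref{def:bisim}); \ruleName{Assign} is matched by a single $\tau$; \ruleName{End} is handled by Lemma~\ref{lemma:tick}; and \ruleName{Change-Updates} is matched directly. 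The interesting cases are \ruleName{If-then}/\ruleName{If-else}, \ruleName{While-unfold}/\ruleName{While-exit}, and \ruleName{\AdaptRule}/\ruleName{\NoAdaptRule}: here the single \AIOC{} step is simulated by the coordinator's local choice followed by the auxiliary private communications that propagate it to the other roles (for \ruleName{\AdaptRule} this is \ruleName{Lead-\AdaptRule} together with its higher-order sends, made observable by \ruleName{Lift-\AdaptRule}, whose connectedness side condition holds because only connected updates are admitted). In each such case I must verify that applying $\upd$ to the reduct yields precisely $\proj(\mathcal{I}'',\state'')$, re-establishing membership in $R$.

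For the \APOC{}-to-\AIOC{} direction I would use Lemma~\ref{lemma:uptoupd} (clause~2): a transition of $\net$ is either one of the private or $\tau$ steps absorbed by $\upd$ — in which case $\upd(\net)$ is unchanged and clause~(B) of Definition~\ref{def:bisim} keeps the two systems related — or it survives in $\upd(\net)=\proj(\mathcal{I},\state)$. In the latter case I show the \AIOC{} matches it: if $\eta$ is an interaction, Lemma~\ref{lemma:executematching} forces the two synchronised events to be matching, and Lemma~\ref{lemma:synchenabled} lifts their joint enabledness in the projection to enabledness of the corresponding \AIOC{} interaction; the observable update label and $\tick$ are handled analogously, using well-annotatedness and Lemma~\ref{lemma:tick}. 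Minimality (Lemma~\ref{lemma:minimalevent}) is what rules out any \APOC{} behaviour that would diverge from the order imposed by $\leqaioc$.

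The main obstacle is the bookkeeping that re-establishes the invariant $\upd(\net') = \proj(\mathcal{I}'',\state'')$ after each matched pair of moves, especially for conditionals, loops, and scope updates, where an atomic \AIOC{} transition is mirrored by a whole burst of coordinator-driven auxiliary communications in the \APOC{}. This is exactly where the event-structure development pays off: Lemma~\ref{lemma:ev} aligns $\leqaioc$ with $\leqapoc$, and the well-annotatedness conditions of Definition~\ref{defin:synchwa} (granted by Lemma~\ref{lemma:IOCwell}) ensure that no stray interleaving of these auxiliary messages with unrelated parallel actions can break the correspondence, so that $\upd$ deterministically collapses the partial protocol back to the projection of the reduct. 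Once $R$ is shown to be a weak system bisimulation, Lemma~\ref{lemma:bis2trsynch} yields the desired weak trace equivalence.
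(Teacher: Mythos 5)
Your overall strategy is the paper's own: the candidate relation is phrased through $\upd$, the initial pair is handled by observing that $\upd$ is the identity on fresh projections together with Lemma~\ref{lemma:IOCwell}, general networks are reduced to projections via Lemma~\ref{lemma:uptoupd}, the two directions of the game are discharged by structural induction using Lemmas~\ref{lemma:tick}, \ref{lemma:minimalevent}, \ref{lemma:executematching} and \ref{lemma:synchenabled}, and weak trace equivalence follows by Lemma~\ref{lemma:bis2trsynch}. So the architecture matches; the problem is in the definition of $R$ itself.

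The genuine gap is that your $R$ is strictly coarser than the paper's, and the missing components are not recoverable on the fly. The paper's relation additionally requires $\devent(\mathcal{I}) \subseteq \event(\prop(\net))$, the causality reflection $\forall \ev_1,\ev_2 \in \devent(\mathcal{I}).\ \ev_1 \leqaioc \ev_2 \Rightarrow \ev_1 \leqapoc \ev_2 \vee \ev_1 \leqapoc \overline{\ev_2}$, and that $\upd(\net)$ be well-annotated, while $\mathcal{I}$ is only required to be \emph{reachable from} a well-annotated connected \AIOC{}. You instead restrict to ``connected well-annotated $\mathcal{I}$ reachable from an initial \AIOC{}'', and this makes $R$ not closed under the bisimulation game: \ruleName{While-unfold} duplicates indexes, so the reduct of a well-annotated process is in general no longer well-annotated (this is precisely why the paper introduces global indexes and condition {\sc C6} of Definition~\ref{defin:synchwa}). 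Moreover, you propose to invoke Lemma~\ref{lemma:ev} to align $\leqaioc$ with $\leqapoc$, but that lemma (like Lemma~\ref{lemma:IOCwell}) is proved only for fresh projections of well-annotated connected \AIOC{}s, whereas after \ruleName{Lead-\AdaptRule} the network contains dynamically inserted update code with fresh indexes plus leftover auxiliary closing communications; the paper must carry the event invariants through $R$ and re-establish them case by case --- e.g., in the scope case the scope-termination event is \emph{replaced} by the events of the auxiliary closing interactions, and in the while case the dependencies of the unfolded body on the guard event are recovered through the auxiliary synchronisations not removed by $\prop$ --- precisely so that the causal constraints survive when the updated scope sits inside a larger sequential context. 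Without carrying these invariants, your appeals to Lemmas~\ref{lemma:minimalevent}, \ref{lemma:executematching} and \ref{lemma:synchenabled} in the \APOC{}-to-\AIOC{} direction are unsupported for mid-protocol reachable networks. The repair is exactly the paper's stronger $R$, with the event conditions verified in each inductive case.
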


\begin{proof} We prove that the relation $R$ below is a weak system
bisimulation. {\footnotesize
\[ R=\sset{(\tuple{\state,\rules,{\mathcal I}},\tuple{\rules,\net})}{ 
\upd(\net) = \proj({\mathcal I},\Sigma),\\
\devent(\mathcal{I}) \subseteq \event(\prop(\net)),\\
	\forall \ev_1, \ev_2 \in \devent(\mathcal{I}) . \\
	\ev_1 \leqaioc \ev_2 \Rightarrow\\ \ev_1 \leqapoc \ev_2 \vee \ev_1
\leqapoc \overline{\ev_2} }
\] } where ${\mathcal I}$ is obtained from a well-annotated connected \AIOC{}
via $0$ or more transitions and $\upd(\net)$ is a well-annotated \APOC{}.

To ensure that proving that the relation above is a bisimulation implies our
thesis, let us show that the pair $(\tuple{\state,\rules,{\mathcal
I}},\tuple{\rules,\proj(\mathcal{I},\Sigma)})$ from the theorem statement
belongs to $R$. Note that here $\mathcal{I}$ is well-annotated and connected,
and for each such $\mathcal{I}$ we have  $\upd(\proj(\mathcal{I},\Sigma)) =
\proj(\mathcal{I},\Sigma)$. From Lemma~\ref{lemma:IOCwell} $\proj({\mathcal
I},\Sigma)$ is well annotated, thus $\upd(\proj(\mathcal{I},\Sigma))$ is well
annotated.
      
Observe that $\prop$ is the identity on $\proj({\mathcal I},\Sigma)$, thus
from Lemma~\ref{lemma:ev} we have that the conditions $\devent(\mathcal{I})
\subseteq \event(\prop(\net))$ and    $\forall \ev_1, \ev_2 \in
\devent(\mathcal{I}) .
    \ev_1 \leqaioc \ev_2 \Rightarrow \ev_1 \leqapoc \ev_2 \vee \ev_1 \leqapoc
\overline{\ev_2}$ are satisfied. We now prove that $R$ is a weak system
bisimulation. From Lemma~\ref{lemma:bis2trsynch}, this implies weak trace
equivalence.

%


To prove that $R$ is a weak system bisimulation it is enough to prove that for
each $(\tuple{\state,\rules,{\mathcal I}},\tuple{\rules,\net})$
where $\net=\proj({\mathcal I,\Sigma})$ we have:
\begin{itemize}
\item if $\tuple{\state,\rules,{\mathcal I}} \arro{\mu}
\tuple{\state'',\rules'',{\mathcal I}''}$ then\\
$\tuple{\rules,\net} \arro{\eta} \tuple{\rules''',\net'''}$\\
with $(\tuple{\state'',\rules'',{\mathcal
I}''},\tuple{\rules''',\net'''}) \in R$ and\\ $\eta = \mu$ or $\eta =
\commLabel{\freshops{n}{o^?}}{r_1}{v}{r_2}{x}$ and $\mu =
\commLabel{o^?}{r_1}{v}{r_2}{x}$\seqOp
\item if $\tuple{\rules,\net} \arro{\eta}
\tuple{\rules''',\net'''}$ with\\ $\eta \in \{
\commLabel{o^?}{r_1}{v}{r_2}{x}\seqOp
\tick\seqOp {\mathcal I} \seqOp \texttt{\NoAdaptLabel}
\seqOp$ $,\rules'''\seqOp \tau \}$ then\\
$\tuple{\state,\rules,{\mathcal I}} \arro{\mu}
\tuple{\state'',\rules,{\mathcal I}''}$ and \\
$(\tuple{\state'',\rules'',{\mathcal
I}''},\tuple{\rules''',\net'''}) \in R$ and $\eta = \mu$ or $\eta =
\commLabel{\freshops{n}{o^?}}{r_1}{v}{r_2}{x}$ and $\mu =
\commLabel{o^?}{r_1}{v}{r_2}{x}$.
\end{itemize} In fact, consider $\net$ with $\upd(\net)=\proj({\mathcal
I},\Sigma)$. The case for labels $\Sigma, \rules$ is trivial. If
$\tuple{\state,\rules,{\mathcal I}} \arro{\mu}
\tuple{\state'',\rules'',{\mathcal I}''}$, then by hypothesis
$\upd(\net)
\arro{\eta} \net'''$. The thesis follows from Lemma~\ref{lemma:uptoupd} (case
one).
If instead $\tuple{\rules,\net} \arro{\eta}
\tuple{\rules''',\net'''}$ with $\eta \in \{
\commLabel{o^?}{r_1}{v}{r_2}{x},
\tick, {\mathcal I},\texttt{\NoAdaptLabel}, \tau \}$ then
thanks to Lemma~\ref{lemma:uptoupd} we have one of the following: (A)
$\upd(\net) \arro{\eta} \net''$ such that $\upd(\net''') = \upd(\net'')$, or
(B) $\upd(\net) = \upd(\net''')$ and $\eta \in
\{\commLabel{o^*}{r_1}{v}{r_2}{x},\commLabel{o^*}{r_1}{X}{r_2}{\_},\tau\}$. In
case (A) we have $\tuple{\rules,\upd(\net)} \arro{\eta}
\tuple{\rules'',\net''}$. Then we have
$\tuple{\state,\rules,{\mathcal I}} \arro{\mu}
\tuple{\state'',\rules'',{\mathcal I}''}$ and
$(\tuple{\state'',\rules'',{\mathcal I}''}, $
$\tuple{\rules'',\net''}) \in R$. The thesis follows since
$\upd(\net''')=\upd(\net'')$. In case (B) the step is matched by the
\AIOC{} by staying idle, following the second option in the definition of weak
system bisimilarity.\\

Thus, we have to prove the two conditions above. The proof is by structural
induction on the \AIOC{} ${\mathcal I}$. All the subterms of a well-annotated
connected \AIOC{} are well-annotated and connected, thus the induction can be
performed. We consider both challenges from the \AIOC{} ($\rightarrow$) and
from the \APOC{} ($\leftarrow$). The case for label $\tick$ follows from
Lemma~\ref{lemma:tick}. The case for labels $\Sigma, \rules$ is trivial. Let
us consider the other labels, namely $\commLabel{o^?}{r_1}{v}{r_2}{x},
{\mathcal I},\texttt{\NoAdaptLabel}$, and $\tau$.

Note that no transition (at the \AIOC{} or at the \APOC{} level) with one of
these labels can change the set of updates $\rules$.
Thus, in the following, we will not write it. Essentially, we
will use \AIOC{} processes and \APOC{} networks instead of
\AIOC{} systems and \APOC{} systems respectively. Note that \APOC{} networks
also include the state, while this is not the case for \AIOC{} processes. For
\AIOC{} processes, we assume to associate to them the state $\state$, and
comment on its changes whenever needed.

\begin{description}
\item[Case $\one$, $\zero$:] trivial.
\item[Case $n:\assign{x}{r}{e}$: ] the assignment changes the global state in
  the \AIOC{}, and the local state of role $r$ in the \APOC{} in a
  corresponding way.
\item[Case $n:\comm{o^?}{r_1}{e}{r_2}{x}$:] trivial unless the interaction has
  been created by an update step. In this last case, note that the
  mismatch on the name of the operation, namely between $\freshops{n}{o^?}$ in
  the \APOC{} and $o^?$ in the \AIOC{}, is solved thanks to the definition of
  weak system bisimilarity.
\item[Case ${\mathcal I}\seqOp{\mathcal I}'$: ] from the definition of the
projection function we have that \\ 
$\net = \parallel_{r \in
\roles(\mathcal{I}\seqOp\mathcal{I}')} (\pi({\mathcal I},r)\seqOp
\pi({\mathcal I}',r),\Sigma_r)_r$.

\begin{description}
\item[$\rightarrow$] Assume that ${\mathcal I}\seqOp{\mathcal I}' \arro{\mu}
{\mathcal I}''$ with $\mu \in
\{\commLabel{o^?}{r_1}{v}{r_2}{x}\seqOp \mathcal{I}\seqOp
\texttt{\NoAdaptLabel},\tau \}$.  There are two possibilities: either
${\mathcal I}
\arro{\mu} {\mathcal I}'''$ and ${\mathcal I}'' = {\mathcal
I}'''\seqOp{\mathcal I}'$ or ${\mathcal I}$ has a transition with label
$\tick$ and ${\mathcal I}'
\arro{\mu} {\mathcal I}''$. In the first case by inductive hypothesis
$\parallel_{r \in \roles(\mathcal{I})} \role{\pi({\mathcal I},r),\Sigma_r}{r}
\arro{\eta} \net'''$ with $\eta$ corresponding to $\mu$ and $\upd(\net''') =
\parallel_{r \in
\roles(\mathcal{I})}
\role{\pi({\mathcal I}''',r),\Sigma_r'}{r}$. Thus $\parallel_{r \in
\roles(\mathcal{I})}
\role{\pi({\mathcal I},r)\seqOp\pi({\mathcal I}',r),\Sigma_r}{r} \arro{\eta}
\net$ and we have \\ $\upd(\net)=\parallel_{r \in \roles(\mathcal{I})}
\role{\pi({\mathcal I}''',r)\seqOp\pi({\mathcal I}',r),\Sigma_r'}{r}$. If
$\roles(\mathcal{I}') \subseteq \roles(\mathcal{I})$ then the thesis follows.
Otherwise roles in  $\roles(\mathcal{I}') \setminus \roles(\mathcal{I})$ are
unchanged. Note however that the projection of $\mathcal{I}$ on these roles is
a term composed only by $\one$s, which can be removed by function $\upd$.

If ${\mathcal I}$ has a transition with label $\tick$ and ${\mathcal I}'
\arro{\mu} {\mathcal I}''$ then by inductive hypothesis $\proj({\mathcal
I}',\Sigma)
\arro{\eta} \net''$ with $\eta$ corresponding to $\mu$ and
$\upd(\net'')=\proj({\mathcal I}'',\Sigma')$. The thesis follows since, thanks
to Lemma~\ref{lemma:tick}, $\proj({\mathcal I}\seqOp{\mathcal I}',\Sigma)
\arro{\eta} \net$ and $\upd(\net) = \proj({\mathcal I}'',\Sigma')$.

Note that, in both the cases, conditions on events follow by inductive
hypothesis.

\item[$\leftarrow$] Assume that
\begin{multline*}
\net =
\parallel_{r \in \roles(\mathcal{I}\seqOp\mathcal{I}')} \role{\pi({\mathcal
I},r)\seqOp\pi({\mathcal I}',r),\Sigma_r}{r} \arro{\eta}
 \parallel_{r \in \roles(\mathcal{I}\seqOp\mathcal{I}')}
 \role{P_r,\Sigma_r'}{r}
\end{multline*} with $\eta \in  \{\commLabel{o^?}{r_1}{v}{r_2}{x},
{\mathcal I},\texttt{\NoAdaptLabel}, \tau\}$. We have a
case analysis on $\eta$.

If $\eta=\commLabel{o^?}{r_1}{v}{r_2}{x}$ then $\role{\pi({\mathcal
I}\seqOp{\mathcal I'},r_1),
\Sigma_{r_1}}{r_1}
\arro{\coutLabel{o^?}{v}{r_2}:r_1} \role{P_{r_1}, \Sigma_{r_1}}{r_1}$ and also
$\role{\pi({\mathcal I}\seqOp{\mathcal I'},r_2),\Sigma_{r_2}}{r_2}
\arro{\cinpLabel{o^?}{x}{v}{r_1}:r_2} \role{P_{r_2},\Sigma_{r_2}}{r_2}$.
The two events should have the same global index thanks to
Lemma~\ref{lemma:executematching}. Thus, they are either both from ${\mathcal
I}$ or both from ${\mathcal I}'$.

In the first case we have also
\begin{multline*}
\parallel_{r \in \roles(\mathcal{I}\seqOp\mathcal{I}')} \role{\pi({\mathcal
I},r),\Sigma_r}{r}\arro{\commLabel{o^?}{r_1}{v}{r_2}{x}} \parallel_{r \in
\roles(\mathcal{I}\seqOp\mathcal{I}')}  \role{P''_r,\Sigma_r'}{r}
\end{multline*} with $P_r=P''_r\seqOp\pi({\mathcal I}',r)$.
Thus, by inductive hypothesis, ${\mathcal I}
\arro{\commLabel{o^?}{r_1}{v}{r_2}{x}} {\mathcal I}''$ and $\upd(\parallel_{r
\in \roles{\mathcal{I}\seqOp\mathcal{I}'}} \role{P''_r,\Sigma_r}{r})$ is the
projection of ${\mathcal I}''$ with state $\Sigma$. Hence, we have that
${\mathcal I}\seqOp{\mathcal I'}
\arro{\commLabel{o^?}{r_1}{v}{r_2}{x}} {\mathcal I}''\seqOp{\mathcal I}'$. 

In the second case, thanks to Lemma~\ref{lemma:synchenabled}, we have that the
interaction is enabled. Thus, ${\mathcal I}$ has a transition with label
$\tick$ and ${\mathcal I'} \arro{
\commLabel{o^?}{r_1}{v}{r_2}{x}} {\mathcal I}''$.
Thanks to Lemma~\ref{lemma:tick} then both $\role{\pi({\mathcal I},r_1),
\Sigma_{r_1}}{r_1} $ and $\role{\pi({\mathcal I},r_2), \Sigma_{r_2}}{r_2}$
have a transition with label ${\tick}$. Thus, we have $\role{\pi({\mathcal
I}',r_1),\Sigma_{r_1}}{r_1} \arro{\coutLabel{o^?}{v}{r_2}:r_1}
\role{P_{r_1},\Sigma_{r_1}}{r_1}$, $\role{\pi({\mathcal
I}',r_2),\Sigma_{r_2}}{r_2} \arro{\cinpLabel{o^?}{x}{v}{r_1}:r_2}
\role{P_{r_2},\Sigma_{r_2}}{r_2}$ and \\ $\proj({\mathcal I}',\Sigma) \arro{
\commLabel{o^?}{r_1}{v}{r_2}{x}} \parallel_{r \in
\roles(\mathcal{I}')} \role{P_r,\Sigma_r}{r}$. The thesis follows by inductive
hypothesis. If $\eta$ uses an extended operation then the corresponding
\AIOC{} transition uses the corresponding basic operation.

For the other possibilities of $\eta$, only the process of one role changes.
Thus, the thesis follows by induction.

Note that in all the above cases, conditions on events follow by inductive
hypothesis.
\end{description}

\item[Case ${\mathcal I} \parOpI {\mathcal I}'$: ] from the definition of the
 projection function we have\\
 $\net = \parallel_{r \in \roles(\mathcal{I}\seqOp\mathcal{I}')} (\pi({\mathcal I},r) \mid
 \pi({\mathcal I}',r),\Sigma_r)_r$.
\begin{description}
\item[$\rightarrow$] If ${\mathcal I} \parOpI {\mathcal I}'$ can perform a
transition then one of its two components can perform the same transition and
the thesis follows by inductive hypothesis. Additional roles not occurring in
the term performing the transition are dealt with by function $\upd$.
\item[$\leftarrow$] We have a case analysis on $\eta$.  If
$\eta=\commLabel{o^?}{r_1}{v}{r_2}{x}$ then an input and an output on the same
operation are enabled. Thanks to Lemma~\ref{lemma:executematching} they have
the same global index. Thus they are from the same component and the thesis
follows by inductive hypothesis. For the other possibilities of $\eta$, only
the process of one role changes. The thesis follows by induction.  In all the
cases, roles not occurring in the term performing the transition are dealt
with by function $\upd$.
\end{description}

\item[Case $\ifthenKey{b \at r}{\mathcal{I}}{\mathcal{I}'}{n}$:] from the
definition of projection
\begin{multline*}
\net =  \parallel_{s \in \roles(\mathcal{I}) \cup
\roles(\mathcal{I}')\smallsetminus \{r\}} (\cinp{o^*_n}{x_n}{r}\seqOp\\
\ifthen{x_n}{\pi(\mathcal{I},s)}{\pi(\mathcal{I}',s)},\Sigma_s)_s \parallel \\
(\code{if} \; b \; \{ (\prod_{r' \in \roles(\mathcal{I}) \cup
\roles(\mathcal{I}') \smallsetminus
\{r\}} \cout{o^*_n}{\mathit{true}}{r'}) \seqOp \pi(\mathcal{I},r)\}\\
\code{else} \;  \{(\prod_{r' \in
\roles(\mathcal{I}) \cup \roles(\mathcal{I}')
\smallsetminus \{r\}} \cout{o^*_n}{\mathit{false}}{r'})\seqOp
\pi(\mathcal{I}',r)\}, \Sigma_r)_r
\end{multline*} Let us consider the case when the guard is true (the other
one is analogous).
\begin{description}
\item[$\rightarrow$] The only possible transition from the \AIOC{} is
$\ifthenKey{b \at r}{\mathcal{I}}{\mathcal{I}'}{n} \arro{\tau} \mathcal{I}$.
The \APOC{} can match this transition by reducing to
\begin{multline*}
\net' = \parallel_{s \in \roles(\mathcal{I}) \cup
\roles(\mathcal{I}')\smallsetminus \{r\}} (\cinp{o^*_n}{x_n}{r}\seqOp\\
\ifthen{x_n}{\pi(\mathcal{I},s)}{\pi(\mathcal{I}',s)},\Sigma_s)_s \parallel\\
(\prod_{r' \in \roles(\mathcal{I}) \cup \roles(\mathcal{I}') \smallsetminus
\{r\}} \cout{o^*_n}{\mathit{true}}{r'} \seqOp \pi(\mathcal{I},r),\Sigma_r)_r
\end{multline*} By applying function $\upd$ we get
\begin{multline*}
\upd(\net') = \parallel_{s \in \roles(\mathcal{I}) \cup
\roles(\mathcal{I}')\smallsetminus \{r\}}
\role{\pi(\mathcal{I},s),\Sigma_s}{s} \parallel
\role{\pi(\mathcal{I},r),\Sigma_r}{r}
\end{multline*} Concerning events, at the \AIOC{} level events corresponding
to the guard and to the non-chosen branch are removed. The same holds at the
\APOC{} level, thus conditions on the remaining events are inherited. This
concludes the proof.
\item[$\leftarrow$] The only possible transition from the \APOC{} is the
evaluation of the guard from the coordinator. This reduces $\net$ to $\net'$
above and the thesis follows from the same reasoning.
\end{description}

\item[Case $\whileKey{b \at r}{\mathcal{I}}{n}$:] from the definition of
projection {\footnotesize
\begin{multline*}
\net =  \parallel_{s \in \roles(\mathcal{I})\smallsetminus \{r\}}
(\cinp{o^*}{x_n}{r}\seqOp \\
\while{x_n}{\pi(\mathcal{I},s)\seqOp
\cout{o^*_n}{\texttt{ok}}{r}\seqOp\cinp{o^*_n}{x_n}{r}},\Sigma_s)_s
\parallel\\ (\code{while} \; b \; \{\prod_{r' \in \roles(\mathcal{I})
\smallsetminus
\{r\}} \cout{o^*_n}{\mathit{true}}{r'}\seqOp\pi(\mathcal{I},r)\seqOp\\
\prod_{r' \in \roles(\mathcal{I}) \smallsetminus \{r\}} \cinp{o^*_n}{\_}{r'}\}
\seqOp \\\prod_{r'\in \roles(\mathcal{I}) \smallsetminus \{r\}}
\cout{o^*_n}{\mathit{false}}{r'}, \Sigma_r)_r
\end{multline*}}
\begin{description}
\item[$\rightarrow$] Let us consider the case when the guard is true. The
only possible transition from the \AIOC{} is $\whileKey{b \at
r}{\mathcal{I}}{n} \arro{\tau}
\mathcal{I}\seqOp\whileKey{b \at r}{\mathcal{I}}{n}$. The \APOC{} can  match
this transition by reducing to {\footnotesize
\begin{multline*}
\net ' = \parallel_{s \in \roles(\mathcal{I})\smallsetminus \{r\}}
(\cinp{o^*}{x_n}{r}\seqOp\\
\while{x_n}{\pi(\mathcal{I},s)\seqOp \cout{o^*_n}{\texttt{ok}}{r}\seqOp
\cinp{o^*_n}{x_n}{r}},\Sigma_s)_s \parallel\\
(\prod_{r' \in \roles(\mathcal{I}) \smallsetminus \{r\}}
\cout{o^*_n}{\mathit{true}}{r'}\seqOp\pi(\mathcal{I},r)\seqOp\\
 \prod_{r' \in \roles(\mathcal{I}) \smallsetminus \{r\}}
 \cinp{o^*_n}{\_}{r'}\seqOp\\
 \code{while} \; b \; \{\prod_{r' \in \roles(\mathcal{I}) \smallsetminus
 \{r\}}
 \cout{o^*_n}{\mathit{true}}{r'}\seqOp\pi(\mathcal{I},r)\seqOp\\
 \prod_{r' \in \roles(\mathcal{I}) \smallsetminus \{r\}}
 \cinp{o^*_n}{\_}{r'}\}\seqOp\\
 \prod_{r'\in \roles(\mathcal{I}) \smallsetminus \{r\}}
 \cout{o^*_n}{\mathit{false}}{r'}, \Sigma_r)_r
 \end{multline*} } By applying function $\upd$ we get {\footnotesize
\begin{multline*}
\upd(\net') =  \parallel_{s \in \roles(\mathcal{I})\smallsetminus \{r\}}
(\pi(\mathcal{I},s)\seqOp
\cinp{o^*_n}{x_n}{r}\seqOp\\
\while{x_n}{\pi(\mathcal{I},s)\seqOp
\cout{o^*_n}{\texttt{ok}}{r}\seqOp\cinp{o^*_n}{x_n}{r}},\Sigma_s)_s \parallel
\\
(\pi(\mathcal{I},r)\seqOp
 \code{while} \; b \; \{\prod_{r' \in \roles(\mathcal{I}) \smallsetminus
 \{r\}}
 \cout{o^*_n}{\mathit{true}}{r'}\seqOp\pi(\mathcal{I},r)\seqOp\\
  \prod_{r' \in \roles(\mathcal{I}) \smallsetminus \{r\}}
  \cinp{o^*_n}{\_}{r'}\}\seqOp\\
\prod_{r'\in \roles(\mathcal{I}) \smallsetminus \{r\}}
\cout{o^*_n}{\mathit{false}}{r'}, \Sigma_r)_r
 \end{multline*}}
 which is exactly the projection  of $ \mathcal{I}\seqOp\whileKey{b \at
 r}{\mathcal{I}}{n}$.
 
As far as events are concerned, in $\prop(\net')$ we have all the needed
events since, in particular, we have already done the unfolding of the while
in all the roles. Concerning the ordering, at the \AIOC{} level, we have two
kinds of causal dependencies: (1) events in the unfolded process precede the
guard event; (2) the guard event precedes the events in the body.  The first
kind of causal dependency is matched at the \APOC{} level thanks to the
auxiliary synchronisations that close the unfolded body (which are not removed
by $\prop$) using synchronisation and sequentiality. The second kind of causal
dependency is matched thanks to the auxiliary synchronisations that start the
following iteration using  synchronisation, sequentiality and while.\\ The
case when the guard evaluates to \texttt{false} is simpler.
\item[$\leftarrow$] The only possible transition from the \APOC{} is the
evaluation of the guard from the coordinator. This reduces $\net$ to $\net'$
above and the thesis follows from the same reasoning.
\end{description}

\item[Case $n:\scope{l}{r}{\mathcal I}{\Delta}$:] from the definition of the
projection
\begin{multline*}
\net= \parallel_{s \in \roles(\mathcal{I}) \smallsetminus \{r\}}
\role{\psscope{n}{l}{r}{\pi({\mathcal I},s)},\Sigma_s}{s} \parallel \\
\pscope{n}{l}{r}{\pi({\mathcal I},r)}{\Delta}{\roles({\mathcal I})}
\end{multline*}

\begin{description}
\item[$\rightarrow$] The only possible transitions are obtained by applying
  rules \ruleName{Lead-\AdaptRule} or
  \ruleName{Lead-\NoAdaptRule} to the coordinator scope. Let us consider the
  first case.
\begin{multline*}
\net =  \parallel_{s \in \roles(\mathcal{I}) \smallsetminus \{r\}}
\role{\psscope{n}{l}{r}{\pi({\mathcal I},s)},\Sigma_s}{s} \parallel \\
\pscope{n}{l}{r}{\pi({\mathcal I},r)}{\Delta}{\roles({\mathcal I})} \\
 \arro{{\mathcal I}'} \parallel_{s \in \roles(\mathcal{I}) \cup \{r\}}
\role{P_s,\Sigma_s}{s} = \net'
 \end{multline*}
  
  For the coordinator we have:
\begin{multline*} P_r = \prod_{r_i \in \roles({\mathcal{I}}) \smallsetminus
\{r\}}\\
\cout{o^*_{n}}{\pi(\freshKey(\mathcal{I}',n),r_i)}{r_i}\seqOp \\
\pi(\freshKey(\mathcal{I}',n),r)\seqOp\\
 \prod_{r_i \in \roles({\mathcal{I}}) \smallsetminus \{r\}}
 \cinp{o^*_{n}}{\_}{r_i}
 \end{multline*}
For other roles $P_{r_i} = \psscope{n}{l}{r}{P}$. By applying the $\upd$
function we get:
\begin{multline*}
\upd(\net') = \pi(\freshKey(\mathcal{I}',n),r) \parallel \\ \parallel_{r_i \in
\roles(\mathcal{I}) \smallsetminus \{r\}} \pi(\freshKey(\mathcal{I}',n),r_i)
 \end{multline*}
This is exactly the projection of the \AIOC{} obtained after applying the rule
\ruleName{\AdaptRule}.
The conditions on events are inherited. Observe that the closing event of the
scope is replaced by events corresponding to the auxiliary interactions
closing the scope. This allows us to preserve the causality dependencies also
when the scope is inserted in a bigger context.
  
The case of rule \ruleName{Lead-\NoAdaptRule} is simpler.

\item[$\leftarrow$] The only possible transition from the \APOC{} is the one
of the coordinator of the scope checking whether to apply an update. This
reduces $\net$ to $\net'$ above and the thesis follows from the same
reasoning.
\end{description}

\end{description}
\end{proof}


\section{Proof of Corollary~\ref{cor1}}

Before proving Corollary~\ref{cor1}, we prove an auxiliary lemma.
\begin{lemma}\label{lemma:tickEnds}
For each initial, connected \AIOC{} ${\mathcal I}$, state $\Sigma$, and
set of updates $\rules$, if $\tuple{\Sigma,\rules,{\mathcal I}}
\arro{\tick} \tuple{\Sigma',\rules',{\mathcal I}'}$ then the only
transitions of $\tuple{\Sigma',\rules',{\mathcal I}'}$ have label
$\rules''$ for some $\rules''$.
\end{lemma}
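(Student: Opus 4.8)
The plan is to characterise precisely the target process $\mathcal{I}'$ of a $\tick$-labelled transition and then observe that such a process is inert apart from \ruleName{Change-Updates} steps. The starting remark is that, among all the rules in Table~\ref{table:ioclts}, only \ruleName{End}, \ruleName{Par-end}, and \ruleName{Seq-end} can produce a transition labelled $\tick$: rules \ruleName{Sequence} and \ruleName{Parallel} explicitly require $\mu \neq \tick$, while \ruleName{Interaction}, \ruleName{Assign}, \ruleName{If-then}, \ruleName{If-else}, \ruleName{While-unfold}, \ruleName{While-exit}, \ruleName{\AdaptRule}, \ruleName{\NoAdaptRule}, and \ruleName{Change-Updates} all carry labels different from $\tick$. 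Moreover each of these three rules carries the ambient $\ambient$ (standing for $\Sigma,\rules$) unchanged from source to target, so immediately $\Sigma' = \Sigma$ and $\rules' = \rules$.

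First I would introduce the class of \emph{terminated} processes generated by the grammar $Z \gram \zero \mid Z \parallel Z$, i.e.\ the parallel compositions of $\zero$. The core step is a structural induction on the derivation of $\tuple{\ambient,\mathcal{I}} \arro{\tick} \tuple{\ambient,\mathcal{I}'}$ showing that $\mathcal{I}'$ is always terminated. The base case \ruleName{End} yields $\mathcal{I}' = \zero$, which is terminated. For \ruleName{Par-end}, the target is $\mathcal{I}_1' \parallel \mathcal{I}_2'$ where both premises are $\tick$-transitions, so by the induction hypothesis $\mathcal{I}_1'$ and $\mathcal{I}_2'$ are terminated and hence so is their parallel composition. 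For \ruleName{Seq-end} the conclusion is a $\tick$ only when the second component performs a $\tick$, and the target of the whole step coincides with the target of that transition, which is terminated by the induction hypothesis.

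Having established that $\mathcal{I}'$ is terminated, I would close the argument by a second, short induction on the structure of a terminated process $Z$ showing that $\tuple{\ambient, Z}$ admits no transition other than \ruleName{Change-Updates}. The base case $\zero$ has no applicable process rule at all; for $Z_1 \parallel Z_2$ a non-$\rules$ transition could only come from \ruleName{Parallel} (impossible, since by the induction hypothesis neither $Z_i$ has a non-$\tick$ transition) or from \ruleName{Par-end} (impossible, since $\zero$, and hence no terminated process, has a $\tick$-transition). Thus the only remaining transition is the always-enabled \ruleName{Change-Updates}, whose label has the form $\rules''$, which is exactly the claim.

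I expect the argument to be essentially bookkeeping, so the only real care required is in the two places where the induction could silently break: verifying the exhaustiveness of the rule analysis establishing that $\tick$ arises solely from \ruleName{End}, \ruleName{Par-end}, and \ruleName{Seq-end}, and confirming that terminated processes are genuinely inert — in particular that $\zero$ itself can neither $\tick$ nor make any other move, which is what rules out a further \ruleName{Par-end} step and guarantees the system is quiescent apart from update changes.
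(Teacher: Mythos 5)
Your proposal is correct and takes essentially the same approach as the paper, whose entire proof is the one-line remark that one proceeds ``by case analysis on the rules which can derive a transition with label $\tick$'' --- you simply spell out the easy cases via the terminated-process characterisation and the inertness argument. One cosmetic slip: by your own induction hypothesis each $Z_i$ \emph{does} have non-$\tick$ transitions, namely \ruleName{Change-Updates} steps labelled $\rules''$, and these can be lifted through \ruleName{Parallel}; but since lifting preserves the label, every such lifted transition is still of the form $\rules''$, so the lemma's conclusion is unaffected.
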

\begin{proof}
The proof is by case analysis on the rules which can derive a
transition with label $\tick$. All the cases are easy.
\end{proof}

\setcounter{corollary}{0}
\begin{corollary}
For each initial, connected \AIOC{} ${\mathcal I}$, state $\Sigma$, and
set of updates $\rules$ the \APOC{} system $\tuple{\rules,\proj({\mathcal
    I},\Sigma)}$ is deadlock-free.
\end{corollary}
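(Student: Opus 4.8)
The plan is to transport deadlock-freedom from the \AIOC{} to its projection along the correctness result, using the bisimulation rather than only the trace-equivalence statement. Recall that, by construction, every initial \AIOC{} is deadlock-free: a non-terminated process always enables a transition whose label is not a change-of-updates set, and $\zero$ is reached only through a $\tick$, so every maximal finite internal trace of $\tuple{\Sigma,\rules,\mathcal{I}}$ ends with $\tick$. I would then work with the weak system bisimulation $R$ constructed in the proof of Theorem~\ref{teo:final}, for which $(\tuple{\Sigma,\rules,\mathcal{I}},\tuple{\rules,\proj(\mathcal{I},\Sigma)}) \in R$; trace equivalence is the headline, but it is the bisimulation that lets us compare \emph{stuck} states on the two sides.

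The first key step is a \emph{stuck-reflection} claim: whenever $(\tuple{\Sigma',\rules',\mathcal{I}'},\tuple{\rules',\net}) \in R$ and $\net$ enables only change-of-updates transitions, then $\mathcal{I}'$ also enables only change-of-updates transitions. This is the contrapositive of the \AIOC{}-to-\APOC{} clause of Definition~\ref{def:bisim}: any \AIOC{} move with a label other than a set of updates (including a $\tick$ move) must be matched by a sequence of private/$\tau$ \APOC{} steps followed by a move matching it, so $\net$ would enable a non-change transition, contradicting that it is stuck.

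Next I would take an arbitrary maximal finite internal \APOC{} trace, reaching a configuration $\net_k$ from which only change-of-updates transitions are possible, and use $R$ to build a matching \AIOC{} computation reaching $\mathcal{I}_k$ with $(\tuple{\Sigma_k,\rules_k,\mathcal{I}_k},\tuple{\rules_k,\net_k}) \in R$ (public communications, updates, $\NoAdaptLabel$ and $\tick$ matched by the corresponding \AIOC{} label; private communications and $\tau$ matched by the \AIOC{} staying idle). By the stuck-reflection claim $\mathcal{I}_k$ enables only change transitions, hence the corresponding \AIOC{} internal trace is maximal and finite, so by \AIOC{} deadlock-freedom it ends with $\tick$. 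Since in the bisimulation an \AIOC{} $\tick$ can be matched only by a \APOC{} $\tick$, the \APOC{} performs a system-level $\tick$ at some step $j \le k$. A system $\tick$ (rule \ruleName{Ext-Par-End} lifted through \ruleName{Lift}) requires every role to terminate simultaneously, so afterwards the network consists only of $\zero$ and enables no transition other than change-of-updates ones, which is the \APOC{} analogue of Lemma~\ref{lemma:tickEnds}, proved by the same case analysis on the rules deriving $\tick$. Consequently no further non-change step can occur, forcing $j = k$ and the last label of the trace to be $\tick$. As the trace was arbitrary, $\tuple{\rules,\proj(\mathcal{I},\Sigma)}$ is deadlock-free.

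The main obstacle is precisely this last point: a priori one fears that after the \AIOC{} has terminated the projected network might still carry leftover auxiliary (private) synchronisations whose $\tau$ or private-communication labels would sit at the end of the internal trace in place of $\tick$. The argument closes this gap by observing that the terminating \AIOC{} $\tick$ is reflected not by private bookkeeping but by a genuine whole-network \APOC{} $\tick$, which can fire only once all roles — and hence all auxiliary scope- and loop-closing communications — have completed. Establishing this synchronised-termination fact, together with the stuck-reflection claim, is where the real work of the proof lies.
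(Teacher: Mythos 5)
Your proposal is correct, and the way it diverges from the paper's own proof is worth spelling out. The paper's argument has two halves: first it proves \AIOC{} deadlock-freedom in full, by induction on the length of the internal trace combined with structural induction on ${\mathcal I}$ (with a reinforced hypothesis that $\tick$ occurs only as the last label and that all intermediate states remain initial), together with Lemma~\ref{lemma:tickEnds} at the \AIOC{} level; it then transfers the property to the \APOC{} by simply invoking the weak trace equivalence of Theorem~\ref{teo:final}, remarking that $\tick$ survives the strong/weak correspondence and that nothing can follow it. You invert the distribution of effort. On the \AIOC{} half you compress the paper's induction into an assertion (``a non-terminated initial process always enables a non-change transition''); that claim is true but is exactly what the paper's case analysis establishes, so you still owe that induction. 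On the transfer half, however, you are more careful precisely where the paper is terse: equality of trace sets does not in general preserve deadlock-freedom, because stuckness is not a trace property --- the trace of a stuck \APOC{} execution could a priori be a proper prefix of the trace of a \emph{different}, non-stuck execution, compatibly with trace equivalence. Your descent to the weak system bisimulation $R$ of Definition~\ref{def:bisim}, the stuck-reflection claim (the contrapositive of the \AIOC{}-to-\APOC{} clause, which is sound since the matching sequence $\eta_1,\dots,\eta_k,\eta$ always exhibits some enabled non-$\rules$ transition), and the \APOC{}-level analogue of Lemma~\ref{lemma:tickEnds} (a system $\tick$ via \ruleName{Ext-Par-End} forces every role to $\zero$, after which only \ruleName{Change-Updates} applies) supply exactly the glue that the paper's appeal to trace equivalence leaves implicit; the paper's citation of Lemma~\ref{lemma:tickEnds}, stated for \AIOC{}s, plays this role only silently. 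In short, both proofs share the same skeleton --- \AIOC{} deadlock-freedom plus transfer along the correctness result --- but your transfer is bisimulation-based and self-contained where the paper's is trace-based and leans tacitly on the bisimulation constructed inside the proof of Theorem~\ref{teo:final}, at the price of leaving the \AIOC{}-side structural induction as an unproved (though correct) assertion.
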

\begin{proof}
A \AIOC{} system $\tuple{\state, \rules, \mathcal{I}}$ is deadlock-free
if all its maximal finite internal traces have $\tick$ as label of the
last transition.  For each trace, the property can be proved by
induction on its length, and for each length by structural induction
on ${\mathcal I}$. The proof is based on the fact that ${\mathcal I}$
is initial. The induction considers a reinforced hypothesis, saying
also that $\tick$ never occurs before the end of the internal trace
and that all the steps, but the last one, lead to initial \AIOC{}s. We have
a case analysis on the top-level operator in ${\mathcal I}$. Note that in all the
cases at least a transition is derivable.
\begin{description}
\item[Case $\zero$:] not allowed since we assumed an initial \AIOC{}.
\item[Case $\one$:] trivial because by rule \ruleName{End} and Lemma \ref{lemma:tickEnds} its only 
internal trace is $\tick$.
\item[Case $\assign{x}{r}{e}$: ] the only applicable rule
  is \ruleName{Assign} that in one step leads to a $\one$ process. The thesis follows by 
  inductive hypothesis on the length of the trace.
\item[Case $\comm{o^?}{r_1}{e}{r_2}{x}$:] the only applicable rule
  is \ruleName{Interaction}, which leads to an assignment. Then
  the thesis follows by inductive hypothesis on the length of the
  trace.
\item[Case ${\mathcal I}\seqOp{\mathcal I}'$:] the first transition
  can be derived either by rule \ruleName{Sequence} or
  \ruleName{Seq-end}. In the first case the thesis follows by
  induction on the length of the trace. In the second case the trace
  coincides with a trace of ${\mathcal I}'$, and the thesis follows by
  structural induction.
\item[Case ${\mathcal I} \parOpI {\mathcal I}'$:] the first transition
  can be derived either by rule \ruleName{Parallel} or by rule
  \ruleName{Par-End}. In the first case the thesis follows by
  induction on the length of the trace. In the second case the thesis
  follows by Lemma~\ref{lemma:tickEnds}, since the label is
  $\tick$.
\item[Case $\ifthen{b \at r}{\mathcal{I}}{\mathcal{I}'}$:] the first transition can
  be derived using either rule \ruleName{If-then} or rule
  \ruleName{If-else}. In both the cases the thesis follows by
  induction on the length of the trace.
\item[Case $\while{b \at r}{\mathcal{I}}$:] the first transition can
  be derived using either rule \ruleName{While-unfold} or rule
  \ruleName{While-exit}. In both the cases the thesis follows by
  induction on the length of the trace.
\item[Case $\scope{l}{r}{\mathcal I}{\Delta}$:] the first rule applied
  is either \ruleName{Up} or \ruleName{NoUp}. In both the cases the
  thesis follows by induction on the length of the trace.
\end{description}
The weak internal traces of the \AIOC{} coincide with the weak internal
traces of the \APOC{} by Theorem \ref{teo:final}, thus the finite weak
internal traces end with $\tick$. The same holds for the finite
(strong) internal traces, since label $\tick$ is preserved when moving
between strong and weak traces, and no transition can be added after
the $\tick$ thanks to Lemma~\ref{lemma:tickEnds}.
\end{proof}

\section{Proof of Corollary~\ref{cor:race}}
\begin{corollary}[Race-freedom]
For each initial, connected \AIOC{} ${\mathcal I}$, state $\Sigma$, and
set of updates $\rules$, if $\tuple{\rules,\proj({\mathcal I},\Sigma)}
\arro{\mu_1}\cdots\arro{\mu_n} \tuple{\rules',\net}$, then in $\net$ two
outputs (resp.\ inputs) cannot interact with the same input (resp.\ output).
\end{corollary}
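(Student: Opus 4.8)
The plan is to reduce race-freedom to the well-annotation conditions \textbf{C1}, \textbf{C3} and \textbf{C4} of Definition~\ref{defin:synchwa}, which were tailored precisely to forbid two same-operation communications from competing, together with the minimality property of Lemma~\ref{lemma:minimalevent}.

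First I would establish an \emph{invariance lemma}: every network $\net$ reachable from $\proj({\mathcal I},\Sigma)$, i.e.\ with $\tuple{\rules,\proj({\mathcal I},\Sigma)}\arro{\mu_1}\cdots\arro{\mu_n}\tuple{\rules',\net}$, is a well-annotated \APOC{} with respect to its causality relation $\leqapoc$. The base case is exactly Lemma~\ref{lemma:IOCwell}, which gives well-annotation of the initial projection. For the inductive step I would check that each system transition preserves well-annotation, the only delicate points being the rules that create new events: rule \ruleName{While-unfold} copies a body together with its indexes, but the copies acquire a fresh global-index prefix from the surrounding while, so \textbf{C1} and \textbf{C6} are kept; rule \ruleName{Lead-\AdaptRule} installs an update via $\freshKey({\mathcal I},n)$, which both freshens scope indexes and renames every operation with the index $n$, so the newly created sending/receiving events cannot share an operation-and-target (resp.\ operation-and-sender) with events already present, while the orderings \textbf{C3}/\textbf{C4} inside the installed code follow from Lemma~\ref{lemma:IOCwell} applied to the (connected) update.

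With the invariant in hand, I would argue by contradiction. Suppose $\net$ admits a send-race, i.e.\ two outputs $\cout{o^?}{e}{r_2}$ and $\cout{o^?}{e'}{r_2}$ enabled in role $r_1$ that may both synchronise (rule \ruleName{Synch}) with the same input $\cinp{o^?}{x}{r_1}$ in $r_2$. These two enabled outputs determine two sending events $[f_\xi]_{r_1}$ and $[f_{\xi'}]_{r_1}$ on the same operation $o^?$ with the same target $r_2$. They are distinct events, so by \textbf{C1} they have distinct global indexes $\xi\neq\xi'$, since equal global indexes would force them to be matching events, i.e.\ one send and one receive. If they are non-conflicting, \textbf{C3} yields $[f_\xi]_{r_1}\leqapoc[f_{\xi'}]_{r_1}$ or the reverse, so by Lemma~\ref{lemma:minimalevent} the larger one is not minimal and hence not enabled, a contradiction. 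If instead they are conflicting (Definition~\ref{def:conflict}) they lie in the two branches of a common conditional; then either the guard is not yet evaluated, so neither output is at top level and neither is enabled, or the conditional is already resolved, so only one branch, hence only one of the two outputs, survives in $\net$. Either way the two outputs cannot be simultaneously enabled.

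The receive-race is symmetric: two enabled inputs $\cinp{o^?}{x}{r_1}$ and $\cinp{o^?}{x'}{r_1}$ in $r_2$ that may both synchronise with the same output yield two receiving events $[t_\xi]_{r_2}$ and $[t_{\xi'}]_{r_2}$ on the same operation and sender, where \textbf{C1} gives $\xi\neq\xi'$, \textbf{C4} orders them when non-conflicting, and conflict is handled as above. Hence $\net$ contains no race. The main obstacle is the invariance lemma: showing that \textbf{C1}, \textbf{C3} and \textbf{C4} (with the auxiliary \textbf{C5}, \textbf{C6} they rely on) survive every transition, and in particular that the fresh indexing and operation renaming performed on updates really prevent the installed code from introducing new competitors with the surrounding context, which is exactly the purpose of $\freshKey$.
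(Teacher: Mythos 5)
Your proposal is correct and takes essentially the same route as the paper: the paper's proof is a one-liner invoking Lemma~\ref{lemma:executematching}, whose own proof is precisely your combination of condition \textbf{C1} (distinct global indexes, hence non-matching), conditions \textbf{C3}/\textbf{C4} plus the minimality property of Lemma~\ref{lemma:minimalevent} for the non-conflicting case, and the branch analysis of Definition~\ref{def:conflict} for the conflicting case. Your one addition is to state explicitly the invariance lemma that reachable networks stay well-annotated, a step the paper leaves implicit (it is maintained inside the bisimulation relation in the proof of Theorem~\ref{teo:final}, where $\upd(\net)$ is required to be well-annotated), so you have correctly identified where the real proof burden lies.
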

\begin{proof}
The result follows from Lemma~\ref{lemma:executematching}, which shows
that a \APOC{} transition always executes two matching events, since
for each input (resp.\ output) at most one matching output
(resp.\ input) exists.
\end{proof}

\section{Proof of Corollary~\ref{cor:orphan}}
\begin{corollary}[Orphan message-freedom]
 For each initial, connected \AIOC{} ${\mathcal I}$, state $\Sigma$, and
set of updates $\rules$, if $\tuple{\rules,\proj({\mathcal I},\Sigma)}
 \arro{\mu_1}\cdots\arro{\tick} \tuple{\rules',\net}$, then $\net$ contains no outputs.
\end{corollary}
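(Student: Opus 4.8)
The plan is to prove the statement directly from the \APOC{} operational semantics, without invoking connectedness or even the correctness result: I will show that a successful-termination step (a transition labelled $\tick$) can fire only when every residual process is \emph{output-free}, so the reached network $\net$ contains no send. Since an orphan send can only persist because the computation has finished while it is still unconsumed, ruling out sends in the post-$\tick$ network is exactly what is required.

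The core is a structural lemma at the level of a single role: if $\role{P,\Gamma}{r} \arro{\tick} \role{P',\Gamma}{r}$ then $P$ (hence also $P'$) contains no output, i.e.\ no subterm $\cout{o^?}{e}{r'}$ or $\cout{o^*}{X}{r'}$. I would prove this by induction on the derivation of the $\tick$-transition, inspecting Table~\ref{table:apoc-proc}. The only rules whose conclusion can carry the label $\tick$ are \ruleName{One}, \ruleName{Seq-end} (in the sub-case $\delta=\tick$, where the second premise carries $\tick$ and the first premise carries $\tick$ by the shape of the rule), and \ruleName{Par-end}. Every other role rule is excluded: \ruleName{Out}, \ruleName{Out-\AdaptRule} and \ruleName{In} carry communication labels; \ruleName{Assign}, the two conditional and the two loop rules carry $\tau$; the scope rules \ruleName{Lead-\AdaptRule}, \ruleName{Lead-\NoAdaptRule}, \ruleName{\AdaptRule}, \ruleName{\NoAdaptRule} carry $\mathcal{I}$, $\texttt{\NoAdaptLabel}$ or input labels; and \ruleName{Sequence}/\ruleName{Parallel} explicitly require $\delta \neq \tick$. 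In the base case $P = \one$ is output-free, and in the inductive cases $P = P_1 \seqOp P_2$ or $P = P_1 \parOpP P_2$ with both $P_i$ performing $\tick$, so output-freedom is inherited. Thus $P$ is a composition of $\one$'s by $\seqOp$ and $\parOpP$, $P'$ the corresponding composition of $\zero$'s, and neither contains a send.

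Lifting this to networks and systems is routine. A network $\tick$-transition can be derived only by \ruleName{Ext-Par-End} (since \ruleName{Ext-Parallel} requires $\eta \neq \tick$) together with the single-role base case, so it forces \emph{every} constituent role to perform $\tick$; by the lemma each of their processes is output-free, and replacing their $\one$'s by $\zero$'s introduces no send. At the system level, the final $\tick$-labelled transition of $\tuple{\rules,\proj(\mathcal{I},\Sigma)} \arro{\mu_1}\cdots\arro{\tick} \tuple{\rules',\net}$ is obtained by \ruleName{Lift} (as $\tick \neq \mathcal{I}$) or by \ruleName{Ext-Par-End}, in either case reducing to a network $\tick$-transition as above; the preceding steps $\mu_i$ and any intervening \ruleName{Change-Updates} steps are irrelevant, since they affect only the update set and the predecessor of $\net$. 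Hence $\net$ is a parallel composition of roles whose processes are $\zero$-compositions, and so contains no outputs.

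The one point demanding care — and the reason the statement is non-trivial — is that the projection together with the scope/loop coordination introduces \emph{auxiliary} sends, such as the closing acknowledgements $\cout{o^*_n}{\texttt{ok}}{r}$ generated by rules \ruleName{\AdaptRule} and \ruleName{\NoAdaptRule}. A priori such an auxiliary output could linger after the application proper has finished and become an orphan. The structural lemma dispels this uniformly: every send, auxiliary or not, carries a label different from $\tick$, so no $\tick$-transition can fire while any of them is still pending. I note that the argument uses neither initiality nor connectedness; one could alternatively read the conclusion off the relation $R$ of Theorem~\ref{teo:final} (a $\tick$ matches \AIOC{} termination to $\zero$, whose projection $\parallel_s \role{\zero,\cdot}{s}$ is output-free), but verifying there that the \emph{pre}-$\upd$ network $\net$ is itself output-free again reduces to the present observation, so the self-contained structural proof is preferable.
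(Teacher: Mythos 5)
Your proposal is correct and follows exactly the paper's own strategy: the paper proves this corollary by case analysis on the rules that can derive a $\tick$-labelled transition (dismissing all cases as easy), and your structural lemma --- that only \ruleName{One}, \ruleName{Seq-end} with $\delta=\tick$, \ruleName{Par-end}, and their liftings via \ruleName{Lift} and \ruleName{Ext-Par-End} can produce $\tick$, forcing every residual process to be a $\seqOp$/$\parOpP$-composition of $\one$'s --- is precisely that case analysis carried out in full. Your observation that neither initiality nor connectedness is actually used is also consistent with the paper's argument, which likewise relies only on the \APOC{} rules themselves.
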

\begin{proof}
The proof is by case analysis on the rules which can derive a transition with label $\tick$. All the 
cases are easy.
\end{proof}

}{
 \section{Rules for the \AIOC{} system semantics} \label{app:rulesAIOC}

Table \ref{table:ioclts-complete} presents the rules of the \AIOC{}
system semantics (symmetric rules for parallel
composition are omitted).

\begin{table*}[h]
{\footnotesize
\[
\begin{array}{c}
\mathrule{Interaction}{
\grass{e}_{\Sigma_{r_1}} = v
}{\tuple{\ambient, \comm{o^?}{r_1}{e}{r_2}{x}} 
\arro{\commLabel{o^?}{r_1}{v}{r_2}{x}} \tuple{\ambient,\assign{x}{r_2}{v}}}
\hfill\qquad
\mathrule{Sequence}{\tuple{\ambient,{\mathcal I}} \arro{\mu} \tuple{\ambient',{\mathcal I}'} \hfill 
\mu \neq \tick}{\tuple{\ambient,{\mathcal I}\seqOp{\mathcal J}} \arro{\mu} 
\tuple{\ambient',{\mathcal 
I}'\seqOp{\mathcal J}}}
\\
\mathrule{Assign}{
\grass{e}_{\Sigma_r} = v
}{
\tuple{\state,\rules,{\assign{x}{r}{e}}}
\arro{\tau}
\tuple{\state\substLabel{v}{x}{r},\rules, \one}
}
\hfill
\mathrule{Seq-end}{\tuple{\ambient,{\mathcal I}} \arro{\tick} \tuple{\ambient,{\mathcal I}'} \hfill 
\tuple{\ambient,{\mathcal J}} \arro{\mu} \tuple{\ambient,{\mathcal J}'}}{\tuple{\ambient,{\mathcal 
I}\seqOp{\mathcal J}} \arro{\mu} \tuple{\ambient,{\mathcal J}'}}
\\
\mathrule{Parallel}{\tuple{\ambient,{\mathcal I}} \arro{\mu} \tuple{\ambient',{\mathcal I}'} \hfill 
\mu \neq \tick}{\tuple{\ambient,{\mathcal I}\parallel{\mathcal J}} \arro{\mu} 
\tuple{\ambient',{\mathcal I}'\parallel{\mathcal J}}}
\hfill
\mathrule{Par-end}{\tuple{\ambient,{\mathcal I}} \arro{\tick} \tuple{\ambient,{\mathcal I}'} \hfill 
\tuple{\ambient,{\mathcal J}} \arro{\tick} \tuple{\ambient,{\mathcal 
J}'}}{\tuple{\ambient,{\mathcal I}\parallel{\mathcal J}} \arro{\tick} \tuple{\ambient,{\mathcal 
I}'\parallel{\mathcal J}'}}
\\
\mathrule{If-then}{
\grass{b}_{\Sigma_r} = \tt{true}}
{\tuple{\ambient,\ifthen{b \at r}{\mathcal{I}}{\mathcal{I}'}} 
\arro{\tau} \tuple{\ambient,{\mathcal I}}} \hfill
\mathrule{If-else}{
\grass{b}_{\Sigma_r} = 
\tt{false}}{\tuple{\ambient,\ifthen{b \at r}{\mathcal{I}}{\mathcal{I}'}} 
\arro{\tau} \tuple{\ambient,{\mathcal I}'}}
\\
\mathrule{While-unfold}{
\grass{b}_{\Sigma_r} = 
\tt{true}}{\tuple{\ambient,\while{b \at r}{\mathcal{I}}} \arro{\tau} 
\tuple{\ambient,{\mathcal I} \seqOp \while{b \at r}{\mathcal{I}}}} \hfill
\mathrule{While-exit}{
\grass{b}_{\Sigma_r} = 
\tt{false}}{\tuple{\ambient,\while{b \at r}{\mathcal{I}}} \arro{\tau} \tuple{\ambient,\one}} 
\\
\mathrule{\AdaptRule}{ 
\roles({\mathcal{I}'}) \subseteq \roles({\mathcal{I})} \quad \mathcal{I}'
\in \rules
\quad \mathcal{I}' \textrm{ connected}
}{\tuple{\ambient,\scope{l}{r}{\mathcal 
I}{\Delta}} \arro{{\mathcal I}'} \tuple{\ambient, {\mathcal 
I}'}}
\hfill
\mathax{\NoAdaptRule}{\tuple{\ambient,\scope{l}{r}{\mathcal I}{\Delta}}
\arro{\texttt{\NoAdaptLabel}} \tuple{\ambient,{\mathcal I}}}
\\
\mathax{End}{\tuple{\ambient,\one} \arro{\tick} \tuple{\ambient,\zero}}
\hfill
\mathax{Change-Updates}{\tuple{\state,\rules,{\mathcal I}} \arro{\rules'} 
\tuple{\state,\rules',{\mathcal I}}}
\end{array}
\]
}
\caption{\AIOC{} system semantics.}
\label{table:ioclts-complete}
\end{table*}
We describe here the rules whose descriptions were omitted in Table
\ref{table:ioclts} for space reasons.

Rule \ruleName{End}  terminates the execution of an empty process.
Rule
\ruleName{Sequence} executes a step in the first process of a sequential
composition, while rule \ruleName{Seq-end} acknowledges the termination of the
first process, starting the second one.
Rule \ruleName{Parallel} allows a process in a parallel composition to
compute, while rule \ruleName{Par-end} synchronises the termination of two
parallel processes.
Rules \ruleName{If-then} and \ruleName{If-else} evaluate the boolean guard of
a conditional, selecting the then and the else branch, respectively.
Rules \ruleName{While-unfold} and \ruleName{While-exit} correspond
respectively to the unfolding of a while when its condition is satisfied and
to its termination otherwise.

\section{Rules for the \APOC{} role semantics}
\label{app:rulesAPOC}

Table \ref{table:apoc-proc-complete} presents the rules of the \APOC{}
role semantics (symmetric rules for parallel composition are omitted).

\begin{table*}[h]
{\footnotesize
\[
\begin{array}{c}
\mathax{One}{\role{\one,\Gamma}{r} \arro{\tick}
\role{\zero,\Gamma}{r}}
\hfill\qquad
\mathax{Out-\AdaptRule}{\role{\cout{o^?}{X}{r'},\Gamma}{r} \arro{\coutLabel{o^?}{X}{r'}:r} 
\role{\one,\Gamma}{r}}
\hfill\qquad
\mathrule{Assign}{\grass{e}_\Gamma = v}{\role{x = e,\Gamma}{r} \arro{\tau} 
\role{\one,\Gamma\substAPOC{v}{x}}{r}}
\\
\mathax{In}{\role{\cinp{o^?}{x}{r'},\Gamma}{r} \arro{\cinpLabel{o^?}{x}{v}{r'}:r}\role{x = 
v,\Gamma}{r}}\hfill
\mathrule{Out}{\grass{e}_\Gamma = v}{\role{\cout{o^?}{e}{r'},\Gamma}{r} 
\arro{\coutLabel{o^?}{v}{r'}:r} \role{\one,\Gamma}{r}}
\\
\mathrule{Sequence}{\role{P,\Gamma}{r} \arro{\delta} \role{P',\Gamma'}{r} \quad \delta \neq 
\tick}{\role{P \seqOp Q,\Gamma}{r} \arro{\delta} \role{P' \seqOp Q,\Gamma'}{r}}
\hfill
\mathrule{Seq-end}{\role{P,\Gamma}{r} \arro{\tick} \role{P',\Gamma}{r} \quad \role{Q,\Gamma}{r} 
\arro{\delta} \role{Q',\Gamma'}{r}}{\role{P \seqOp Q,\Gamma}{r} \arro{\delta} \role{Q',\Gamma'}{r}}
\\
\mathrule{Parallel}{\role{P,\Gamma}{r} \arro{\delta} \role{P',\Gamma'}{r} \quad \delta \neq 
\tick}{\role{P \mid Q,\Gamma}{r} \arro{\delta} \role{P' \mid Q,\Gamma'}{r}}
\hfill
\mathrule{Par-end}{\role{P,\Gamma}{r} \arro{\tick} \role{P',\Gamma}{r} \quad \role{Q,\Gamma}{r} 
\arro{\tick} \role{Q',\Gamma}{r}}{\role{P \mid Q,\Gamma}{r} \arro{\tick} \role{P' \mid 
Q',\Gamma}{r}} 
\\
\mathrule{If-then}{\grass{b}_\Gamma = \tt{true}}{\role{\ifthen{b}{P}{P'},\Gamma}{r} 
\arro{\tau} \role{P,\Gamma}{r}}
\hfill
\mathrule{If-else}{\grass{b}_\Gamma = \tt{false}}{\role{\ifthen{b}{P}{P'},\Gamma}{r}
\arro{\tau} \role{P',\Gamma}{r}}
\\
\mathrule{While-unfold}{\grass{b}_\Gamma = \tt{true}}{\role{\while{b}{P},\Gamma}{r} \arro{\tau} 
\role{P \seqOp \while{e}{P},\Gamma}{r}}
\hfill
\mathrule{While-exit}{\grass{b}_\Gamma = \tt{false}}{\role{\while{b}{P},\Gamma}{r} \arro{\tau} 
\role{\one,\Gamma}{r}}
\\
\mathrule{Lead-\AdaptRule}{
\mathcal{I}' = 
\freshKey({\mathcal I}, n)
\qquad
\roles(\mathcal{I}') \subseteq S
}
{
\begin{array}{l}
\role{\pscope{n}{l}{r}{P}{\Delta}{S},\Gamma}{r} \arro{{\mathcal I}} 
\\ \quad \quad
\role{\prod_{r_i \in S \setminus \{r\}}
\cout{o^*_{n}}{\pi(\mathcal{I}',r_i)}{r_i}\seqOp
\pi(\mathcal{I}',r)\seqOp
\prod_{r_i \in S \setminus \{r\}} \cinp{o^*_{n}}{\_}{r_i},\Gamma}{r}\\
\end{array}
}
\\[1mm]
\mathax{Lead-\NoAdaptRule}{
\begin{array}{l}
\role{\pscope{n}{l}{r}{P}{\Delta}{S},\Gamma}{r}
\arro{\texttt{\NoAdaptLabel}} 
\\ \quad
\role{
\prod_{r_i \in S \setminus \{r\}} 
\cout{o^*_{n}}{\mbox{no}}{r_i}\seqOp P \seqOp \prod_{r_i \in S \setminus \{r\}} 
\cinp{o^*_{n}}{\_}{r_i},\Gamma}{r}\\
\end{array}
} 
\\[1mm]
\mathax{\AdaptRule}{\role{\psscope{n}{l}{r'}{P},\Gamma}{r} 
\arro{\cinpLabel{o^*_{n}}{\_}{P'}{r'}}
\role{P' \seqOp
\cout{o^*_{n}}{\texttt{ok}}{r'},\Gamma}{r}}
\\[1mm]
\mathax{\NoAdaptRule}{\role{\psscope{n}{l}{r'}{P},\Gamma}{r} 
\arro{\cinpLabel{o^*_{n}}{\_}{\mbox{no}}{r'}}
\role{P \seqOp \cout{o^*_{n}}{\texttt{ok}}{r'},\Gamma}{r} }
\end{array}
\]
}
\caption{\APOC{} role semantics.}\label{table:apoc-proc-complete}
\end{table*}
We describe here the rules whose descriptions were omitted in Table
\ref{table:apoc-proc} for space reasons.

Rules \ruleName{Sequence} executes a step in the first process of a sequential composition, while rule \ruleName{Seq-end} acknowledges the termination of the first process, starting the second one.
Rules \ruleName{Parallel} allows a process in a parallel composition
to execute, while rule \ruleName{Par-end} synchronises the termination
of two parallel processes.
Rules \ruleName{If-then} and \ruleName{If-else} execute the then or the else
branch in a conditional, respectively.
Rules \ruleName{While-unfold} and \ruleName{While-exit} model the unfolding or
the termination of a loop.

}

\end{document}